\newcommand{\marten}[1]{}
\newcommand{\niels}[1]{}
\newcommand{\bruno}[1]{}
\newcommand{\harry}[1]{}
\newtheorem{theorem}{Theorem}[section]
\newtheorem{lemma}[theorem]{Lemma}
\newtheorem{corollary}[theorem]{Corollary}
\newcounter{mycount}
\theoremstyle{definition}
\newtheorem{definition}[theorem]{Definition}
\newtheorem{notation}[theorem]{Notation}
\newtheorem{example}[theorem]{Example}
\theoremstyle{plain}
\newtheorem{remark}[theorem]{Remark}
\DeclareMathOperator\BQP{\mathsf{BQP}}
\DeclareMathOperator\NC{\mathsf{NC}}
\DeclareMathOperator\AC{\mathsf{AC}}
\DeclareMathOperator\TC{\mathsf{TC}}
\DeclareMathOperator\QNC{\mathsf{QNC}}
\DeclareMathOperator\LAQCC{\mathsf{LAQCC}}
\newcommand{\ceil}[1]{\left\lceil #1 \right\rceil}
\newcommand{\kb}[1]{\ket{#1}\bra{#1}}
\newcommand{\mathO}{\mathcal{O}}
\title{State preparation by shallow circuits using feed forward}
\author[1]{Harry Buhrman}
\author[1]{Marten Folkertsma}
\author[2]{Bruno Loff}
\author[1,3]{Niels M. P. Neumann}
\date{}%September 23, 2022}
\affil[1]{QuSoft, CWI \& University of Amsterdam, Amsterdam, the Netherlands}
\affil[2]{LASIGE \& Department of Mathematics, University of Lisbon}
\affil[3]{The Netherlands Organisation for Applied Scientific Research (TNO), Delft, the Netherlands}
\begin{document}
\maketitle
\begin{abstract}
Fault tolerant quantum computers repetitively apply a four-step procedure: 
First, perform a few one and two-qubit quantum gates. 
Second, perform a syndrome measurement on a subset of the qubits. 
Third, perform fast classical computations to establish if and where errors occurred. 
And, fourth, correct the errors with a correction step. 
The next iteration applies the same procedure with new one and two-qubit gates. 
Even though current error-rates prohibit this procedure to work and fault tolerant quantum computing remains a distant goal, the same procedure can already prove useful today. 
In this work we make use of this four-step scheme not to carry out fault-tolerant computations, but to enhance short, {\em constant}-depth, quantum circuits that perform 1 qubit gates and {\em nearest-neighbor} 2 qubit gates.

We introduce a new computational model called \emph{Local Alternating Quantum Classical Computations} ($\LAQCC$). 
In this model, qubits are placed in a grid and they can only interact with their direct neighbors; the quantum circuits are of constant depth with intermediate measurements; a classical controller can perform log-depth computations on these intermediate measurement outcomes and control future quantum operations based on the outcome.
This model fits naturally between quantum algorithms in the NISQ era and full-fledged fault-tolerant quantum computation. 
We first prove that any Clifford circuit has an equivalent $\LAQCC$ circuit, and that any $\LAQCC$ circuit can be simulated by a $\QNC^1$ circuit. 
Next, we conjecture the non-simulatability of $\LAQCC$ by showing that $\LAQCC$ contains the class of Instantaneous Quantum Polynomial-time circuits.
We also show that any $\LAQCC$ circuit with polynomial-sized quantum circuits and unbounded classical computations is contained in the class of quantum circuits equipped with post-selection gates with respect to the task of state preparation.
We continue by presenting $\LAQCC$ implementations of different subroutines, including OR-gates, quantum Fourier transforms and Threshold gates. 
These subroutines prove vital in constructing three state preparation routines in the main part of this work. 
Preparing a uniform superposition uses constant-depth arithmetic gates, combined with an exact Grover implementation by Long. 
For the $W$-state, we employ a compress-uncompress method to switch between a binary and one-hot encoding. 
This method extends to the more generalized Dicke-states, the superposition of $n$-bit strings of Hamming weight $k$, for $k=\mathO(\sqrt{n})$, but fails for higher $k$ due to the birthday paradox. 
We extend this protocol to a protocol that prepares many-body scar states, highly excited states with low entanglement and longer coherence times than states with the same energy density. 
We present a circuit for preparing Dicke-states for larger $k$ requiring log-depth circuits that maps between the factoradic number system and the combinatorial number system. 
\end{abstract}

\section{Introduction}
Current quantum hardware is unable to carry out universal quantum computations due to the buildup of errors that occur during the computation. 
The magnitude of the individual error is currently above the value that the Threshold Theorem requires in order to kick-start quantum error correction and fault-tolerant quantum computation~\cite[Section 10.6]{nielsen_chuang_2010}. 
Although the experimentally achieved fidelity rates are promising and the error bounds are inching closer to the required threshold, we will have to work for the foreseeable future with quantum hardware with errors that build-up during the computation.  This implies that we can only do a limited number of steps before the output of the computation has become completely uncorrelated with the intended one.

For fault-tolerant quantum computing, we repeat four steps: 
1) We apply a number of single and two-qubit quantum gates, in parallel whenever possible; 
2) We perform a syndrome measurement on a subset of the qubits; 
3) We perform fast classical computations to determine which errors have occurred and how to correct them; 
and, 4) We apply correction terms based on the classical computations.
We then repeat these four steps with a next sequence of gates. 
These four steps are essential to fault-tolerant quantum computing.

The starting point of this work is to use the four steps outlined above, not to carry out error correction and fault-tolerant computation, but to enhance short, constant-depth, {\em uncorrected} quantum circuits that perform single qubit gates and {\em nearest-neighbor} two qubit gates. 
Since in the long run we will have to implement error-correction and fault-tolerant computation anyhow, and this is done by such a four-step process, why not make other use of this architecture? Moreover, on some of the quantum hardware platforms, these operations are already in place.
Embracing this idea we naturally arrive at the question: what is the computational power of \textit{low-depth} quantum-classical circuits organized as in the four steps outlined above? 
We thus investigate circuits that execute a small, ideally constant, number of stages, where at each stage we may apply, in parallel, single qubit gates and {\em nearest-neighbor} two qubit gates, followed by measurements, followed by low-depth classical computations of which the outcome can control quantum gates in later stages. 
It is not clear, at first, whether such circuits, especially with constant depth, can do anything remotely useful. 
But we will see that this is indeed the case: many quantum computations can be done by such circuits in constant depth. 
By parallelizing quantum computations in this way, we improve the overall computational capabilities of these circuits, as we do not incur errors on qubits that are idle, simply because qubits are not idle for a very long time. 
Furthermore, reducing the depth of quantum circuits, at the cost of increasing width, allows the circuit to be run faster even if errors occur.

The first usage of such a four-step layout, not to do error correction, but to perform computations, can be found in the paradigm of measurement-based quantum computing~\cite{gottesman1999demonstrating,raussendorf2001one,jozsa2006introduction,clark2007generalised}: 
A universal form of quantum computing where a quantum state is prepared and operations are performed by measuring qubits in different bases, depending on previous measurements and intermediate measurements.

\citeauthor{PhamSvore2013} were the first to formalize the four-step protocol for performing computations~\cite{PhamSvore2013}. They included specific hardware topologies by considering two-dimensional graphs for imposing constraints on qubit interactions. In their model, they develop circuits for particularly useful multi-qubit gates, including specifying costs in the width, number of qubits, depth, number of concurrent time steps, size, and total number of non-Identity operations.
As a result, they find an algorithm that factors integers in polylogarithmic depth.
\citeauthor{Browne:2011} showed that the main tool in the work by \citeauthor{PhamSvore2013}, the fan-out gate, can also be replaced by additional log-depth classical computations in the measurement-based quantum computing setting~\cite{Browne:2011}.

More recently, \citeauthor{Cirac:2021} introduced a scheme to implement unitary operations involving quantum circuits combined with Local Operations and Classical Communication ($\mathsf{LOCC}$) channels: $\mathsf{LOCC}$-assisted quantum circuits~\cite{Cirac:2021}. Similarly to the four-step scheme we just described, they allow for a short depth circuit to be run on the qubits, followed by one round of $\mathsf{LOCC}$, in which ancilla qubits are measured and local unitaries are applied based on the measurement outcomes. They show that in this model any 1D transitionally invariant matrix-product state (MPS) with fixed bond dimension is in the same phase of matter as the trivial state. Similar ideas can be found in~\cite{TVV_NonAbelianTopologicalOrder_2022, tantivasadakarn2021long}.

In this work, we introduce a new model, called \textit{Local Alternating Quantum-Classical Computations} ($\LAQCC$). In this model we alternate between running quantum circuits (constrained by locality), ending in the measurement of a subset of qubits, and fast classical computations based on the measurement results. The outcome of the classical computations are then used to control future quantum circuits. We allow for flexibility in this model, by giving different constraints to the power of both the quantum circuits and the classical circuits as well as the number of alternations between them. 
Most attention will be given to $\LAQCC$ containing quantum circuits of constant depth, classical circuits of logarithmic depth and at most a constant number of alternations between them. 
Any circuit constructed in this model is considered to be of constant depth. 
We restrict ourselves to logarithmic depth classical computations, as this is the first natural and non-trivial extension beyond constant-depth classical computations. 
Constant-depth classical computations do however also have an equivalent constant-depth quantum implementation.

The definition of $\LAQCC$ sharpens the original definition of \citeauthor{PhamSvore2013} by adding constraints to the intermediate classical computations. This allows us to bound the power of $\LAQCC$ from above. 

The main result of \citeauthor{Cirac:2021}, that 1D translational invariant MPS with fixed bond dimension can be prepared by $\mathsf{LOCC}$-assisted circuits, relies on local symmetries of the MPS. These symmetries allow them to prepare local states (on a constant number of qubits) and glue them together by doing one round of the appropriate entangling measurement and corrections, after which they run a round of local unitaries to get the desired result. This general scheme for preparing states that exhibit an MPS description with the appropriate local symmetries requires only geometrically local unitaries and one round of measurement and corrections an therefore is accessible in $\LAQCC$. Studying different local symmetries, known as Symmetry Protected Topological (SPT) phases of matter, to find measurement-based constant depth circuits for states is a broad ongoing field of research~\cite{TVV_NonAbelianTopologicalOrder_2022, tantivasadakarn2021long, smith2023deterministic}. 
All these schemes have a $\LAQCC$ implementation.

%$\LAQCC$-circuits also exist for general schemes of preparing local states, based on the local tensors, and gluing them together using one round of entangled measurement and corrections, based on the local symmetry. 
%The main result of \citeauthor{Cirac:2021}, that 1D translational invariant MPS with fixed bond dimension can be prepared by $\mathsf{LOCC}$-assisted circuits, relies heavily on local symmetries of the MPS and as a result also has an equivalent $\LAQCC$ implementation. 
%The corrections applied after the measurement round are local unitaries depending on the local symmetries of the MPS. 

%This general scheme of preparing local states, based on the local tensors, and gluing it together by doing one round of entangled measurement and corrections, based on the local symmetry, is accessible in $\LAQCC$.
Note however that \citeauthor{Cirac:2021} also suggest a circuit for the $W$-state.
This circuit uses sequentially and dependent measurement-based corrections of the ancilla qubits. 
These dependent measurements translate to sequential alternations between the quantum and classical circuits and therefore increase the total depth to linear depth, exceeding the constant-depth constraints imposed by $\LAQCC$-circuits. 

We study the power of the $\LAQCC$ model with respect to state preparation, showing that even with only constant quantum-depth and logarithmic classical depth it remains possible to prepare states with long-range entanglement.
Another surprising result is that it is unlikely that $\LAQCC$ circuits are classically simulatable. We show that any instantaneous quantum polynomial-time (IQP) circuit~\cite{Bremner2010,Shepherd2009} has an $\LAQCC$ implementation.
Classical simulation of IQP circuits implies the collapse of the polynomial hierarchy to the third level, which is not believed to be true~\cite{Bremner2017}. Therefore, we expect that $\LAQCC$ circuits are unlikely to be classically simulatable. We bound the power of $\LAQCC$ by showing that it is contained in $\QNC^1$, the class of polynomial-size, log-depth circuits.

Next, we also study the power that intermediate classical calculations can add to quantum computations, by considering a new model that alternates between polynomially many polynomial-depth quantum circuits and unbounded classical computations
We study this model by doing a complexity theoretical analysis, where we draw inspiration from the notions of complexity given by \citeauthor{RosenthalYuen:2022}, \citeauthor{MetgerYuen:2023}, and \citeauthor{Aaronson:2004}.
All three complexity notions are based on the notion of state preparation, instead of more traditional definition of complexity such as the decidability of a computational problem. 
The first two consider classes based on sequences of quantum states preparable by a polynomial-sized quantum circuit, where the circuits are uniformly generated by a computational class, for instance, the class $\mathsf{PSPACE}$, which results in the complexity class $\mathsf{StatePSPACE}$~\cite{RosenthalYuen:2022,MetgerYuen:2023}.
The third notion considers a relative complexity, where the complexity is measured between two given states, and is measured by the number of gates, from a given gate-set, required to transform one state in another state~\cite{Aaronson:2004}. 
For our definition of state preparation complexity, we drop the uniformity constraint from~\cite{RosenthalYuen:2022,MetgerYuen:2023} and define a class as $\mathsf{StateX}$, which refers to states preparable by circuits of type $\mathsf{X}$. 
As an example, if $\mathsf{X} = \QNC^0$, this results in the class $\mathsf{StateQNC^0}$, which is the set of states preparable from the $\ket{0}^n$ state by poly-size constant-depth circuits. 
This notion is similar to the relative complexity from~\cite{Aaronson:2004}, where one state is the  $\ket{0}^n$ state and instead of counting the number of gates we consider the set of states preparable by a fixed number of gates. Using this notion of complexity we show that any state preparable by an $\LAQCC^*$ circuit is also preparable by a $\mathsf{PostQPoly}$ circuit, the class of circuits of polynomial depth with an additional post-selection gate. 

All Clifford circuits have a constant-depth $\LAQCC$ implementation, implying that any stabilizer state can be implemented by a constant-depth $\LAQCC$ circuit, see Section~\ref{sec:clifford_circuits} for a proof of this statement. 
Efficient circuits for stabilizer states have been known already through measurement-based quantum computing. Therefore this paper focuses on the preparation of non-stabilizer states, and as a surprising result we find novel constant-depth protocols for four very natural classes of non-stabilizer states.
Despite the extensive research into these four classes of non-stabilizer states and the many applications of them, no efficient constant- or low-depth state preparation protocols are known yet. We specifically consider these four classes as they are all often used as initial states in other algorithms.

The first state is a uniform superposition over an arbitrary number of states. 
This state finds applications in many quantum algorithms, as they often start with a uniform superposition over multiple states. 
This superposition is often achieved by applying Hadamard gates to every qubit due to its simplicity to prepare. 
Yet, the analysis of many algorithms, such as Shor's algorithm~\cite{Shor:1997}, would benefit from a different initial superposition. 
The circuit to prepare the uniform superposition over an arbitrary number of states uses an exact version of Grover search as a subroutine, that turns a probabilistic circuit, with a known constant probability of success, into a deterministic circuit. 
We use the circuit for preparing a uniform superposition over an arbitrary number of states as a subroutine in the next two quantum state preparation protocols. 

The second state is the $W$-state, the uniform superposition over all computational basis states of Hamming-weight~$1$, a natural long-ranged entangled state that displays a fundamentally nonequivalent type of entanglement from the Greenberger–Horne–Zeilinger state~\cite{WState:2000}, for which $\LAQCC$-type constant-depth circuits were previously known~\cite{PhamSvore2013, Cirac:2021}. 
The $W$-state is often used as benchmark for new quantum hardware~\cite{Haffner2005,Neeley2010,GarciaPerez:2021}. 
A novel way to prepare the $W$-state therefore gives a new way to benchmark different quantum devices with each other. 
A circuit for preparing the $W$-state was given in~\cite{Cirac:2021}, but this implementation requires sequentially alternating measurements followed by local unitaries, which in the $\LAQCC$ model is not considered to be of constant depth. 
We improve this protocol by giving an $\LAQCC$ implementation of the $W$-state, based on a compress-uncompress method that links the one-hot and binary encoding of integers.

The third state considered is the Dicke state, a generalization of the $W$-state, a superposition over all computational basis states with Hamming-weight $k$~\cite{Dicke:1954}. 
Dicke states have relevance in various practical settings.
For instance, for quantum game theory~\cite{zdemir2007}, quantum storage~\cite{Bacon_Compress:2006,Plesch:2010}, quantum error correction~\cite{ouyang2014permutation}, quantum metrology~\cite{toth2012multipartite}, and quantum networking~\cite{prevedel2009experimental}. 
Dicke states have been used as a starting state for variational optimization algorithms, most notably Quantum Alternating Operator Ansatz (QAOA)~\cite{Hadfield2019}, to find solutions to problems such as Maximum k-vertex Cover~\cite{Brandhofer2022,cook2020quantum}.
The ground states of physical Hamiltonians describing one-dimensional chains tend to show a resemblance to Dicke states such as states resulting from the Bethe ansatz, making them an ideal starting state when investigating the ground state behavior of these Hamiltonians~\cite{TDL_BetheAnsatzDerivation:2010,B_ExcitedStateQuantumPhaseTransitions:2013,DickeTransitions:2021}. 
For instance, the algorithm by \citeauthor{van2021preparing}, who give an algorithm to prepare the Bethe ansatz eigenstates of the spin-1/2 XXZ spin chain, starts by first preparing a Dicke state~\cite{van2021preparing}. 
A Dicke-state preparation protocol based on the compress-uncompress methodology used in the $W$-state furthermore finds applications in entanglement distillation, where the entanglement of a large state is concentrated on only a few qubits. 
Efficient deterministic circuits for preparing Dicke states have been proposed by \citeauthor{bartschi2019deterministic}~\cite{bartschi2019deterministic, bartschi2022deterministic_short_depth}. 
They provide a quantum circuit of depth $\mathO(k \log(\frac{n}{k}))$, allowing arbitrary connectivity, to prepare a Dicke state, which they conjecture to be optimal when $k$ is constant. 
In this work, we provide a constant-depth $\LAQCC$ circuit below their conjectured bound already for constant $k$. 
However, this does not directly disprove their conjecture, as we allow for intermediate measurements and classical computations. 
More significantly, we even construct constant-depth $\LAQCC$ circuits for $k = \mathO(\sqrt{n})$ greatly improving their bound.
This construction extends the compress-uncompress method for the $W$-state combined with additional subroutines. 

We continue with a log-depth state preparation protocol for the Dicke-state for arbitrary $k$. 
This protocol implements an efficient transformation between the factoradic number representation and the combinatorial number representation of a positive integer. 
The combinatorial number representation relates directly to the Dicke state. 
The provided efficient transformation between number representation systems might be of independent interest. 

We conclude by modifying our protocol for preparing a Dicke-state to a protocol that prepares quantum many-body scar states in constant-depth. 
These states have low entanglement and longer coherence times than states with similar energy density.
These characteristics make many-body scar states interesting to analyze and relevant within physics.
Many-body scar states appear for instance in the AKLT model~\cite{AKLT:1987,MRBAR:2018,MRB:2018} and different spin models~\cite{SI:2019,MOBFR:2020}.
Known methods for preparing these states have polynomial-depth~\cite{Gustafson:2023}, whereas our circuit has constant depth.

\paragraph{Summary of results}
\begin{itemize}
    \item We give a new definition of a computational model that captures the power of the four step process: applying a constant number of layers of one- and two-qubit gates; performing a syndrome measurement; perform a fast classical computation determining corrections; apply corrections. We call this model \emph{Local Alternating Quantum Classical Computations}, or $\LAQCC$ for short. In this model we bound the allowed quantum operations, intermediate classical calculations, and number of rounds separately. In Section~\ref{sec:LAQCC_model} we define this model and give a list of operations based on results from literature contained in this computational model. In some of these operations we explicitly use that we allow for multiple, but at most constant, rounds  of corrections.
    \item  We show show that there exist $\LAQCC$ circuits that can not be weakly simulated in Section~\ref{sec:IQP_in_LAQCC}. We further show that for every $\LAQCC$ circuit there exists a $\QNC^1$ circuit simulating it perfectly, in Section~\ref{sec:LAQCC_in_QNC1}.
    \item We introduce a new type computational complexity for preparing states and show that the extension of $\LAQCC$ where we allow a polynomial number of rounds and unbounded classical computation, is contained in $\mathsf{PostQPoly}$, the class of polynomial circuits with post-selection, in Section~\ref{sec:Complexity results}.
    \item We show a protocol to prepare the uniform superposition state of size $q$ in $\LAQCC$ using $\mathO(\ceil{\log_2(q)}^2)$ qubits in Section~\ref{sec:superposition_modulo_q}. 
    \item We show a protocol to prepare the $W_n$ state in $\LAQCC$ using $\mathO(n\log(n))$ qubits in Section~\ref{sec:W_state_in_LAQCC}.
    \item We show two ways of preparing the Dicke-$(n,k)$ state. The first method is in $\LAQCC$, works up to $k = \mathO(\sqrt{n})$, uses $\mathO(n^2\log(n))$ qubits, and is found in Section~\ref{sec:dicke:small_k}. The second method is in $\LAQCC\text{-}\mathsf{LOG}$ (an extension of $\LAQCC$ allowing for logarithmic number of alterations instead of constant), works for any $k$, uses $\mathO(\text{poly}(n))$ qubits, and is found in Section~\ref{sec:Dicke_in_LAQCC_LOG}. 
    \item We extend on our $\LAQCC$ method of generating Dicke-$(n,k)$ states for $k = \mathO(\sqrt{n})$ and show a protocol to generate many-body scar states for a particular Hamiltonian in $\LAQCC$ (Section~\ref{sec:many_body_scar}). 
\end{itemize}
Summarized in a table, we provide the following state generation protocols:
\begin{table}[htb]
\centering
\begin{tabular}{l|l|l|l}
\textbf{State description} & \textbf{Width} & \textbf{Depth} & \textbf{Implementation}\\
\hline 
Uniform superposition mod $q$: $\frac{1}{\sqrt{q}} \sum_{i = 0}^{q-1}\ket{i}$ & $\mathO(\ceil{\log^2 q})$ & $\mathO(1)$ & Section~\ref{sec:superposition_modulo_q}\\

$W$-state: $\frac{1}{\sqrt{n}}\sum_{i = 0}^{n-1}\ket{e_i}$ & $\mathO(n \log n)$ & $\mathO(1)$ & Section~\ref{sec:W_state_in_LAQCC}\\

Dicke-$(n,k)$, $k = \mathO(\sqrt{n})$: $\binom{n}{k}^{-1/2}\sum_{x \in \{0,1\}^n: |x| = k} \ket{x}$ &  $\mathO(n^2\log n)$ & $\mathO(1)$ 
&Section~\ref{sec:dicke:small_k}\\

Dicke-$(n,k)$: $\binom{n}{k}^{-1/2}\sum_{x \in \{0,1\}^n: |x| = k} \ket{x}$ & $\mathO(\text{poly}(n))$ & $\mathO(\log n)$ &Section~\ref{sec:Dicke_in_LAQCC_LOG}\\

QMBS: $\ket{S_k} = \frac{1}{k! \sqrt{\mathcal N(n,k)}}(Q^\dagger)^k \ket{\Omega}$ &  $\mathO(n^2\log n)$ & $\mathO(1)$  &  Section~\ref{sec:many_body_scar}
\end{tabular}
\caption{Summary of state preparation protocols given in this paper.}
\label{tab:sate_prep}
\end{table}
In the entry for the quantum many-body scar state $Q$ denotes the raising operator and $\mathcal N(n,k)=\binom{n-k-1}{k}$. 
Section~\ref{sec:many_body_scar} will provide more details on the variables and the implementation. 

\paragraph{Organization of the paper}
\noindent We first introduce relevant preliminaries in Section~\ref{sec:preliminaries}. 
In Section~\ref{sec:LAQCC_model} we formally define the class of Local Alternating Quantum-Classical Computations ($\LAQCC$). We also show that any Clifford circuit can be implemented in constant depth $\LAQCC$ (a result based on a result from measurement-based quantum computing~\cite{jozsa2006introduction}). 
This result allows us to give many useful multi-qubit gates and routines in Section~\ref{sec:gates_created_in_LAQCC}. 
Beyond that we show that constant depth $\LAQCC$ circuits are contained in $\QNC^1$ and that any $\mathsf{IQP}$ circuit has an $\LAQCC$ implementation.
We conclude this section with an analysis of a more powerful instantiation of $\LAQCC$ and show an inclusion with respect to the class $\mathsf{PostQPoly}$, which is the class of circuits of polynomial depth with one additional post-selection gate. 
In Section~\ref{sec:state_prep_in_LAQCC} we give $\LAQCC$ circuit implementations for preparing the uniform superposition over an arbitrary number of states, the $W$-state and the Dicke state up to $k = \mathO(\sqrt{n})$. We furthermore give a log-depth circuit implementation for preparing the Dicke state for any $k$. We conclude by showing a $\LAQCC$ circuit for generating many body scar states of a particular type of Hamiltonian.

\section{Preliminaries}
\label{sec:preliminaries}
In this section we recap definitions used throughout the rest of the paper.

\subsection{Complexity classes}
The computational model introduced in this work uses complexity classes. 
Typically these classes are defined as classes of decision problems solvable by some type of circuits. 
Below we give definitions of some of these complexity classes in terms of the circuits contained in that class. 

\begin{definition}
The class $\NC^k$ consists of all decision problems solvable by circuits of $\mathO((\log n)^k)$ depth and polynomial size, and consisting of bounded-fan-in AND- and OR-gates.

The class $\AC^k$ consists of all decision problems solvable by circuits of $\mathO((\log n)^k)$ depth and polynomial size, and consisting of unbounded-fan-in AND- and OR-gates.

The class $\TC^k$ consists of all decision problems solvable by circuits of $\mathO((\log n)^k)$ depth and polynomial size, and consisting of unbounded-fan-in AND-, OR- and Threshold$_t$-gates, where a Threshold$_t$-gate evaluates to one if and only if the sum of the inputs is at least $t$. 
\end{definition}
These classes also have a quantum equivalent class. 
\begin{definition}
The class $\QNC^k$ consists of all decision problems solvable by quantum circuits of $\mathO((\log n)^k)$ depth and polynomial size, and consisting of single- and two-qubit quantum gates.
\end{definition}
Definitions for the quantum versions of $\AC^k$ and $\TC^k$ also exist.
However, when equipping the class $\QNC^k$ with unbounded-fan-in parity gates, all three classes intersect~\cite{GreenHomerMoorePollett:2002,Moore:1999,TakahashiTani_CollapseOfHierarchyConstantDepthExactQuantumCircuits_2013}.

Two other often used classes are $\mathsf{P}$ and $\mathsf{L}$.
\begin{definition}
The class $\mathsf{P}$ consists of all decision problems solvable in polynomial time by a Turing machine. 

The class $\mathsf{L}$ consists of all decision problems solvable using only a logarithmic amount of memory. 
\end{definition}
The first class poses a limit on the depth of the operations performed by the Turing machine. 
The second class instead limits the available memory. 
Note however that with a logarithmic amount of memory, only a polynomial number of states is available, and hence the computation time is polynomial as well. 

Relations between different complexity classes exists: for instance, for all $k$, $\NC^k\subseteq\AC^k\subseteq\TC^k$. 
By \citeauthor{Johnson:1990}, we also have the inclusion $\mathsf{L} \subseteq \AC^1 \subseteq \TC^1$~\cite{Johnson:1990}.

In the remainder of this work we abuse notation and refer to $\mathsf{X}$-circuits as circuits that correspond to a decision problem in the class $\mathsf{X}$.
For example, an $\NC^k$ circuit is a circuit of $\mathO((\log n)^k)$ depth and polynomial size that corresponds to a decision problem in $\NC^k$.

Finally, we define the class of polynomial sized quantum circuits:~\footnote{Due to the bounded-error-aspect associated to decision problems in $\BQP$, we follow this definition instead of talking about $\BQP$-circuits.}
\begin{definition}
The class $\mathsf{QPoly}(n)$ consists of all polynomial-sized quantum circuits (in $n$) that use single and two-qubit quantum gates.
\end{definition}

\subsection{Quantum gate sets}
First, recall the definition of the Pauli group and the Clifford group. 
\begin{definition}
The one qubit Pauli-group $P_1$ consists of four matrices, the identity matrix $I$ and the three Pauli matrices: 
$$X = \begin{pmatrix}0&1\\1&0\end{pmatrix}, \qquad Y = \begin{pmatrix}0&-i\\i&0\end{pmatrix}, \qquad Z = \begin{pmatrix}1&0\\0&-1\end{pmatrix},$$
together with a global phase of $\pm 1$ or $\pm i$. 

The $n$-qubit Pauli-group $P_n$ is the set of all $4^{n+1}$ possible tensors of length $n$ of matrices from $P_1$, together with a global phase of $\pm 1$ or $\pm i$. 
\end{definition}
The Clifford group forms the other well-known group of quantum circuits, as it stabilizes the Pauli group. 
\begin{definition}
The Clifford group $C_n$ consists of all $n$-qubit unitaries that leave the Pauli group 
$P_n$ invariant under conjugation. 
That is, let $c\in C_n$ be any Clifford circuit, then for any $P\in P_n$, there exists a $P'\in P_n$, such that $cP = P'c$.

The Clifford group is generated by the $CNOT$-gate, the Hadamard gate $H$ and the phase gate $S$, that act on computational basis states $\ket{x}$ and $\ket{y}$ via
$$CNOT: \ket{x}\ket{y}\mapsto \ket{x}\ket{y\oplus x}, \qquad H: \ket{x} \mapsto \ket{0}+(-1)^x \ket{1}, \qquad S: \ket{x} \mapsto i^x \ket{x}.$$
Any circuit constructed using only these three gates is called a Clifford circuit. 
\end{definition}

Unsurprisingly, Clifford circuits only cover a small part of the possible quantum circuits. 
Moreover, on a linear nearest-neighbor architecture, $\mathO(n)$ deep Clifford circuits suffice to simulate any Clifford unitary of size $2^n\times 2^n$~\cite{maslov2018shorter}.
We can furthermore simulate Clifford circuits efficiently~\cite{GottesmanKnill:1998}.
Universal quantum computations require additional gates, though almost any quantum gate suffices. 
For example, adding the single qubit $T$-gate, $T:\ket{x}\mapsto e^{i\pi x/4}\ket{x}$, to the Clifford group gives a universal gate set. 

\subsection{Two quantum subroutines}
This section discusses two quantum subroutines used in later sections. 
The first concerns Grover's algorithm with zero failure probability~\cite{Long_GroverZeroFailureRate_2001}. 
The second concerns parallelization of commuting gates using quantum fan-out gates~\cite{HoyerSpalek:2005}.

Grover's search algorithm gives a quadratic speed-up for unstructured search~\cite{Grover:1996}. 
After sufficient iterations, a measurement returns a target state with high probability. 
Surprisingly, if the exact number of target states is known, a slight modification of the Grover iterates allows for returning a target state with certainty, assuming noiseless computations. 
Lemma~\ref{lem:grover_constant_fraction} uses the next lemma to prepare quantum states instead of to find a target state.
\begin{lemma}[\cite{Long_GroverZeroFailureRate_2001}]\label{lem:exact_grover}
Let $L$ be a set of items and $T\subseteq L$ a set of targets, with $N=|L|$ and $m=|T|$ both known. 
Let $g : L \rightarrow \{0,1\}$ label the items in $L$ and define the oracle $O_g: \ket{x}\ket{b} \mapsto \ket{x}\ket{b\oplus g(x)}$.

Then, there exists a quantum amplitude amplification algorithm that makes $\mathO(\sqrt{N/m})$ queries to $O_g$ and prepares the quantum state $\frac{1}{\sqrt{|T|}}\sum_{x\in T}\ket{x}$.
\end{lemma}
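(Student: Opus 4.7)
The plan is to take standard Grover search and eliminate its residual failure probability by introducing tunable phases in the two reflection operators, following Long's construction. Because both $N$ and $m$ are known in advance, the number of iterations and the phase can be precomputed classically, so the whole algorithm becomes deterministic while using only $\mathO(\sqrt{N/m})$ oracle queries.

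First I would set up the usual two-dimensional picture. Let $|\psi\rangle = \tfrac{1}{\sqrt N}\sum_x |x\rangle$ and decompose it as $\sin\theta\,|T\rangle + \cos\theta\,|\bar T\rangle$, where $\sin\theta = \sqrt{m/N}$, $|T\rangle = \tfrac{1}{\sqrt m}\sum_{x\in T}|x\rangle$ and $|\bar T\rangle = \tfrac{1}{\sqrt{N-m}}\sum_{x\notin T}|x\rangle$. Using the standard $|-\rangle$-ancilla trick, $O_g$ implements the reflection $R_T:|x\rangle\mapsto(-1)^{g(x)}|x\rangle$; combined with the diffusion operator $R_\psi = 2|\psi\rangle\langle\psi| - I$, the Grover iterate $G = R_\psi R_T$ acts inside $\mathrm{span}(|T\rangle,|\bar T\rangle)$ as a rotation by $2\theta$. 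Running $k \approx \pi/(4\theta) = \mathO(\sqrt{N/m})$ iterations deposits almost all amplitude on $|T\rangle$, but since $\theta$ is generically irrational relative to $\pi$ no integer $k$ satisfies $(2k+1)\theta = \pi/2$, and naive Grover misses $|T\rangle$ by a small angle.

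To repair this, I would replace the $\pi$-phase reflections by generalized phase shifts $R_T(\phi) = I - (1-e^{i\phi})\Pi_T$ and $R_\psi(\phi) = I - (1-e^{i\phi})|\psi\rangle\langle\psi|$, controlled by a single free parameter $\phi \in [0,\pi]$. Both operators still preserve the two-dimensional invariant subspace, and the product $G(\phi) = -R_\psi(\phi) R_T(\phi)$ acts there as a unitary whose entries depend continuously on $\phi$, degenerating to the standard $G$ when $\phi = \pi$. Fix $k$ to be the smallest integer with $(2k+1)\theta \geq \pi/2$; then solve for $\phi$ so that the accumulated rotation $G(\phi)^k|\psi\rangle$ lands exactly on $|T\rangle$.

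The main obstacle is the $2\times 2$ trigonometric calculation that confirms such a $\phi$ actually exists. One expands $G(\phi)$ in the basis $\{|T\rangle,|\bar T\rangle\}$, extracts the effective per-iteration rotation angle $\theta'(\phi)$, and verifies that as $\phi$ sweeps $[0,\pi]$ the total rotation $k\cdot 2\theta'(\phi)$ continuously passes through the required value $\pi/2 - \theta$, so the intermediate value theorem yields a valid $\phi$. This is exactly the calculation carried out by Long, which even gives a closed-form expression for $\phi$. Each iteration makes one query to $O_g$, giving total query complexity $\mathO(\sqrt{N/m})$; the output is the normalized state $|T\rangle = \tfrac{1}{\sqrt m}\sum_{x\in T}|x\rangle$, proportional to the state $\sum_{x\in T}|x\rangle$ claimed in the lemma.
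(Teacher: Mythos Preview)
The paper does not supply its own proof of this lemma; it is stated with the citation to Long's zero-failure Grover paper and used as a black box. Your proposal correctly reconstructs Long's argument---the two-dimensional invariant subspace, the replacement of the $\pi$-phase reflections by a continuously tunable phase $\phi$, and the intermediate-value-theorem step guaranteeing that for the smallest integer $k$ with $(2k+1)\theta \ge \pi/2$ a suitable $\phi$ exists---so it matches the cited reference exactly, and there is nothing further to compare.
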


For the other result, we use the quantum fan-out gate, that implements the map 
$$\ket{x}\ket{y_1}\hdots\ket{y_n}\mapsto\ket{x}\ket{y_1\oplus x}\hdots\ket{y_n\oplus x}.$$
Section~\ref{sec:gates_created_in_LAQCC} gives more details on how to implement this gate. 
\citeauthor{HoyerSpalek:2005} introduced this gate and analyzed its properties. 
The state preparation protocols given in Section~\ref{sec:state_prep_in_LAQCC} use the property that the quantum fan-out gate allows parallelization of commuting quantum gates~\cite{HoyerSpalek:2005}. 
\begin{lemma}\emph{(\cite[Theorem~3.2]{HoyerSpalek:2005})}
\label{lem:unitar_parallelization}
Let $\{U_i\}_{i=1}^n$ be a pairwise commuting set of gates on $k$ qubits. 
Let $U_i^{x_i}$ be the gate $U_i$ controlled by qubit $\ket{x_i}$. 
Let $T$ be the unitary that mutually diagonalizes all $U_i$. 
Then there exists a quantum circuit, using quantum fan-out gates, computing $U = \prod_{i=1}^n U_i^{x_i}$ with depth $\max_{i = 1}^n \mathrm{depth}(U_i) + 4\cdot\mathrm{depth}(T) + 2$ and size $\sum_{i = 1}^n \mathrm{size}(U_i) + (2n + 2) \cdot \mathrm{size}(T)+ 2n$, using $(n - 1)k$ ancilla qubits.
\end{lemma}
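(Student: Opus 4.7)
The plan is to exploit the commutativity of the $U_i$ in two moves. First, since the $U_i$ pairwise commute they share a common eigenbasis, so $U_i = T D_i T^\dagger$ for diagonal $D_i$. Because $T$ acts only on the target qubits (never on the control $\ket{x_i}$), the controlled version satisfies $U_i^{x_i} = T D_i^{x_i} T^\dagger$, and inserting $T T^\dagger = I$ between consecutive factors telescopes the product into
$$\prod_{i=1}^n U_i^{x_i} \;=\; T\left(\prod_{i=1}^n D_i^{x_i}\right) T^\dagger.$$
This reduces the task to implementing, in parallel, a product of $n$ commuting controlled-diagonal gates, sandwiched between a single $T^\dagger$ and a single $T$.

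The second move is to parallelize the inner product using the quantum fan-out gate introduced just before the lemma. After the outer $T^\dagger$, the target register is in the eigenbasis of every $D_i$. First I would apply $k$ fan-out gates in parallel to copy the $k$-qubit target onto $(n-1)k$ fresh ancillas, sending $\sum_x \alpha_x \ket{x}$ to $\sum_x \alpha_x \ket{x}^{\otimes n}$. Then I would apply the $n$ gates $D_i^{x_i}$ simultaneously, each on its own copy of the target and with its own control $x_i$; since each $D_i$ is diagonal this multiplies the basis vector $\ket{x}^{\otimes n}$ by the joint phase $\prod_i d_i(x)^{x_i}$ without disturbing the basis labels. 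Finally, uncomputing the fan-out disentangles the ancillas, and a final $T$ on the target returns the state to the original basis, yielding $(\prod_i U_i^{x_i})\ket{\psi}$ on the target with the ancilla register restored to $\ket{0}^{(n-1)k}$.

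For the resource count: implementing each controlled-$D_i$ via $T^\dagger \cdot (\text{ctrl-}U_i) \cdot T$ adds $2\,\mathrm{depth}(T)$ on top of the $\max_i \mathrm{depth}(U_i)$ parallel layer, the two outer $T^\dagger,T$ contribute another $2\,\mathrm{depth}(T)$, and the two fan-out layers contribute $+2$, giving the stated depth $\max_i \mathrm{depth}(U_i) + 4\,\mathrm{depth}(T) + 2$; the size bound follows from the same accounting ($2n+2$ uses of $T$ in total, $2n$ fan-out gates, plus one copy of each $U_i$). The main subtlety, and the only non-mechanical step, will be verifying that the fan-out trick is phase-transparent: because the $D_i$ act only by scalar multiplication on each computational basis state and the $n$ copies remain perfectly correlated throughout the parallel layer, the accumulated scalar is shared across all copies and survives the uncomputation cleanly, landing on the target alone.
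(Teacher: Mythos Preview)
The paper does not give its own proof of this lemma; it is simply quoted from H{\o}yer--\v{S}palek (Theorem~3.2 there). Your proposal reconstructs exactly their construction: conjugate by $T$ to reduce to controlled diagonals, fan out the (diagonalized) $k$-qubit target to $n$ copies, implement each controlled-$D_i$ on its own copy as $T\,U_i^{x_i}\,T^\dagger$, then un-fan-out and apply the outer $T$. The phase-transparency argument you flag as the ``only non-mechanical step'' is precisely the point of the construction and is correct as you state it. One tiny bookkeeping quibble: the fan-out of a $k$-qubit register uses $k$ parallel fan-out gates (so $2k$ total for compute/uncompute), and your claim of ``$2n$ fan-out gates'' is just pattern-matching the stated bound rather than deriving it; this does not affect the depth or the correctness of the construction.
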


\section{The \texorpdfstring{$\LAQCC$}{LAQCC} model}
\label{sec:LAQCC_model}
This section formally defines the $\LAQCC$ model.
Next, we show that all Clifford circuits have an efficient and equivalent $\LAQCC$ circuit.
We then give quantum gates and tools constructable within $\LAQCC$, such as the quantum fan-out gate and weighted threshold gate, and we conclude by showing that any $\LAQCC$-circuit has an equivalent $\QNC^1$-circuit.

\subsection{Model definition}
We define the computational model \emph{Local Alternating Quantum-Classical Computations} ($\LAQCC$) as follows:
\begin{definition}[Local Alternating Quantum-Classical computations]
Let $\LAQCC(\mathcal{Q}, \mathcal{C}, d)$ be the class of circuits such that
\begin{itemize}
\item every quantum layer implements a quantum circuit $Q\in\mathcal{Q}$ constrained to a grid topology;
\item every classical layer implements a classical circuit $C\in\mathcal{C}$;
\item there are $d$ alternating layers of quantum and classical circuits;
\item after every quantum circuit $Q$ a subset of the qubits is measured;
\item the classical circuit receives input from the measurement outcomes of previous quantum layers;
\item the classical circuit can control quantum operations in future layers.
\end{itemize}
The allowed gates in the quantum and classical layers are given by $\mathcal{Q}$ and $\mathcal{C}$ respectively. 
Furthermore, we require a circuit in $\LAQCC(\mathcal{Q}, \mathcal{C}, d)$ to deterministically prepare a pure state on the all-zeroes initial state. 
\end{definition}

The grid topology imposed on the quantum operations implies that qubits can only interact with their direct neighbors on the grid. 
A circuit in $\LAQCC(\mathcal{Q}, \mathcal{C}, d)$ can use the results of the classical intermediate layers and control quantum operations in future layers. 
In a sense, information is fed forward in the circuit. Note, as classical computations are in general significantly faster than the quantum operations, we only count the quantum operations towards the depth of the circuit, unless specified otherwise.

\begin{remark}
Note that there exists ambiguity in the choices for $\mathcal{Q}$ and $\mathcal{C}$.
For example, we have $\LAQCC(\mathsf{QPoly}(n), \mathsf{P}, \mathO(1)) = \LAQCC(\QNC^0, \mathsf{P}, \mathO(\mathrm{poly}(n)))$. 
This follows as any $\mathsf{P}$-circuit is in $\mathsf{QPoly}(n)$, and we can concatenate $\mathrm{poly}(n)$ constant-depth quantum circuits with trivial intermediate classical computations. 

This ambiguity is non-trivial: consider for instance 
$$\LAQCC(\QNC^1, \NC^1, \mathO(1))\subseteq \LAQCC(\QNC^0, \NC^1, \mathO(\log(n))).$$ 
The inclusion from left to right follows immediately by same argument as above.
It is however not obvious if the logarithmic number of measurement rounds, allowed in the right hand side, can be simulated by a $\QNC^1$ circuit.
Even stronger, we will show in Section~\ref{sec:gates_created_in_LAQCC} that threshold gates are available in $\LAQCC(\QNC^0, \NC^1, \mathO(1))$. From this fact it follows immediately that any $\TC^1$-circuit is contained in $\LAQCC(\QNC^0, \NC^1, \mathO(\log(n)))$. 
It is unclear these circuits are also contained in $\LAQCC(\QNC^1, \NC^1, \mathO(1))$. 
\end{remark}

In the remainder of this work, we consider a specific instantiation of $\LAQCC(\mathcal{Q},\mathcal{C}, d)$.
\begin{notation}
We let $\LAQCC$ refer to the instance $\LAQCC(\QNC^0, \NC^1, \mathO(1))$, together with a grid nearest-neighbor topology and a quantum gate-set of all single-qubit gates and the two-qubit CNOT gate. 
The classical computations are bounded to logarithmic depth and of bounded-fan-in. 
\end{notation}
The class $\NC^1$ is a natural non-trivial class beyond constant-depth complexity classes.
As the depth of these circuits remains low, they can be implemented quickly. 
In this work, we assume qubits will not decohere during the classical computation.
Incorporating the errors throughout the whole $\LAQCC$-computation, including the classical intermediate computations, proves an interesting direction for future research.
Interesting, $\LAQCC$ contains many useful gates and subroutines already, as outlined in the next section, specifically the fan-out gate, implying directly that threshold gates are also contained in $\LAQCC$.

In its current definition, $\LAQCC(\mathcal{Q},\mathcal{C}, d)$, and hence also $\LAQCC$, are classes of circuits. 
When considering the capabilities of $\LAQCC(\mathcal{Q},\mathcal{C}, d)$ in preparing states, it is helpful to define a related class that consists of states preparable by a circuit in $\LAQCC(\mathcal{Q},\mathcal{C}, d)$. 
\begin{restatable}{definition}{StateClassX}
Let $\mathcal{H}_n$ be a Hilbert space on $n$ qubits, then define 
$$\mathsf{StateX}_{n,\varepsilon}=\{\ket{\psi}\in\mathcal{H}_n\mid \exists\, \mathsf{X}\text{-circuit }A:
\bra{\psi}A\ket{0}^{\otimes n} \ge 1-\varepsilon\}.$$

This is the subset of $n$-qubit states $\ket{\psi}$ such that there exists a circuit corresponding to the class~$\mathsf{X}$ that prepares a quantum state that has inner product at least $1-\varepsilon$ with $\ket{\psi}$. 

Define $\mathsf{StateX}_{\varepsilon} =\bigcup_{n\in\mathbb{N}}\mathsf{StateX}_{n,\varepsilon}$.
\end{restatable}
This definition extends already existing ideas and definitions of state-complexity~\cite{aaronson2020hardness,RosenthalYuen:2022,susskind2018lectures}. Our definition is very similar to state complexity defined in \cite{MetgerYuen:2023}, where we are interested in which states are contained in a class, however we drop the uniformity requirement and instead study the set of states that can be generated by a specific class of circuits.
An example of a circuit class is $\mathsf{State}$$\LAQCC(\mathcal{Q},\mathcal{C}, d)_{n,\varepsilon}$.
\begin{notation}
The class $\mathsf{State}$$\LAQCC(\mathcal{Q},\mathcal{C}, d)_{n,\varepsilon}$ consists of all $n$-qubit states $\ket{\psi}$ for which an $\LAQCC(\mathcal{Q},\mathcal{C}, d)$ exists that prepares a state that has inner product at least $1-\varepsilon$ with $\ket{\psi}$. 
\end{notation}

Another example is the circuit class of $\mathsf{PostQPoly}$. 
\begin{restatable}{definition}{PostQPolyDef}
\label{def:PostQPoly}
The class $\mathsf{PostQPoly}$ consists of all polynomial-sized quantum circuits with one extra qubit, where the outcome state is considered conditional on the extra qubit being in the one state. 
If the extra qubit is in the zero state, the output state may be anything. 

The class $\mathsf{StatePostQPoly}_{n,\varepsilon}$ consists of all $n$-qubit states $\ket{\psi}$ for which a polynomial-sized quantum circuit exists that prepares a state that, conditional on the extra qubit being one, has inner product at least $1-\varepsilon$ with $\ket{\psi}$. 
\end{restatable}

The next section shows that all polynomial-size Clifford circuits are in $\LAQCC$. 
As a result, the quantum fan-out gate is also in $\LAQCC$, a multi-qubit gate which enables many different subroutines (by applying Lemma~\ref{lem:unitar_parallelization}).
We give a list of such quantum gates and quantum subroutines accessible $\LAQCC$ in Section~\ref{sec:gates_created_in_LAQCC}. 
We conclude by showing that any $\LAQCC$-circuit corresponds to a $\QNC^1$-circuit.

\subsection{Clifford circuits}
\label{sec:clifford_circuits}
The concept of intermediate measurements with subsequent computations is closely related to measurement-based quantum computing. 
A famous result from measurement-based quantum computing us that all Clifford circuits can be paralellized using measurements. In this section we borrow techniques from this result to show that any Clifford circuit has an $\LAQCC$ implementation. 

This result is best understood in the teleportation based quantum computing model~\cite{jozsa2006introduction}, a specific instance of measurement-based quantum computing that applies quantum operations using bell measurements. 
In teleportation, qubits are measured in the Bell basis, which projects the measured qubits onto an entangled two-qubit, or ebit, state, up to local Pauli gates. This projection combined with an ebit state teleports a quantum state between qubits. After teleportation, one needs to correct the local Pauli gate created by the bell measurement. A similar process can be used to apply quantum gates. However, the Pauli gates that arise during teleportation have to be corrected before the calculations can proceed, which necessitates subsequent adaptive operations.

With Clifford circuits, these subsequent operations can be omitted. 
Clifford circuits stabilize the Pauli group, which allows for simultaneous measurements and hence parallelization of the full Clifford circuit~\cite{jozsa2006introduction}. 
Consider a simple example of teleporting a single-qubit quantum state. 
A Bell basis measurement projects two qubits on $\sum_{i\in \{0,1\}} \bra{ii}P^{a,b}\otimes I$, where $P^{a,b} = Z^a X^b$ and $a,b \in \{0,1\}$ correspond to the four possible measurement outcomes.

By using one Bell-basis measurement, we can apply two sequential Clifford gates $U_1$ and $U_2$ on a quantum state $\ket{\psi}$, which gives:
\begin{align*}
    \sum_{i, j\in \{0,1\}} \big[ (\bra{ii} (P^{a,b} \otimes I) \otimes I \big] U_1 \otimes I \otimes U_2 \ket{\psi}\ket{jj} &= \sum_{i,j \in \{0,1\}} \bra{i} P^{a,b} U_1 \ket{\psi} \braket{i}{j} U_2\ket{j}\\
    &=  \sum_{i \in \{0,1\}}  U_2\kb{i} P^{a,b} U_1 \ket{\psi} = U_2 P^{a,b} U_1\ket{\psi}.
\end{align*}
Note that besides projecting on a Bell state, an initial entangled Bell-state is required. 
$U_2$ is a Clifford gate, hence there exists a $P^{\hat{a},\hat{b}}$ such that $U_2 P^{a,b} U_1\ket{\psi} = P^{\hat{a},\hat{b}} U_2  U_1\ket{\psi}$, allowing the correction term to be pushed to the end of the circuit.
Repeating the same argument for multiple Clifford unitaries gives the quantum state $...P^{a_2,b_2}_2 U_2P^{a_1,b_1}_1 U_1\ket{\psi}$. 
Due to the conjugation relation of Clifford and Pauli gates, all correction terms can be postponed to the end of the computation. 

\subsubsection{Clifford-ladder circuit}
A similar argument holds when looking at \textit{Clifford-ladder circuits}. 
\begin{definition}[Clifford-ladder circuit]
Let $\{U^{i}\}_{i=0}^{n-1}$ be a collection of $n$ $2$-qubit Clifford unitaries. 
A Clifford-ladder circuit $C_{ladder}$ is a circuit of depth $\mathO(n)$ and width $\mathO(n)$ of the following form:
\[
C_{ladder} = \prod_{i=0}^{n-1} U^{(i)}_{i,i+1}
\]
where $U^{(i)}_{i,i+1}$ denotes that unitary $U^{(i)}$ is applied on qubits $i$ and $i+1$.
\end{definition}
Note that each 2-qubit Clifford unitary $U^{(i)}$ itself is of constant-depth. 

The next lemma shows that any Clifford-ladder circuit has an equivalent $\LAQCC$ circuit. 
Figure~\ref{fig:clifford_ladder} shows this mapping graphically.
Each two-qubit unitary is parallelized using gate teleportation and with the Clifford commutation relations, the Pauli correction terms are pushed to the end of the computation. 
\begin{lemma}
\label{lem:clifford_ladder}
Any Clifford-ladder circuit has an $\LAQCC$ implementation of depth $\mathO(1)$ and width $\mathO(n)$.
\end{lemma}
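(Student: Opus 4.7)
The plan is to parallelize the ladder by a single round of gate teleportation, following the idea sketched in the preceding paragraphs on Clifford gate teleportation. For each $i\in\{0,\dots,n-1\}$, we prepare in parallel a four-qubit resource state
\[
\ket{\chi^{(i)}}\;=\;\bigl(I\otimes I\otimes U^{(i)}\bigr)\bigl(\ket{\Phi^+}\otimes\ket{\Phi^+}\bigr),
\]
which is a stabilizer state and hence preparable in constant depth from $\ket{0}^{\otimes 4}$ by a short Clifford circuit of Hadamards and CNOTs. Since the $n$ resource states live on disjoint qubits, all of them are prepared simultaneously in one constant-depth $\QNC^0$ layer, respecting the grid topology by assigning each resource to its own local patch of the grid.

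Next, I would lay out the data qubits together with the resource states in a planar ``brick-wall'' arrangement so that, for every $i$, the two data ports of $\ket{\chi^{(i)}}$ sit next to the wires for qubits $i$ and $i+1$. The ladder structure — each $U^{(i)}$ couples only neighbouring indices — makes such an embedding fit in an $\mathO(n)\times\mathO(1)$ strip of the grid. Then, in a single constant-depth layer, I perform all $n$ Bell measurements between each pair of data/resource ports (a CNOT, a Hadamard, and a computational-basis measurement). After this round, the joint state on the output wires is, up to Pauli factors indexed by the measurement outcomes,
\[
\Bigl(\prod_i P_i^{(a_i,b_i)}\Bigr)\,U^{(n-1)}_{n-1,n}\cdots U^{(0)}_{0,1}\,\ket{\psi},
\]
with each $P_i^{(a_i,b_i)}$ a Pauli sitting \emph{inside} the product of Cliffords rather than at the output.

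Third, I would use that each Clifford $U^{(j)}$ normalizes the Pauli group to push every intermediate $P_i^{(a_i,b_i)}$ rightwards through the subsequent Cliffords $U^{(i+1)},\dots,U^{(n-1)}$. Because each $U^{(j)}$ only acts on qubits $j,j{+}1$, its conjugation action is a fixed $\mathbb{F}_2$-linear map on a constant-size Pauli-tableau. Hence the net Pauli correction at the output is a product of $n$ constant-size symplectic matrices applied to the $2n$-bit vector of measurement outcomes. Such an iterated product is evaluated by a balanced binary tree of constant-size matrix multiplications, giving a bounded-fan-in classical circuit of depth $\mathO(\log n)$ — that is, an $\NC^1$ circuit of the kind permitted in $\LAQCC(\QNC^0,\NC^1,\mathO(1))$. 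A final constant-depth quantum layer applies the resulting single-qubit $X$'s and $Z$'s to the output qubits.

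The main obstacle is keeping all three constraints of $\LAQCC$ simultaneously satisfied: the resource-state preparation and Bell measurements must respect nearest-neighbour grid locality, the classical correction must be computable by a bounded-fan-in log-depth circuit, and only a constant number of quantum/classical alternations may be used. Locality is handled by the brick-wall embedding above; the classical side is handled by parallel-prefix-style evaluation of the symplectic product, so no step exceeds $\NC^1$. With those in place the whole protocol uses exactly two quantum layers (resource preparation with measurements, then Pauli correction) separated by one classical layer, and width $\mathO(n)$, establishing the claimed $\LAQCC$ implementation of depth $\mathO(1)$ and width $\mathO(n)$.
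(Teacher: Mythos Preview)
Your proposal is correct and follows essentially the same approach as the paper: parallelise the ladder via gate teleportation (Bell-state creation and Bell-basis measurements around each $U^{(i)}$), then apply a classically computed Pauli correction in a final constant-depth layer. The only notable difference is in how the $\NC^1$ bound for the classical step is argued. The paper simply observes that, because Cliffords normalise the Pauli group, the map from the length-$2n$ outcome vector $(a,b)$ to the correction vector $(\hat a,\hat b)$ is a fixed $\mathbb{F}_2$-linear map $M$; the runtime computation is then bare matrix--vector multiplication, which is in $\NC^1$, and the paper adds a separate remark that building $M$ from the circuit description is an offline logspace precomputation. Your parallel-prefix argument over constant-size symplectic blocks instead exploits the ladder's carry-chain structure (each $U^{(i)}$ only touches qubits $i,i{+}1$, so the propagated error is a linear recurrence on a $2$-bit ``carry'') to compute the same correction directly. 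Both routes land in $\NC^1$; yours makes the circuit explicit without the precomputation remark, while the paper's is terser.
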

\begin{proof}
Figure~\ref{fig:clifford_ladder} shows the construction of a $\LAQCC$ circuit of width $\mathO(n)$ and depth $\mathO(1)$ implementing a Clifford-ladder circuit.
The caps and cups denote Bell-state measurements and Bell-state creation, respectively. 
What remains to show is that an $\NC^1$ circuit computes the Pauli-correction terms.

The $i$-th Bell measurement results in Pauli error $P_i = Z^{a_i}X^{b_i}$. 
A Clifford-ladder circuit of size $n$ hence has an error vector $\big(a\,b\big)$ of length $2n$. 
The correction terms that have to be applied have the same form: we can label every corrective Pauli by an index $j$, such that $\hat{P}_j = Z^{\hat{a}_j}X^{\hat{b}_j}$. 
This gives a correction vector $\big(\hat{a}\,\hat{b}\big)$. 
Note that Pauli matrices anti-commute, hence reordering them will only incur a global phase.
This implies a binary linear map $M:\big(a\,b\big)\mapsto\big(\hat{a}\,\hat{b}\big)$. 
As matrix vector multiplication is in $\NC^1$, this error calculation is in $\NC^1$ and Clifford-ladder circuits have an $\LAQCC$ implementation.
\end{proof}
\begin{figure}
    \centering
    \includegraphics[width = \textwidth]{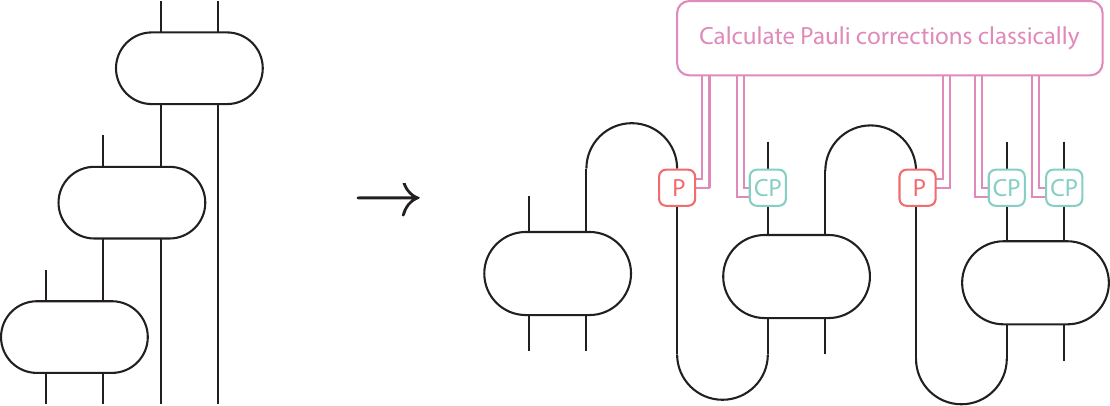}
    \caption{Graphical representation of Clifford-ladder circuit parallelization. 
    Time flows upward and lines represent qubits and boxes quantum gates. A half circle represents either a Bell-state creation (ends pointing upwards) or a Bell-state measurement (ends pointing downwards). The Bell-state measurements can produce Pauli errors $P = Z^a X^b$, which are corrected by the boxes $CP$ (corrective Pauli). 
    The computations to determine how errors propagate are performed classically before the computations.}
    \label{fig:clifford_ladder}
\end{figure}

\begin{remark}
Constructing the binairy linear map $M$ is not in $\NC^1$, but it does follow directly from the quantum circuit.
Instead, an $\mathsf{L}$ (logspace) precomputation gives the matrix associated to $M$.
\end{remark}

This result directly implies that in $\LAQCC$ we can apply two-qubit gate on any two any non-adjacent qubits. 
\begin{corollary}
Any $SWAP$ circuit needed to do an operation between non-adjacent qubits is a Clifford-ladder circuit and hence in $\LAQCC$.

This effectively removes the locality constraint in $\LAQCC$ for applying a single $2$-qubit gate on non-adjacent qubits. 
\end{corollary}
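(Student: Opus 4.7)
The plan is to reduce the corollary directly to Lemma~\ref{lem:clifford_ladder}. First I would observe that the $SWAP$ gate is a two-qubit Clifford unitary: it can be written as a product of three $CNOT$ gates, each of which lies in the Clifford group, and hence $SWAP\in C_2$. In particular, $SWAP$ is a constant-depth Clifford operation acting on two adjacent qubits, exactly of the form $U^{(i)}_{i,i+1}$ required in the definition of a Clifford-ladder circuit.

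Next I would consider the standard routing procedure on the grid. To apply a two-qubit gate between qubits at positions $i$ and $j$ with $i<j$, one can walk one of the operands toward the other by applying the sequence $SWAP_{i,i+1}, SWAP_{i+1,i+2},\dots, SWAP_{j-2,j-1}$. This is a composition of nearest-neighbor Clifford two-qubit gates indexed consecutively along a path, which is literally a Clifford-ladder circuit of length $|j-i|-1 = \mathO(n)$ (on the grid, the same argument applies by choosing any monotone nearest-neighbor path between the two qubits and performing SWAPs along it). By Lemma~\ref{lem:clifford_ladder}, this SWAP routine has an $\LAQCC$ implementation of depth $\mathO(1)$ and width $\mathO(n)$.

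Once the two target qubits have been brought adjacent by this constant-depth ladder, an arbitrary two-qubit gate $U$ can be applied in depth $\mathO(1)$ using the nearest-neighbor gate set of $\LAQCC$. If one wishes to restore the original grid layout, one can append a second Clifford-ladder of SWAPs undoing the routing; this too is in $\LAQCC$ by the same lemma. Concatenating the routing, the application of $U$, and (optionally) the reverse routing yields a constant-depth $\LAQCC$ circuit realising the desired non-local two-qubit operation, which is the content of the corollary.

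The only subtlety I anticipate is bookkeeping-level rather than conceptual. The Pauli corrections produced by the gate teleportations inside the forward and backward SWAP ladders must commute past the non-Clifford gate $U$ that is sandwiched between them; for the statement of the corollary this is not an issue because the corrections for each ladder are computed and applied within that ladder's own constant-depth $\LAQCC$ block (as in Lemma~\ref{lem:clifford_ladder}), so $U$ sees a genuine SWAP on its input and produces genuine output before the return ladder begins. The one fact worth checking explicitly is that the binary linear map $M$ from measurement outcomes to Pauli corrections is still computable in $\NC^1$ in this concatenated setting, but this is immediate since it is just the composition of two $\NC^1$ maps with a constant-depth Clifford layer, and $\NC^1$ is closed under such composition.
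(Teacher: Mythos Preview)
Your proposal is correct and follows exactly the route the paper intends: the corollary is stated without proof in the paper as an immediate consequence of Lemma~\ref{lem:clifford_ladder}, and your argument spells out precisely this reduction (SWAP is a two-qubit Clifford, so a chain of nearest-neighbor SWAPs is a Clifford-ladder circuit). Your additional discussion of sandwiching a non-Clifford $U$ between two separately corrected ladders is a sound elaboration that the paper leaves implicit.
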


An example of a Clifford-ladder circuit is the creation of a GHZ state. 
We can parallelize this directly, for instance following the poor man's cat state approach of~\cite{WattsKothariSchaefferTal:2019}. 
Figure~\ref{fig:q_circuit:GHZ_3_state} shows a $\LAQCC$ circuit using $2n-1$ qubits placed on a line that prepares an $n$-qubit GHZ state.
\begin{figure}
\centering
\includegraphics[width=0.8\textwidth]{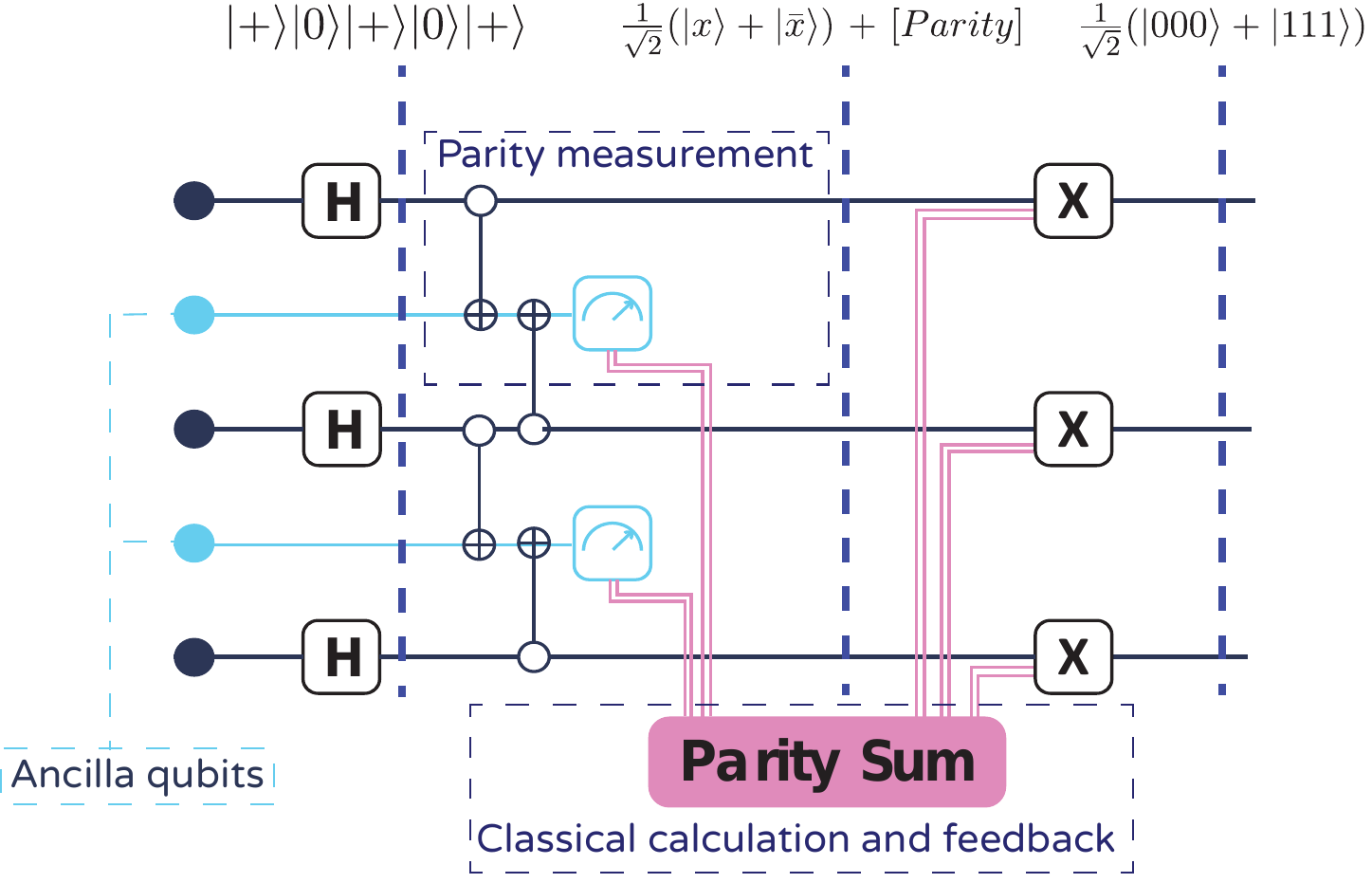}
\caption{The quantum circuit to prepare the $3$-qubit GHZ state. 
Double lines indicate classical information and dotted lines the quantum state at various points.}
\label{fig:q_circuit:GHZ_3_state}
\end{figure}

\subsubsection{Clifford-grid circuit}
Any Clifford unitary can be mapped to a linear-depth circuit given a linear nearest-neighbor architecture~\cite{maslov2018shorter}. 
The most general representation of these circuits are so-called Clifford-grid circuits.
\begin{definition}[Clifford-grid circuit]
Let $n$ be the number of qubits. 
A Clifford-grid circuit of depth $d$ is a circuit of the form
\[
   C_{grid} =  \prod_{i=0}^{d}\bigotimes_{j=0}^{\frac{n}{2}}U_{i,j},
\]
for Clifford unitaries $U_{i,j}$ and such that gate $U_{i,j}$ acts on qubits $2j$ and $2j + 1$ if $i$ is even, and $2j+1$ and $2j+2$ is $i$ is odd. 
\end{definition}

The next lemma shows that Clifford-grid circuits also have an efficient $\LAQCC$ implementation. 
\begin{lemma}
Any Clifford-grid circuit of depth $\mathO(n)$ has an $\LAQCC$ implementation of depth $\mathO(1)$ and width $\mathO(n^2)$.
\label{lem:clifford_grid}
\end{lemma}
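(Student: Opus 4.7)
The plan is to mimic the proof of Lemma~\ref{lem:clifford_ladder}, but in two dimensions: we lay out a rectangular array of ancilla ``copies'' of each qubit, one per time-step of the grid circuit, perform every gate-teleportation in parallel in constant depth, and defer all Pauli corrections to the end of the computation where they are computed by an $\NC^1$ circuit.

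Concretely, I would introduce, for each of the $n$ logical qubits and each of the $d = \mathO(n)$ layers, a fresh pair of ancilla qubits that will serve as the input and output wire of the corresponding gate-teleportation. This gives a grid of $\mathO(nd) = \mathO(n^2)$ qubits arranged so that spatially neighbouring gate-teleportations share a Bell-pair junction. In the first (and only) quantum round, I prepare all Bell pairs linking temporally adjacent copies of the same logical qubit, apply each two-qubit Clifford $U_{i,j}$ on the appropriate pair of ancillas (all gates are on disjoint pairs, so they fit in constant depth), and finally perform all Bell measurements simultaneously. The topmost row of ancillas carries the would-be output of $C_{\mathrm{grid}}$ up to accumulated Pauli errors, exactly as in Figure~\ref{fig:clifford_ladder} but tiled into a rectangle.

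For the classical step, every Bell measurement produces a pair of bits encoding a Pauli $P_{i,j} = Z^{a_{i,j}} X^{b_{i,j}}$, giving a total error vector of length $\mathO(n^2)$. Because each $U_{i,j}$ is Clifford, pushing a Pauli through it by conjugation is an $\mathbb{F}_2$-linear operation on the $2n$ bits describing an $n$-qubit Pauli. Composing these conjugations layer by layer yields a single $\mathbb{F}_2$-linear map $M$ from the measurement vector to the $2n$-bit correction vector applied in the final layer. Matrix-vector multiplication over $\mathbb{F}_2$ on poly-sized inputs is in $\NC^1$, so the correction is computed within the allowed classical power; as in the remark after Lemma~\ref{lem:clifford_ladder}, the matrix $M$ itself depends only on the circuit and can be produced by an $\mathsf{L}$ precomputation rather than in the $\NC^1$ layer.

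The main obstacle I anticipate is bookkeeping the propagation of the Pauli errors through the two-dimensional network rather than through a line: an error from one teleportation must be conjugated not only through later gates on its own ``column'' (logical qubit) but also through the neighbouring gates it subsequently interacts with in the brick-wall pattern. The key point to verify carefully is that this still yields an $\mathbb{F}_2$-linear map of polynomial size, so that the correction remains computable in $\NC^1$; this is inherited directly from the fact that Cliffords act linearly on the Pauli group, so no new ideas beyond those in Lemma~\ref{lem:clifford_ladder} are required. Putting the pieces together gives an $\LAQCC$ implementation of depth $\mathO(1)$ and width $\mathO(n^2)$, as claimed.
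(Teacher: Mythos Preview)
Your proposal is correct and follows essentially the same approach as the paper: parallelize via gate teleportation on an $\mathO(n^2)$ grid of ancillas, perform all Bell-state preparations, Clifford gates, and Bell measurements in constant depth, and compute the accumulated Pauli correction as an $\mathbb{F}_2$-linear map in $\NC^1$. The only minor discrepancy is that the paper remarks the precomputation of $M$ in the grid case is in $\oplus\mathsf{L}\subseteq\NC^2$ rather than $\mathsf{L}$, since an error can propagate along multiple paths through the brick-wall and one needs the parity of all contributing paths; this does not affect the $\LAQCC$ claim, as the precomputation is offline.
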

\begin{proof}
Similar to the Clifford-ladder circuits, gate teleportation allows parallelization to obtain a $\LAQCC$ circuit. 
With a total of $\mathO(n^2)$ Clifford gates, this also requires $\mathO(n^2)$ qubits. 
Figure~\ref{fig:clifford_grid} illustrates the transformation. 

Any Bell measurement in the circuit can incur a Pauli error, which has to be dealt with at the end of the circuit. 
The number of Pauli gates now scales with $\mathO(n^2)$. 
Similar to the Clifford-ladder circuits, there now is a vector $(a b)$ of length $\mathO(n^2)$ containing the information of the Pauli errors. 
The vector of correction terms, the vector $(\hat{a} \hat{b})$, has length $\mathO(n)$.

As these Pauli errors anti-commute, there again is a binary linear map $M: (a b)\mapsto (\hat{a}\hat{b})$. 
The corresponding matrix is rectangular and the error-correction calculations are in $\NC^1$.
\end{proof}
\begin{figure}[ht]
    \centering
    \includegraphics[width = \textwidth]{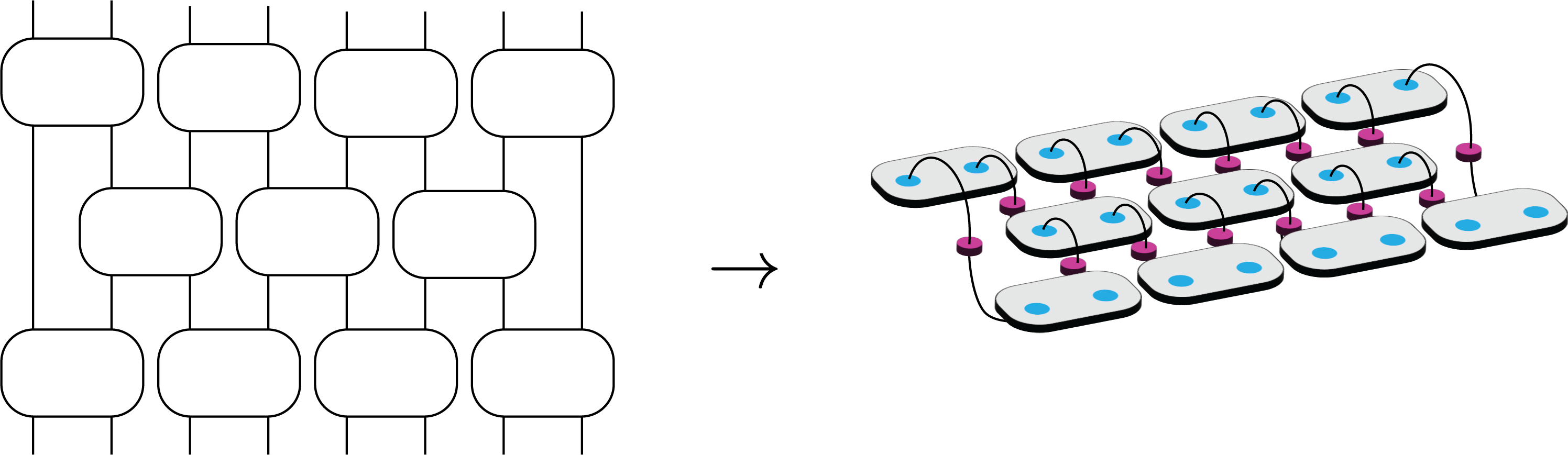}
    \caption{Graphical representation of Clifford-grid circuit parallelization.
    Every blue dot represents a qubit and all Clifford gates (boxes) are applied in parallel.
    The lines again represent Bell-state creations and Bell-state measurements, indicated by the pink boxes. 
    The propagating Pauli errors can be corrected using the Bell-state measurement results.}
    \label{fig:clifford_grid}
\end{figure}

Finding the matrix $M$ for correcting a Clifford-grid circuit is more complex than for a Clifford-ladder circuit. 
An error occurring in the grid can have multiple paths contributing to a single output wire. 
For the final correction, the parity of each contributing error-path is needed. 
This computation is in $\oplus\mathsf{L}\subseteq\NC^2$~\footnote{This is not too surprising as simulating Clifford circuits classically is a complete problem for $\oplus \mathsf{L}$}.
A precomputation again gives the matrix corresponding to the bilinear map $M$. 

\subsection{Useful gates and routines with an \texorpdfstring{$\LAQCC$}{LAQCC} implementation\label{sec:gates_created_in_LAQCC}}
This section groups useful multi-qubit gates with an $\LAQCC$ implementation.
The construction of $W$-states and Dicke states uses these gates, but their use might be of broader interest. 

The tables give the action of the gates on computational basis states. 
Their effect on arbitrary states follows by linearity. 
The tables also give the width of the implementation and a reference to the implementation.

The first two gates directly follow from the Clifford-parallelization results described in Section~\ref{sec:clifford_circuits}.  
\begin{table}[htb]
\centering
\begin{tabular}{l|l|l|l}
\textbf{Gate} & \textbf{Operation on basis states} & \textbf{Width} & \textbf{Implementation}\\
\hline
$\text{Fanout}_{n}$ & $ \ket{x}\ket{y_1}\hdots\ket{y_{n}} \mapsto\ket{x}\ket{y_1\oplus x}\hdots\ket{y_{n}\oplus x}$ & $\mathO(n)$ & Clifford-ladder circuit~\ref{lem:clifford_ladder} \\
Permutation$(\pi)_n$ &  $\ket{y_1}\hdots\ket{y_n} \mapsto \ket{y_{\pi(1)}}\hdots\ket{y_{\pi(n)}}$ & $\mathO(n^2)$ & Clifford-grid circuit~\ref{lem:clifford_grid}
\end{tabular}
\caption{Operations contained in $\LAQCC$ based on Clifford parallelization. 
Here $\pi\in S_n$ denotes a permutation of $n$ elements.}
\label{tab:Fanout_Perm}
\end{table}

Prior works extensively studied the fanout gate, for instance to construct a constant-depth $\text{OR}_n$ function with one-sided error~\cite{HoyerSpalek:2005} and with an exact implementation~\cite[Theorem~1]{TakahashiTani_CollapseOfHierarchyConstantDepthExactQuantumCircuits_2013}, both assuming the fanout gate to be a native gate.
The $\text{OR}_n$ gate also implies two other gates, as the following table shows.
\begin{table}[htb]
\centering
\begin{tabular}{l|l|l|l}
\textbf{Gate} & \textbf{Operation on basis states} & \textbf{Width} & \textbf{Implementation}\\
\hline
$\text{OR}_{n}$ & $ \ket{y_1}\hdots\ket{y_{n}} \ket{x} \mapsto\ket{y_1}\hdots\ket{y_n}\ket{\text{OR}_n(y)\oplus x}$ & $\mathO(n\log(n))$ & Appendix~\ref{gate:OR_implementation} \\
$\text{AND}_n$ &  $\ket{y_1}\hdots\ket{y_{n}} \ket{x} \mapsto\ket{y_1}\hdots\ket{y_n}\ket{\text{AND}_n(y)\oplus x}$ & $\mathO(n\log(n))$ & negate input and output of $\text{OR}_n$\\  
$\text{Equal}_i$ &  $\ket{j}\ket{b} \mapsto \begin{cases} \ket{j}\ket{1\oplus b} & \text{if } \ket{j} = \ket{i} \\
    \ket{j}\ket{b} & \text{else}
    \end{cases}$ & $\mathO(n\log(n))$ & negate part of input of $\text{AND}_n$ 
\end{tabular}
\caption{Operations contained in $\LAQCC$ based on Fanout and local $1$-qubit unitaries.}
\label{tab:Or_And_Equal}
\end{table}

With these unbounded-fan-in OR and AND gates, all $\AC^0$ circuits can be implemented. 
The next step is implementing $\LAQCC$-type modular addition circuits, which gives circuits to check for equality and greater-than. 
These three gates take $n$-qubit quantum states as input. 
We introduce the indicator variable $\mathbbm{1}_{A}$ for a Boolean expression $A$, which evaluates to $1$ if $A$ is true. 
Similarly, $\ket{\mathbbm{1}_{A}}=\ket{1}$ if and only if $A$ is true. 
\begin{table}[htb]
\centering
\begin{tabular}{l|l|l|l}
\textbf{Gate} & \textbf{Operation on $n$-qubit integers $\ket{x}$, $\ket{y}$} & \textbf{Width} & \textbf{Implementation}\\
\hline
$\text{Add}_{n}$ & $ \ket{x}\ket{y} \mapsto\ket{x}\ket{y + x \bmod 2^n}$ & $\mathO(n^2)$ & $\AC^0$ circuit \\
Equality &  $\ket{x}\ket{y}\ket{0}  \mapsto\ket{x}\ket{y}\ket{\mathbbm{1}_{x = y}}$ & $\mathO(n^2)$ & Appendix~\ref{gate:equality}\\  
Greatherthan &  $\ket{x}\ket{y}\ket{0} \mapsto \ket{x}\ket{y}\ket{\mathbbm{1}_{x > y}}$ & $\mathO(n^2)$ & Appendix~\ref{gate:greaterthan}
\end{tabular}
\caption{Operations contained in $\LAQCC$ based on $\AC^0$ circuits.}
\label{tab:Add_Equality_Greaterthan}
\end{table}

\citeauthor{HoyerSpalek:2005} showed that fanout-gates imply efficient constant-depth implementations of for instance the quantum Fourier transform~\cite{HoyerSpalek:2005}.
They use this constant-depth quantum Fourier transform to construct a constant-depth circuit for weighted counting. 
In particular, this circuit can be used to calculate the Hamming weight of an $n$-bit string, and to implement an ``Exact~$t$"-gate and a threshold gate. 
Appendix~\ref{gate:threshold_t} also explains how to modify the threshold gate to a weighted threshold gate. 
 \begin{table}[htbp]
\centering
\begin{tabular}{l|l|l|l}
\textbf{Gate} & \textbf{Operation on $n$-qubit basis state $\ket{x}$ } & \textbf{Width} & \textbf{Implementation}\\
\hline
$\text{QFT}$ & $\ket{x} \mapsto \frac{1}{\sqrt{2^{n-1}}}\sum_{j=0}^{2^{n-1}} e^{i 2 \pi \frac{x \cdot j}{2^n}}\ket{j}$ & $\mathO(n^3\log(n))$& \cite[Theorem~4.12]{HoyerSpalek:2005} \\
Hammingweight & $ \ket{x}_{n}\ket{0}_{\log(n)} \mapsto \ket{x}_n\ket{|x|}_{\log(n)} $& $\mathO(n\log(n))$ & \cite[Lemma~4]{TakahashiTani_CollapseOfHierarchyConstantDepthExactQuantumCircuits_2013} \\  
$\text{Exact}_t$ &  $\ket{x}\ket{0} \mapsto \ket{x}\ket{\mathbbm{1}_{|x| = t}}$ & $\mathO(n\log(n))$ & Appendix~\ref{gate:exact_t} \\
$\text{Threshold}_{t}$ &  $\ket{x}\ket{0} \mapsto \ket{x}\ket{\mathbbm{1}_{\sum_i x_i \ge t}}$ & $\mathO(t n \log(n))$ & Appendix~\ref{gate:threshold_t}
\end{tabular}
\caption{Quantum subroutines in $\LAQCC$ based on \citeauthor{HoyerSpalek:2005}.}
\label{tab:QFT_Hammingweight_Threshold}
\end{table} 

\begin{remark}
    As the $\text{Threshold}_{t}$ gate is in $\LAQCC$, any classical $\TC_0$ circuit is in $\LAQCC$.
\end{remark}

This section concludes not with a gate, but with a tool used for preparing uniform superpositions. 
This lemma extends Lemma~\ref{lem:exact_grover} to preparing states instead of finding marked items. 
\begin{lemma}\label{lem:grover_constant_fraction}
Given an $n$-qubit unitary $U$, that is implementable by a constant-depth circuit, a basis $\mathcal{C}$ and a partition of $\mathcal{C}$ in $\mathcal{G}$ and $\mathcal{B}$ such that $\frac{|\mathcal{G}|}{|\mathcal{C}|}$ is a known constant $c$. 
Suppose that $U$ implements the map
$$U:\ket{y}\ket{b}\mapsto\begin{cases}
\ket{y}\ket{b\oplus 1} & \qquad \text{if }y\in\mathcal{G} \\
\ket{y}\ket{b} & \qquad \text{if }y\in\mathcal{B} 
\end{cases}.$$
Then there exists a $\LAQCC$ circuit that prepares the state $\frac{1}{\sqrt{|\mathcal{G}|}}\sum_{y\in\mathcal{G}} \ket{y}$ by using $U$ a constant number of times.
\end{lemma}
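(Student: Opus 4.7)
The plan is to instantiate the exact amplitude amplification of Lemma~\ref{lem:exact_grover} using $U$ as the marking oracle and the uniform superposition over $\mathcal{C}$ as the starting state. Since $c = |\mathcal{G}|/|\mathcal{C}|$ is a known constant, the prescribed number of Grover iterations is $\mathO(\sqrt{1/c}) = \mathO(1)$, so only a constant number of oracle calls and diffusion steps are needed, and the entire construction fits into $\LAQCC(\QNC^0, \NC^1, \mathO(1))$.

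First I would prepare the initial uniform state $\ket{\psi_0} = \frac{1}{\sqrt{|\mathcal{C}|}}\sum_{y\in\mathcal{C}}\ket{y}$ in constant depth; in the standard setting where $\mathcal{C}$ is the computational basis this is just a single layer of Hadamards. I would convert the bit-flip oracle $U$ into a phase oracle $\ket{y}\mapsto (-1)^{\mathbbm{1}_{y\in\mathcal{G}}}\ket{y}$ via the usual phase-kickback trick, by initialising the ancilla in $\ket{-}$ before applying $U$; this preserves constant depth since $U$ is constant depth by hypothesis.

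Next I would execute the $\mathO(1)$ Grover iterates, each consisting of one (phase-kickback) call to $U$ and one reflection about $\ket{\psi_0}$. The reflection about $\ket{\psi_0}$ decomposes (in the standard case) as $H^{\otimes n} R_0 H^{\otimes n}$, where $R_0$ multiplies $\ket{0}^n$ by $-1$. Such an $R_0$ is constant-depth in $\LAQCC$: using the $\text{OR}_n$ gate from Table~\ref{tab:Or_And_Equal}, compute $\text{OR}_n(y)$ into an ancilla, apply a single-qubit $Z$ conditioned on that ancilla, and uncompute. Composing a constant number of such constant-depth ingredients yields an $\LAQCC$ circuit, and the final state is, by the exactness of Lemma~\ref{lem:exact_grover}, precisely $\frac{1}{\sqrt{|\mathcal{G}|}}\sum_{y\in\mathcal{G}}\ket{y}$.

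The main obstacle is bookkeeping rather than conceptual. I must check that (i) the initial preparation of $\ket{\psi_0}$ and its inverse are both in $\LAQCC$ — trivial for $\mathcal{C}$ the computational basis, and otherwise essentially an added hypothesis that $\ket{\psi_0}$ is $\LAQCC$-preparable; and (ii) the final Grover iterate of Lemma~\ref{lem:exact_grover} may use a slightly modified rotation angle to cancel residual amplitude on $\mathcal{B}$ exactly, but this amounts to replacing the single-qubit phase on the ancilla with a different single-qubit rotation and does not affect constant-depth realisability. Everything else is just composition of $\mathO(1)$ gadgets already available in $\LAQCC$.
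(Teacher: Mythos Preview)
Your proposal is correct and follows essentially the same approach as the paper: invoke the exact amplitude-amplification lemma, prepare the uniform superposition via Hadamards, realise the phase oracle from $U$, implement the diffusion reflection via a constant-depth zero-detection gadget, and observe that $\mathO(\sqrt{1/c})=\mathO(1)$ iterations suffice. The only cosmetic difference is that the paper detects the all-zeros string with the $\mathrm{Exact}_0$ gate rather than $\mathrm{OR}_n$, and it likewise remarks that the final iterate replaces the $Z$ with an $R_Z$ rotation---exactly the point you flag in (ii).
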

\begin{proof}
Define $\ket{\mathcal{G}}=\frac{1}{\sqrt{|\mathcal{G}|}}\sum_{y\in\mathcal{G}}\ket{y}$ and $\ket{\mathcal{B}}=\frac{1}{\sqrt{|\mathcal{B}|}}\sum_{y\in\mathcal{B}}\ket{y}$. 
As $\mathcal{B}$ and $\mathcal{G}$ partition $\mathcal{C}$, it follows that $\braket{\mathcal{G}}{\mathcal{B}}=0$. Lemma~\ref{lem:exact_grover} implies the existence of a circuit that prepares the desired state. Below, we explicitly construct the circuit. 

First, prepare a uniform superposition $\sum_{i=0}^{2^n-1}\ket{i}$. 
Then, iteratively reflect over the state $\ket{\mathcal B}$ using $U$, and reflect over the uniform superposition state $\sum_{i=0}^{2^n-1}\ket{i}$. 
Both reflections have a $\LAQCC$ implementation and we only need to apply them a constant number of iterations. 

To reflect over the uniform superposition, we have to implement the operation $2\kb{s} - I$, with $\ket{s}=\frac{1}{\sqrt{N}}\sum_{i=0}^{2^n-1}\ket{i}$. 
To implement this operation, we first apply a layer of Hadamards, which implements a basis transformation mapping the uniform superposition state to the all zeroes state; Then apply the $\text{Exact}_0$-gate producing an output qubit that marks only the all zeroes-state and finally negate the output qubit and applies a $Z$-gate on it. Running this circuit in reverse, excluding the $Z$-gate, resets the output qubit and reverts the basis transformation. 
The last step of Lemma~\ref{lem:grover_constant_fraction} requires a reflection using an $R_Z$-gate (rotational $Z$-gate) instead of the $Z$-gate.
As the $\text{Exact}_0$-gate has an $\LAQCC$ implementation (see Table~\ref{tab:QFT_Hammingweight_Threshold}), this second inversion operation has a $\LAQCC$ implementation. 

The total number of iterations is $\mathO(\sqrt{N/m})$, where $N=|\mathcal{C}|$ and $m=|\mathcal{G}|$. 
As their fraction is the constant $c$, it follows that $\mathO(\sqrt{c})=\mathO(1)$ iterations are needed. 
\end{proof}

\subsection{Non-simulatability of \texorpdfstring{$\LAQCC$}{LAQCC}}
\label{sec:IQP_in_LAQCC}
Most of the power of $\LAQCC$ circuits seems to come from the classical intermediate calculations, which makes one wonder if these circuits are classically simulatable. 
Even if these circuits were indeed efficiently simulatable, they still have value as ``fast'' alternatives for state preparation.
However, it is unlikely that all $\LAQCC$ circuits can be simulated efficiently by a classical simulator. 
Lemma~\ref{lem:unitar_parallelization} and the inclusion of the fan-out gate in $\LAQCC$ show that circuits consisting of commuting gates have an $\LAQCC$ implementation and in particular, the class of Instantaneous Quantum Polynomial-time ($\mathsf{IQP}$) circuits, first introduced in~\cite{Shepherd2009}, has equivalent $\LAQCC$ implementations. 

\begin{definition}[Definition 2~\cite{Nakata_2014}]
An $\mathsf{IQP}$ circuit on $n$ qubits is a quantum circuit with the following structure: each gate in the circuit is diagonal in the Pauli-$Z$ basis, the input state is $\ket{+}^{\otimes n}$, and the output is the result of a measurement in the Pauli-X basis on a specified set of output qubits.
\end{definition}
\begin{lemma}
Any $\mathsf{IQP}$ circuit has an $\LAQCC$ implementation.
\end{lemma}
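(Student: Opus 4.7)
The plan is to exploit two facts already established in the paper: first, that every pair of gates in an $\mathsf{IQP}$ circuit commutes (both being diagonal in the Pauli-$Z$ basis), and second, that the quantum fan-out gate has an $\LAQCC$ implementation (Table~\ref{tab:Fanout_Perm}, via the Clifford-ladder Lemma~\ref{lem:clifford_ladder}). Together these let us invoke Lemma~\ref{lem:unitar_parallelization} to parallelize the diagonal middle layer into constant depth.

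Concretely, I would proceed in three steps. First, take an $\mathsf{IQP}$ circuit in its standard form $H^{\otimes n}\, D\, H^{\otimes n}$ applied to $\ket{0}^{\otimes n}$, where $D=\prod_i D_i$ is a product of gates each diagonal in the $Z$-basis (the final $H^{\otimes n}$ represents the change of basis for the $X$-measurement). The initial and final Hadamard layers are trivially in $\QNC^0$, so the only nontrivial part is implementing $D$ in constant depth. Second, observe that the $D_i$ are pairwise commuting (any two diagonal matrices commute), so Lemma~\ref{lem:unitar_parallelization} applies with the diagonalizing unitary $T = I$. The depth bound becomes $\max_i \mathrm{depth}(D_i) + 2$, which is $\mathO(1)$ provided each $D_i$ acts on a constant number of qubits (as is standard in the $\mathsf{IQP}$ model, where the allowed gates are objects like $e^{i\theta \prod_{j \in S} Z_j}$ for constant-size $S$). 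The construction uses a polynomial number of ancillas and a polynomial number of fan-out gates. Third, since the fan-out gate is in $\LAQCC$ and each $D_i$ is a constant-depth single-location diagonal gate, the entire parallelized block is an $\LAQCC$ subroutine, and concatenating with the two Hadamard layers yields the desired $\LAQCC$ circuit.

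The main technical point is that Lemma~\ref{lem:unitar_parallelization} is stated for \emph{controlled} versions $U_i^{x_i}$, whereas an $\mathsf{IQP}$ circuit applies its gates unconditionally; this is dispatched by simply hardwiring the control qubits to $\ket{1}$ (equivalently, the diagonal case reduces to: fan out $\ket{x}$ onto many copies, have each copy incur the diagonal phase $e^{i\phi_i(x)}$ of its assigned $D_i$ in parallel, then uncompute the fan-out so that the accumulated phases collapse back onto the original register). Because $T=I$ in the diagonal setting, this reduction introduces no additional diagonalizing overhead.

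The hardest part will be cleanly handling the locality constraints: the fan-out gate is inherently non-local on a grid, and the gates $D_i$ may act on subsets of qubits that are not nearest-neighbors. Both issues are, however, already subsumed by the Clifford-ladder/Clifford-grid machinery of Section~\ref{sec:clifford_circuits}: the corollary after Lemma~\ref{lem:clifford_ladder} shows that $\LAQCC$ effectively removes the locality restriction for any two-qubit Clifford operation, and the fan-out itself is a Clifford-ladder circuit. Thus the only verification required is that the bookkeeping of ancillas and the precomputed Pauli-correction linear maps stay within the $\NC^1$ classical budget of $\LAQCC$, which they do since the circuit description is fixed and all corrections are computable by a (uniform) binary linear map of polynomial size.
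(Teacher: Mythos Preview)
Your proposal is correct and follows essentially the same approach as the paper: apply a layer of Hadamards to prepare $\ket{+}^{\otimes n}$, invoke Lemma~\ref{lem:unitar_parallelization} (using the $\LAQCC$ fan-out gate) to parallelize the commuting $Z$-diagonal gates, then apply another Hadamard layer and measure. Your write-up is in fact more detailed than the paper's, which does not explicitly address the $T=I$ observation, the controlled-versus-unconditional point, or the locality bookkeeping.
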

\begin{proof}
The following $\LAQCC$ circuit prepares the desired state:
First prepare $\ket{+}^{\otimes n}$ by a single layer of Hadamard gates on all qubits. 
In this basis, all gates in with respect to the Pauli-$Z$ basis commute, and hence by Lemma~\ref{lem:unitar_parallelization}, we can parallelize all gates using poly$(n)$ ancilla qubits. 
Next, we can again apply a layer of Hadamard gates and finally measure the desired qubits. 
\end{proof}

\citeauthor{Bremner2010} showed that efficient weak classical simulation of all possible IQP circuits up to small multiplicative error implies a collapse of the polynomial hierarchy~\cite{Bremner2010}. Note that a circuit family is weakly simulatable if given the description of the circuit family, its output distribution can be sampled by purely classical means in poly$(n)$ time.

\begin{lemma}[Corollary 1~\cite{Bremner2010}]
If the output probability distributions generated by uniform families of $\mathsf{IQP}$ circuits could be weakly classically simulated to within multiplicative error $1 \leq c < \sqrt{2}$ then the polynomial hierarchy would collapse to the third level, in particular, $\mathsf{PH} = \Delta^p_3$.
\end{lemma}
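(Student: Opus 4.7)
The plan is to prove the lemma via the standard \emph{postselection argument} framework, combining three ingredients: (i) postselected $\mathsf{IQP}$ circuits capture postselected $\mathsf{BQP}$; (ii) Aaronson's theorem $\mathsf{PostBQP} = \mathsf{PP}$; and (iii) a hypothetical multiplicative weak classical simulator for $\mathsf{IQP}$ distributions would push every $\mathsf{PostIQP}$ computation into $\mathsf{PostBPP}$. Combined with the classical inclusion $\mathsf{PostBPP} \subseteq \mathsf{P}^{\Sigma^p_2} = \Delta^p_3$ and Toda's theorem $\mathsf{PH} \subseteq \mathsf{P}^{\mathsf{PP}}$, this forces $\mathsf{PH} = \Delta^p_3$.

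First I would establish $\mathsf{PostIQP} \supseteq \mathsf{PostBQP}$. This is the main technical step: although $\mathsf{IQP}$ circuits look weak (only diagonal gates in the $Z$ basis sandwiched between two Hadamard layers), they become universal for quantum computation once postselection is permitted. The standard approach uses gate-teleportation gadgets: each gate of a target $\mathsf{BQP}$ circuit is simulated by introducing fresh ancillas, applying only $Z$-diagonal operations to entangle them with the data register, and then postselecting on a specific measurement outcome that teleports the intended non-diagonal operation onto the computational wires. Composing these gadgets yields a $\mathsf{PostIQP}$ circuit simulating any $\mathsf{BQP}$ circuit, so together with $\mathsf{PostBQP} = \mathsf{PP}$ one obtains $\mathsf{PostIQP} \supseteq \mathsf{PP}$.

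Next I would carry out the multiplicative-error reduction. Suppose for contradiction that a weak classical simulator $S$ produces a distribution $\tilde p$ with $p(x)/c \le \tilde p(x) \le c\, p(x)$ for every outcome $x$ of the target $\mathsf{IQP}$ circuit, where $p$ is the true output distribution. Given a language $L \in \mathsf{PostIQP}$ decided by postselecting on some event $E$, a $\mathsf{PostBPP}$ algorithm samples from $\tilde p$ and postselects on the same event $E$. The conditional probability it outputs is distorted by at most a factor of $c^2$ compared to the true postselected distribution. Provided $c^2 < 2$, i.e.\ $c < \sqrt{2}$, the constant $\mathsf{PostBQP}$ decision bias (say $2/3$ vs.\ $1/3$) remains strictly above $1/2$ after distortion, so the $\mathsf{PostBPP}$ algorithm still decides $L$. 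Hence $\mathsf{PP} = \mathsf{PostBQP} \subseteq \mathsf{PostIQP} \subseteq \mathsf{PostBPP}$.

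Finally I would close via the classical collapse. Using the classical fact $\mathsf{PostBPP} = \mathsf{BPP}_\text{path} \subseteq \mathsf{P}^{\Sigma^p_2} = \Delta^p_3$ together with Toda's theorem $\mathsf{PH} \subseteq \mathsf{P}^{\mathsf{PP}}$, one obtains $\mathsf{PH} \subseteq \mathsf{P}^{\mathsf{PP}} \subseteq \mathsf{P}^{\Delta^p_3} = \Delta^p_3$, giving the desired collapse. The main obstacle is really the first step: the commuting-gate restriction of $\mathsf{IQP}$ is severe, so the gate-teleportation gadgets must be engineered so that (a) all non-Hadamard operations on the main register truly commute (can all be absorbed into a single diagonal layer), and (b) the postselection event retains at least inverse-polynomial probability, so that the multiplicative bias in the third step does not blow up under conditioning.
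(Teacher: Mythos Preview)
Your proof sketch is correct and follows exactly the original argument of Bremner, Jozsa and Shepherd. Note, however, that the present paper does not prove this lemma at all: it is stated as a citation (``Corollary 1~[Bremner2010]'') and used as a black box to derive the subsequent corollary about $\LAQCC$. So there is no ``paper's own proof'' to compare against; you have simply reconstructed the proof from the cited source, which is the standard postselection argument via $\mathsf{PostIQP} = \mathsf{PostBQP} = \mathsf{PP}$ together with $\mathsf{PostBPP} \subseteq \Delta^p_3$ and Toda's theorem.
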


\begin{corollary}
If the output probability distributions generated by uniform families of $\LAQCC$ circuits could be weakly classically simulated to within multiplicative error $1 \leq c < \sqrt{2}$ then the polynomial hierarchy would collapse to the third level, in particular, $\mathsf{PH} = \Delta^p_3$.
\end{corollary}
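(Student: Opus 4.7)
The plan is a direct reduction: combine the previous lemma, which shows every $\mathsf{IQP}$ circuit has an $\LAQCC$ implementation, with the Bremner--Jozsa--Shepherd hardness result for $\mathsf{IQP}$. The overall structure is a contrapositive argument: assume a hypothetical efficient weak classical simulator for $\LAQCC$ with multiplicative error in $[1, \sqrt{2})$, and use it to build a weak classical simulator for arbitrary $\mathsf{IQP}$ circuits with the same multiplicative error, which by the cited lemma forces $\mathsf{PH} = \Delta^p_3$.

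More concretely, I would proceed as follows. First, fix a uniform family $\{I_n\}$ of $\mathsf{IQP}$ circuits together with the specified set of output qubits. By the previous lemma, there is an $\LAQCC$ circuit $L_n$ whose output distribution on the designated output qubits, after tracing out ancillas, equals exactly the output distribution of $I_n$ in the Pauli-$X$ basis. The construction in the proof (Hadamards, parallelize via Lemma~\ref{lem:unitar_parallelization}, Hadamards, measure) is explicit and polynomial-time computable from the description of $I_n$, so the resulting family $\{L_n\}$ is itself uniform. Hence a description of $L_n$ can be produced from a description of $I_n$ in classical polynomial time.

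Next, suppose for contradiction that the output distribution of every uniform family of $\LAQCC$ circuits can be weakly classically simulated to within multiplicative error $c$ with $1 \le c < \sqrt{2}$. Apply this hypothetical simulator to $\{L_n\}$ and restrict the sampled bits to the designated output qubits of the underlying $\mathsf{IQP}$ circuit. Since the marginal distribution on those qubits agrees exactly with the $\mathsf{IQP}$ output distribution, the induced classical sampler is a weak simulator of $\{I_n\}$ with the same multiplicative error $c$. The cited Bremner--Jozsa--Shepherd lemma then yields $\mathsf{PH} = \Delta^p_3$, which is the claimed collapse.

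The main (and only) subtlety is ensuring that passing from the $\LAQCC$ simulator to an $\mathsf{IQP}$ simulator preserves the multiplicative-error guarantee, rather than merely an additive one. Because the marginal is exact and not approximate, this step is automatic: for every outcome $x$ on the output qubits, the simulated probability differs from the true $\mathsf{IQP}$ probability by at most the factor $c$ on each side. There is no need to reopen the Bremner--Jozsa--Shepherd argument; it applies as a black box once the uniform reduction $I_n \mapsto L_n$ is in hand.
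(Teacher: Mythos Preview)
Your proposal is correct and follows exactly the approach the paper intends: the corollary is stated without proof, as it is immediate from the two preceding lemmas (IQP $\subseteq$ $\LAQCC$ and the Bremner--Jozsa--Shepherd hardness of IQP). Your write-up is in fact more careful than the paper, explicitly checking uniformity of the reduction $I_n \mapsto L_n$ and that marginalizing onto the designated output qubits preserves the multiplicative-error guarantee.
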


\subsection{\texorpdfstring{$\LAQCC$}{LAQCC} containment in \texorpdfstring{$\QNC^1$}{QNC1}}
\label{sec:LAQCC_in_QNC1}
Let $A$ be an $\LAQCC$-circuit. 
We can write this circuit as a composition of unitary quantum layers $U_i$, measurements $M_i$ and classical calculation layers $C_i$:
$$A = M_k U_k C_k \dots M_i U_i C_i \dots M_1 U_1 C_1,$$
for some constant $k$. 
Any unitary $U_i$ is a $\QNC^0$ circuit and any $C_i$ is an $\NC^1$-circuit. 
The measurements $M_i$ can measure any subset of the qubits.
By the principle of deferred measurements, we can always postpone them to the end of the circuit using $CNOT$ gates and fresh ancilla qubits~\cite[Section~4.4]{nielsen_chuang_2010}, which gives the following lemma. 
\begin{lemma}
\label{lem:LAQCC_QNC1}
For any $\LAQCC$-circuit $A$ there is a $\QNC^1$-circuit $B$ without intermediate measurements that outputs the same state as $A$.
\end{lemma}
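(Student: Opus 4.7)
The plan is to make precise the informal argument already sketched immediately before the lemma, namely: apply the principle of deferred measurements, replace the classical control by quantum control, and bound the resulting depth. I would proceed in the following order.

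First, I would strip out intermediate measurements by the standard deferred-measurement trick: for every qubit $q$ that is measured at stage $i$, introduce a fresh ancilla $a_{i,q}$ initialised to $\ket0$, perform a $CNOT$ from $q$ to $a_{i,q}$ just before where $M_i$ used to act, and omit the actual measurement. Because every later use of the outcome is purely in the computational basis (it is fed into a classical $\NC^1$ circuit and then used as a control bit), this gives the same output state on the non-ancilla wires. These $CNOT$s can be applied in a single depth-$1$ layer per stage since the measured qubits are distinct.

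Second, I would turn each classical stage $C_i$ into a coherent quantum subroutine. Since $C_i$ is an $\NC^1$ circuit of bounded fan-in, a standard construction (replacing each AND/OR/NOT gate by a Toffoli or $X$-gate with a fresh ancilla to store the output, and leaving garbage behind) produces a reversible quantum circuit of depth $\mathO(\log n)$ and polynomial width that computes, on computational-basis inputs, the same Boolean function as $C_i$, with output bits stored in fresh ancillas. No uncomputation is required because ancillas are unbounded and the final state is only required to agree with $A$'s output state on the designated qubits.

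Third, I would replace every classically-controlled gate in layer $U_{i+1}$ by a quantum-controlled gate whose control qubit is the ancilla holding the (now coherent) output of $C_i$. Since $U_{i+1}$ was $\QNC^0$, attaching one additional quantum control per gate keeps the layer at constant depth (each two-qubit gate becomes a constant-depth three-qubit controlled gate acting on disjoint wires in parallel). Composing, stage $i$ contributes $\mathO(1)$ quantum depth from $U_i$ and the controls, plus $\mathO(\log n)$ from the coherent $C_i$, for a total of $\mathO(\log n)$ per stage. With $k=\mathO(1)$ stages the overall depth is $\mathO(\log n)$ and the width is polynomial, so $B \in \QNC^1$.

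The main thing to be careful about is the coherent implementation of the classical $\NC^1$ circuit: one must argue that the irreversibility of AND/OR/NOT gates and the presence of garbage ancillas do not corrupt the output state. This is routine once one observes that the control qubits entering each $C_i$ are themselves in a computational-basis product state (conditioned on earlier computational-basis wires), so the reversible simulation produces a product of the correct classical outcome and some garbage in the ancilla registers, which factorises out of the state on the output qubits and can be left untouched. Everything else is bookkeeping about depth and width, and I expect the actual written proof to be a single paragraph that invokes deferred measurement and the inclusion of classical $\NC^1$ into reversible $\QNC^1$.
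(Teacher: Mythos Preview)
Your proposal is correct and follows essentially the same route as the paper: defer measurements via $CNOT$s to fresh ancillas, replace each classical $\NC^1$ layer by a reversible log-depth quantum circuit (the paper invokes a separate lemma for this, proved exactly as you sketch with Toffolis and $X$-gates), and observe that a constant number of such stages yields overall $\mathO(\log n)$ depth. Your discussion of garbage and factorisation is more explicit than the paper's, which simply appeals to deferred measurement and the determinism of $\LAQCC$ circuits, but the underlying argument is the same.
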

\begin{proof}
The $\LAQCC$-circuit $A$ contains classical computation layers $C_i$ that use the intermediate measurement results as input. 
These measurements can be delayed until the end of the circuit by applying a CNOT from the qubit to a fresh ancilla qubit.
This replaces the classical output wires by quantum wires. 

Lemma~\ref{lem:NC1toQauntum} shows that any $\NC^1$-circuit can be run on a log-depth quantum circuit with $\mathO(\mathrm{poly}(n))$ ancilla qubits. 
Hence, a $\QNC^1$-circuit without topology constraints can take the role of the classical intermediate circuits $C_i$.

Now, let $V_i$ be the quantum circuit implementing $C_i$.
Then the quantum circuit
$$B = U_k V_k \dots U_1 V_1$$
is a quantum circuit of logarithmic depth simulating $A$.
\end{proof}

\subsection{Complexity results for \texorpdfstring{$\LAQCC(\mathcal{Q}, \mathcal{C},d)$}{LAQCC(Q,C,d)}}
\label{sec:Complexity results}

In the definition of $\LAQCC(\mathcal{Q}, \mathcal{C},d)$, we have freedom to choose $\mathcal{Q}$ and $\mathcal{C}$. 
If we give more power to both the quantum and the classical routines, we see that we can solve more complex problems and prepare a wider variety of quantum states. 
Yet, even with polynomial depth quantum circuits and unbounded classical computational power, limits exist.

\begin{notation}
The class $\LAQCC^*$ is the instantiation $\LAQCC(\mathsf{QPoly}(n), \mathsf{ALL}, \mathrm{poly}(n))$:
The class of polynomially many alternating polynomial-sized quantum circuits and arbitrary powerful classical computations, together with feed-forward of the classical information to future quantum operations. 
The quantum computations are restricted to all single-qubit gates and the two-qubit CNOT gate. 
\end{notation}
This directly gives us a definition of $\mathsf{State}$$\LAQCC^*_{\varepsilon}$ for $\varepsilon>0$. 
\begin{remark}
Note that for any non-zero $\varepsilon$, we can restrict ourselves to finite universal gate-sets. 
The Solovay-Kitaev theorem~\cite{Kitaev1997,nielsen_chuang_2010} says that any multi-qubit unitary can be approximated to within precision $\delta$ by a quantum circuit with size depending on $\delta$. 
Therefore, with a finite universal gate-set, any $\LAQCC^*$ circuit with a continuous gate-set can be approximated by an $\LAQCC^*$ circuit with gates from the finite set. 
\end{remark}

Next, we prove $\mathsf{State}$$\LAQCC^*_{\varepsilon}\subseteq\mathsf{StatePostQPoly}_{\varepsilon}$. 
Following the same decomposition as in the previous section, we find that any $\LAQCC^*$ can be written as 
\begin{equation}
    \Pi_{i=0}^{\text{poly}(n)} M_i U_i(y_i) C_i(x_i)\ket{0}^{\otimes \text{poly}(n)},
\end{equation}
where again, $M_i$ denotes the $i$-th measurement layer, $U_i$ the $i$-th quantum layer and $C_i$ the $i$-th unbounded classical computation layer. 
The $x_i\in\{0,1\}^*$ denote the outcomes of $M_i$ and $y_i\in \{0,1\}^*$ the bitstring outputted by $C_i$. 
Note, all $x_i$ and $y_i$ have length at most polynomial in $n$. 

\begin{theorem}
\label{thm:state_prep_post-BQP}
It holds that $\mathsf{State}$$\LAQCC^*_{\varepsilon}\subseteq\mathsf{StatePostQPoly}_{\varepsilon}$.
\end{theorem}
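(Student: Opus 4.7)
The plan is to simulate any $\LAQCC^*$ circuit that deterministically prepares a state $\ket{\phi}$ by a polynomial-size quantum circuit with a single post-selection qubit, producing the same $\ket{\phi}$ conditional on that qubit being $\ket{1}$. The two key tools are the principle of deferred measurement, and the fact that $\mathsf{State}$-classes impose no uniformity constraint on the simulating circuit, so arbitrary classical precomputation is allowed at design time.

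Concretely, I would write the given $\LAQCC^*$ circuit as $\prod_{i=1}^{k} M_i U_i(y_i)$ with $k = \operatorname{poly}(n)$, intermediate measurement outcomes $x_i$, and classical control outputs $y_i = C_i(x_1,\ldots,x_i)$. Since the circuit deterministically outputs $\ket{\phi}$, every measurement trajectory $(x_1,\ldots,x_k)$ with strictly positive probability yields the same final state. I would fix any one such trajectory $(x_1^*,\ldots,x_k^*)$; at least one exists because the circuit runs to completion with total probability one. Then, using the unbounded power of $\mathsf{ALL}$ at circuit-design time, I would evaluate $y_i^* := C_i(x_1^*,\ldots,x_i^*)$ for every $i$, and hardcode these fixed bit strings as classical control parameters into the gates of the corresponding $U_i$.

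The resulting $\mathsf{PostQPoly}$ circuit then applies each $U_i(y_i^*)$ in sequence on the system register and replaces every measurement $M_i$ by CNOTs from the measured qubits onto fresh ancilla qubits, leaving them coherent. A single extra flag qubit, initialised to $\ket{1}$, is flipped to $\ket{0}$ whenever some ancilla disagrees with its intended target value $x_i^*$; this can be done with polynomially many bit-wise equality comparisons feeding into the flag. Conditioning on the flag being $\ket{1}$ is precisely projection onto the branch $(x_1^*,\ldots,x_k^*)$ of the deferred-measurement version of the original circuit, which by determinism yields the state $\ket{\phi}$. Since $k = \operatorname{poly}(n)$ and each $U_i$ is in $\mathsf{QPoly}(n)$, the full construction has polynomial size, and the inner product with the target $\ket{\psi}$ conditional on the flag being $\ket{1}$ is the same $|\braket{\phi}{\psi}| \ge 1-\varepsilon$ achieved by the original $\LAQCC^*$ circuit.

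The main obstacle is that the classical functions $C_i$ have unbounded power and in general are not even computable, let alone by a polynomial-size quantum circuit, so they cannot be implemented coherently inside the $\mathsf{PostQPoly}$ circuit. The hardcoding trick sidesteps this entirely: the values $y_i^*$ are determined only once, outside the circuit and during design time, which is legitimate because $\mathsf{StatePostQPoly}_\varepsilon$ asks merely for the \emph{existence} of a polynomial-size quantum circuit producing (a state close to) $\ket{\psi}$, with no constraint on how that circuit is generated.
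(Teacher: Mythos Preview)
Your proposal is correct and takes essentially the same approach as the paper: fix one measurement trajectory $(x_i^*)$, hardwire the resulting classical outputs $y_i^*$ into the unitaries, replace each measurement by a coherent equality check against $x_i^*$, AND all the check bits into a single flag, and post-select on that flag. Your write-up is in fact more explicit than the paper's about why hardcoding is legitimate (non-uniformity of $\mathsf{StatePostQPoly}$) and why any single positive-probability trajectory suffices (determinism of the $\LAQCC^*$ circuit), but the construction is the same.
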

\begin{proof}
Fix $\varepsilon>0$ and a positive integer $n$ and 
let $\ket{\psi}\in\mathsf{State}$$\LAQCC^*_{\varepsilon}$. 
By definition, there exists an $\LAQCC^*$ circuit $A=\Pi_{i=0}^{\text{poly}(n)} M_i U_i(y_i) C_i(x_i)$, which prepares a state $\ket{\phi}$ with inner product at least $1-\varepsilon$ with $\ket{\psi}$.

Then consider the following $\mathsf{PostQPoly}$-circuit:
Let $B=\Pi_{i=0}^{\text{poly}(n)} \text{Equal}_{x_i}(x_i)U_i(y_i)\ket{0}^{\otimes \text{poly}(n)}$, where the $y_i$ are hardwired. 
The Equal$_{x_i}$ gate replaces the measurement layer $M_i$, by checking if the subset of qubits that would be measured are in $\ket{x_i}$ computational basis state. It stores the output in an ancilla qubit. 
As a last step, apply an $\text{AND}_{\text{poly}(n)}$-gate on the ancilla qubits, which hold the outputs of the Equal$_{x_i}$ gates, and store the result in an ancilla qubit. 
Conditional on this last ancilla qubit being one, the circuit prepares the state $\ket{\phi}$.
\end{proof}

Figure~\ref{fig:laqcc_to_post_bqp} gives a schematic overview of the proof and the translation of an $\LAQCC^*$ circuit in a $\mathsf{PostQPoly}$-circuit. 
\begin{figure}[h]
\centering
\includegraphics[width=0.8\textwidth]{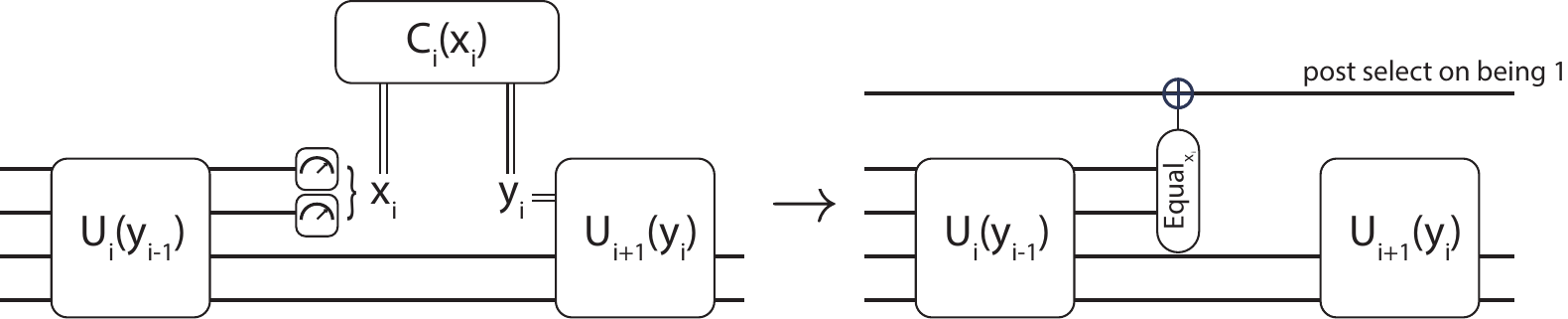}
\caption{Schematic idea of transforming an $\LAQCC^*$ circuit for generating $\ket{\psi}$ into a $\mathsf{PostQPoly}$ circuit.}
\label{fig:laqcc_to_post_bqp}
\end{figure}

\paragraph{Further complexity studies of \texorpdfstring{$\LAQCC$}{LAQCC}:}
An interesting direction for future work is to extend the inclusion proved above to a true separation or an oracular separation. 
One approach is to use a similar oracle as used in~\cite{AaronsonKuperberg:2007} to separate $\mathsf{QMA}$ and $\mathsf{QCMA}$ with respect to an oracle and use a counting argument to argue that $\mathsf{State}$${\LAQCC^*_{\varepsilon}}^{\mathcal{O}}\neq \mathsf{StatePostQPoly}_{\varepsilon}^{\mathcal{O}}$, for some oracle $\mathcal{O}$ and $\varepsilon=1-\frac{1}{\text{poly}(n)}$.

Another interesting direction for studying the state complexity of $\LAQCC$ is in the fact that the $\LAQCC$ model allows for a constant number, more than one, of rounds of measurements and corrections. This was required for our three new state generation protocols. However other models considered only one round of measurements and corrections, for instance in the paper \cite{Cirac:2021}. One may wonder if there is a hierarchy of model power allowing one or multiple measurements, and if there is a way to reduce the number of measurements rounds. A starting effort towards classifying types of states based on such a hierarchy can be found in \cite{Tantivasadakarn_2023}. It would be interesting to see a more extensive complexity theoretic analysis comparing models with different number of allowed rounds.

\section{State preparation in \texorpdfstring{$\LAQCC$}{LAQCC}}\label{sec:state_prep_in_LAQCC}
In this section we consider what quantum states we can prepare using an $\LAQCC$ circuit beyond the stabilizer states and Clifford circuits discussed in the previous section. 
Specifically, as mentioned in the introduction, we consider quantum states that are widely used, in other quantum algorithms, for bench marking purposes and within physics. 
First, we show how to create a uniform superposition of computational basis states up to size $q$, where $q$ is not a power of $2$, a state that is often used as initial state in other algorithms (including the following other state preparation protocols presented in this work). 
We then use this procedure to create $W$-states, the uniform superposition over all $n$-bitstrings of Hamming-weight $1$, using a compress-uncompress method. 
This compress-uncompress method generalizes to preparing Dicke-$(n,k)$ states for $k=\mathO(\sqrt{n})$, uniform superpositions over all $n$-bitstrings of Hamming-weight $k=\mathO(\sqrt{n})$. 
Dicke states find many applications, and especially the compress-uncompress approach might prove useful for entanglement distillation protocols.
Preparing general Dicke-$(n,k)$ states requires a novel method to map between two integer representation systems, the factoradic representation and the combinatorial number representation. 
Finally, we present a state preparation protocol for quantum many-body scar states, states often used in physics, based on the Dicke-$(n,k)$ state preparation protocol for $k=\mathO(\sqrt{n})$

\subsection{Uniform superposition of size \texorpdfstring{$q$}{q}\label{sec:superposition_modulo_q}}
The uniform superposition is often used as initial state in other quantum algorithms. 
A simple Hadamard gate applied to $n$ qubits prepares the uniform superposition $\frac{1}{\sqrt{2^n}}\sum_{i=0}^{2^n-1}\ket{i}$.
Preparing the state $\frac{1}{\sqrt{q}}\sum_{i=0}^{q-1}\ket{i}$, the superposition up to size $q$, is already harder for arbitrary $q$. 

A simple probabilistic approach works as follows: 
1) create a superposition $\frac{1}{\sqrt{2^n}}\sum_{i=0}^{2^n-1}\ket{i}$  with $n = \ceil{\log_2(q)}$ qubits;
2) mark the states $i< q$ using an ancilla qubit;
3) measure this ancilla qubit.
Based on the measurement result, the desired superposition is found, which happens with probability at least one half. 

The next theorem modifies this probabilistic approach to a protocol that deterministically prepares the uniform superposition modulo $q$ in $\LAQCC$.
\begin{theorem}
\label{thm:uniform_superposition_mod_q}
There is a deterministic $\LAQCC$ circuit that prepares the uniform superposition of size $q$. This circuit requires $\mathO(\ceil{\log_2(q)}^2)$ qubits.
\end{theorem}
\begin{proof}
Let $n = \ceil{\log_2(q)}$ and define $\mathcal{G}=\{i\mid 0\le i<q\}$ and $\mathcal{B}=\{i\mid q\le i\le2^n-1\}$. 
Construct the unitary
$$U_q:\ket{y}\ket{b}\mapsto\begin{cases}
\ket{y}\ket{b\oplus 1} & \text{if } y<q \\
\ket{y}\ket{b} & \text{if } y\ge q
\end{cases}.$$
The Greaterthan-gate of Table~\ref{tab:Add_Equality_Greaterthan} implements the operator $U_q$, note that this gate requires $\mathO(n^2)$ qubits. 

As $|\mathcal{G}|/2^n \ge 1/2$ and known, applying Lemma~\ref{lem:grover_constant_fraction} with the sets $\mathcal{G}$ and $\mathcal{B}$ and the constant-depth implementation of $U_q$, gives a $\LAQCC$ algorithm that boosts the amplitude of $\ket{\mathcal{G}}$ to~$1$.
\end{proof}

\begin{remark}
Note that in Lemma~\ref{lem:grover_constant_fraction} it was implicitly assumed that $|\mathcal G| + |\mathcal B|$ is a power of two (allowing for a simple reflection over the uniform superposition state). This $\LAQCC$ implementation of creating a uniform superposition modulo any $q$ removes this requirement.
\end{remark}
\subsection{\texorpdfstring{$W$}{W}-state in \texorpdfstring{$\LAQCC$}{LAQCC}}
\label{sec:W_state_in_LAQCC}
In this section we consider the $W_n$-state and how to prepare this state in $\LAQCC$. 
The $W_n$-state is a uniform superposition over all $n$-qubit states with a single qubit in the $\ket{1}$-state and all others in the $\ket{0}$-state:
$$\ket{W_n}  = \frac{1}{\sqrt{n}}\sum_i \ket{e_i},$$
where $\ket{e_i}$ is the state with a one on the $i$-th position and zeroes elsewhere. 

A first observation is that the $W$-state can be seen as a one-hot encoding of a uniform superposition over $n$ elements. 
We can label the $n$ states with non-zero amplitude of the $W$-state with an index. More precisly, we want to find circuits that implement the following map:
\begin{align}
\label{eqn:i_to_ei}
    \ket{i}\ket{0}\mapsto \ket{0}\ket{e_i},
\end{align}
with $i$ an index and $e_i$ the one-hot encoding of $i$.
This index -- which equals the position of the $1$ -- compresses the representation from $n$ to $\log(n)$ bits. 
This compression naturally defines two operations: 
\begin{align}
\text{\textbf{Uncompress}: }& \ket{i}_{\log(n)}\ket{0}_{n} \mapsto \ket{i}_{\log(n)}\ket{e_i}_{n}, \\
\text{\textbf{Compress}: }& \ket{i}_{\log(n)}\ket{e_i}_{n} \mapsto \ket{0}_{\log(n)}\ket{e_i}_{n}.
\end{align}
Implementing both and combining them implements Mapping~\ref{eqn:i_to_ei} giving an efficient $W$-state preparation protocol. 

The \textbf{Compress} and \textbf{Uncompress} operations map between a one-hot and binary representation of an integer $i$. We call the registers containing the binary representation index registers, and the register containing the one-hot representation the system register. The index registers serve as ancilla qubits and the $W$-state is prepared in the system register. 

\begin{lemma}
\label{lem:uncompress}
There exists an $\LAQCC$ circuit that, for any $n$, implements the \textbf{Uncompress} operation:
\[
\frac{1}{\sqrt{n}}\sum_{i = 0}^{n-1}\ket{i}_{\log(n)}\ket{0}_n \mapsto \frac{1}{\sqrt{n}}\sum_{i = 0}^{n-1}\ket{i}\ket{e_i}_n.
\]
This circuit uses $\mathO(n \log(n))$ qubits placed in a grid pattern of size $n\times \log(n)$.
\end{lemma}
\begin{proof}
One column of the grid of length $n$ consists of system qubits placed in a line.
Adjacent to this line are $\log(n)$ index qubits. 
The left grid in Figure~\ref{fig:W_state_uncompress} shows the initial layout. 
The same figure also shows the steps to prepare the $W$-state in the system qubits. 
\begin{align*}
\frac{1}{\sqrt{n}}\sum_{i = 0}^{n-1}\ket{i}_{\log(n)}\ket{0}_{\log(n)}^{\otimes n-1}\ket{0}_n & \xrightarrow{(1)} \frac{1}{\sqrt{n}}\sum_{i = 0}^{n-1}\ket{i}_{\log(n)}^{\otimes n}\ket{0}_n \\
    								   & \xrightarrow{(2)} \frac{1}{\sqrt{n}}\sum_{i = 0}^{n-1}\ket{i}_{\log(n)}^{\otimes n}\ket{e_i}_n \\
     								   & \xrightarrow{(3)} \frac{1}{\sqrt{n}}\sum_{i = 0}^{n-1}\ket{i}\ket{0}^{\otimes n-1}\ket{e_i}_n 
\end{align*}
Step (1) uses fanout-gates to create a fully entangled state between the different index registers.
Step (2) applies $\text{Equal}_i$ gates in parallel from each index register to its corresponding system qubit to create the state $\ket{e_i}$ in the system register. 
Step (3) uses fanout-gates to disentangle and reset the index registers. 
Combined the \textbf{Uncompress} operation maps  $\frac{1}{\sqrt{n}}\sum_{i = 0}^{n-1}\ket{i}_{\log(n)}\ket{0}_{\log(n)}^{\otimes n-1}\ket{0}_n \mapsto \frac{1}{\sqrt{n}}\sum_{i = 0}^{n-1}\ket{i}\ket{0}^{\otimes n-1}\ket{e_i}_n$ as required.
\end{proof}
\begin{figure}[h!]
\includegraphics[width=\textwidth]{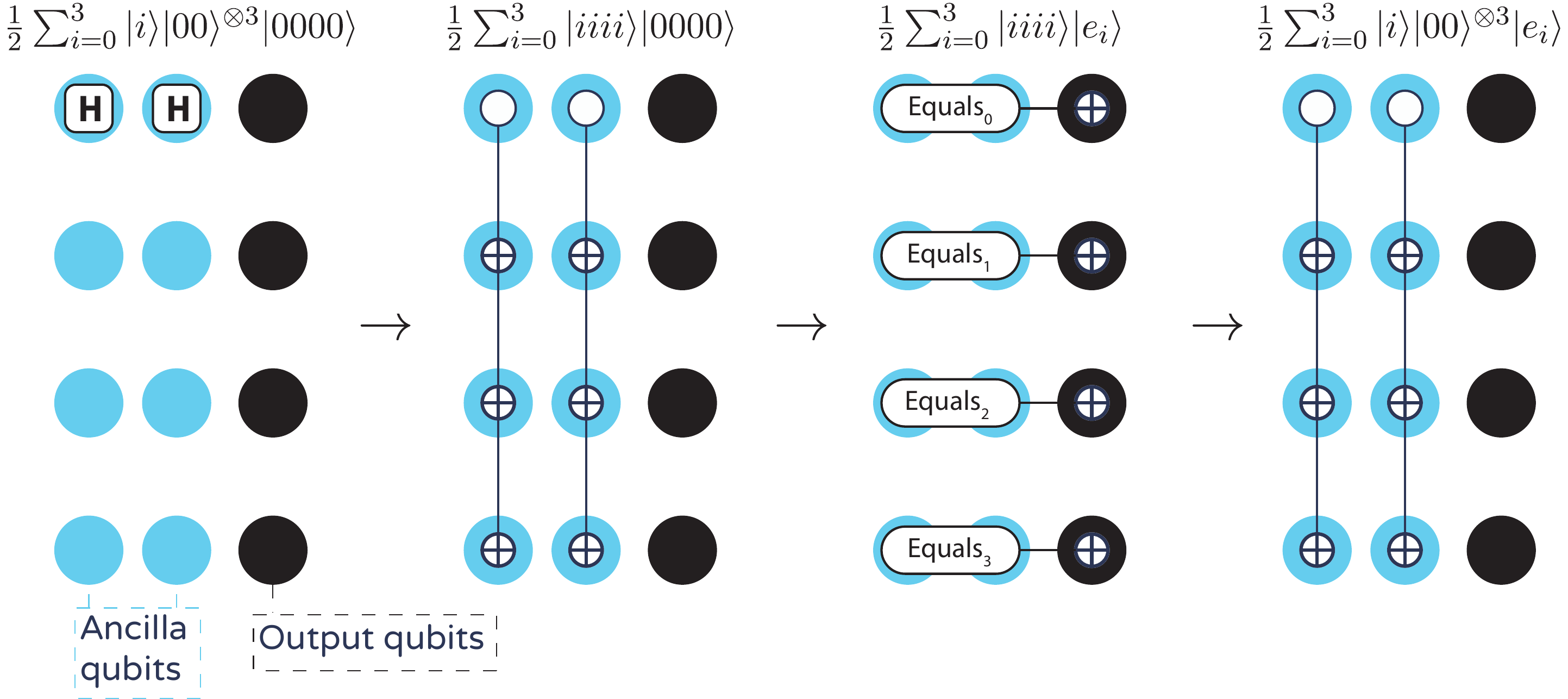}
\caption{Circuit for the \textbf{Uncompress} operation for $n=4$. 
Shown is a grid of $12$ qubits: $8$ blue index qubits, and $4$ black system qubits. 
Each of the four grids represents a single timeslice in the \textbf{Uncompress} operation.}
\label{fig:W_state_uncompress}
\end{figure}

\begin{lemma}
\label{lem:compress}
There exists an $\LAQCC$ circuit that, for any $n$, implements the \textbf{Compress} operation:
\[
\frac{1}{\sqrt{n}}\sum_{i = 0}^{n-1}\ket{i}_{\log(n)}\ket{e_i}_n \mapsto \frac{1}{\sqrt{n}}\sum_{i = 0}^{n-1}\ket{0}\ket{e_i}_n.
\]
This circuit uses $\mathO(n \log(n))$ qubits placed in a grid pattern of size $n\times \log(n)$.
\end{lemma}
\begin{proof}
To implement \textbf{Compress}, the index registers are uncomputed using parallel $CNOT$-operations, controlled by the system register. 
These controlled gates commute for different indices in the system register and hence by Lemma~\ref{lem:unitar_parallelization} a parallel circuit for the uncomputation exists. The \textbf{Compress} operation, also shown in Figure~\ref{fig:W_state_compress}, consists of the operations:
\begin{align*}
\frac{1}{\sqrt{n}}\sum_{i = 0}^{n}\ket{i}_{\log(n)}\ket{0}_{\log(n)}^{\otimes n - 1}\ket{e_i}_n & \xrightarrow{(1)} \frac{1}{n}\sum_{i, j = 0}^{n}(-1)^{i \cdot j}\ket{j}_{\log(n)}\ket{0}_{\log(n)}^{\otimes n - 1}\ket{e_i}_n \\
    & \xrightarrow{(2)} \frac{1}{n}\sum_{i, j = 0}^{n}(-1)^{i \cdot j}\ket{j}_{\log(n)}^{\otimes n}\ket{e_i}_n \\
    & \xrightarrow{(3)} \frac{1}{n}\sum_{i, j = 0}^{n}\ket{j}_{\log(n)}^{\otimes n}\ket{e_i}_n \\
    & \xrightarrow{(4)} \frac{1}{n}\sum_{i, j = 0}^{n}\ket{j}_{\log(n)}\ket{0}_{\log(n)}^{\otimes n-1}\ket{e_i}_n \\
    & \xrightarrow{(5)} \frac{1}{\sqrt{n}}\sum_{i =0 }^{n}\ket{0}_{\log(n)}^{\otimes n}\ket{e_i}_n 
\end{align*}
Step (1) applies Hadamard gates to the first index register, changing from the computational to the Hadamard basis, in which the $NOT$-operation is diagonal;
Step (2) uses fanout-gates to create a fully entangled state in the index registers;
Step (3) applies controlled-$Z$ gates, controlled by the system qubit $i$ and with targets the qubits in the $i$-th index register corresponding to the ones in the binary representation of $i$;
Step (4) disentangles the index registers using fanout-gates; 
and, Step (5) applies Hadamard gates to clean the index register.

The controlled-$Z$ gates in Step (3) apply phases that precisely cancel the phases already present, which disentangles the index registers from the system register. 
\end{proof}
\begin{figure}[h]
\includegraphics[width=\textwidth]{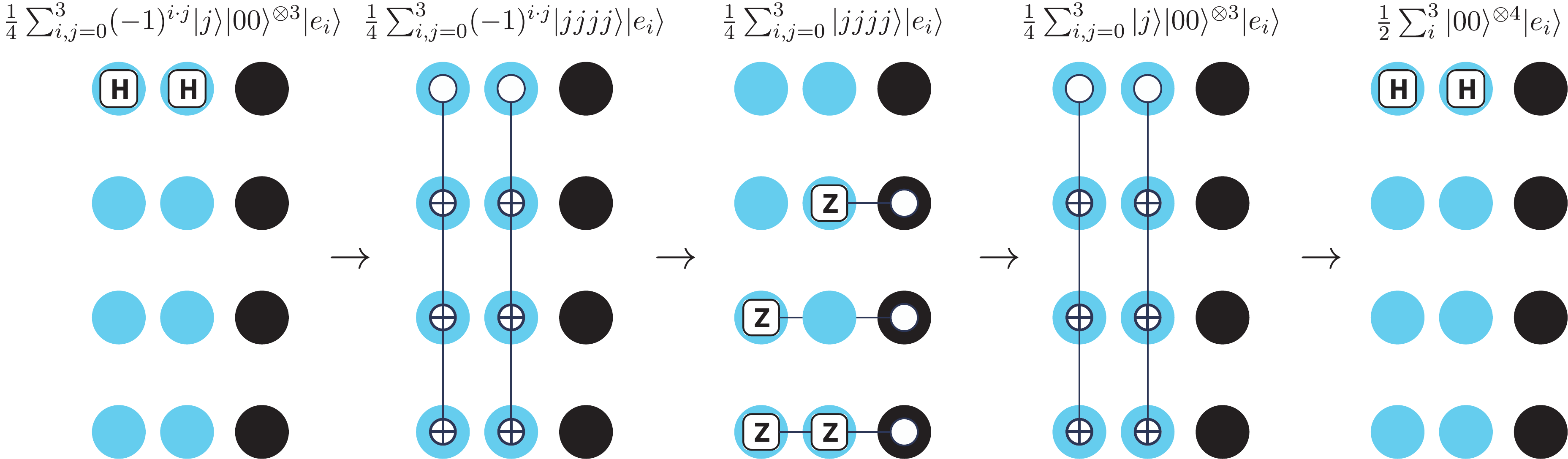}
\caption{Circuit for the \textbf{Uncompress} operation for $n=4$. 
Shown is a grid of $12$ qubits: $8$ blue index qubits, and $4$ black system qubits. 
Each of the four grids represents a single timeslice in the \textbf{Compress} circuit.}
\label{fig:W_state_compress}
\end{figure}

\begin{theorem}
\label{thm:W_state}
There exists a circuit in $\LAQCC$ that prepares the $\ket{W_n}$ state. This circuit requires $\mathO(n\log(n))$ qubits placed in a grid of size $n\times \log(n)$.
\end{theorem}

\begin{proof}
The circuit combines the circuits of Theorem~\ref{thm:uniform_superposition_mod_q}, Lemma~\ref{lem:uncompress} and Lemma~\ref{lem:compress}.
It consists of three steps:
\begin{align*}
\ket{0}^{\otimes n}_{\log(n)}\ket{0}_n &\xrightarrow[]{(1)} \frac{1}{\sqrt{n}}\sum_{i = 0}^{n-1}\ket{i}\ket{0}^{\otimes n-1}\ket{0}\\
        &\xrightarrow[]{(2)} \frac{1}{\sqrt{n}}\sum_{i = 0}^{n-1}\ket{i}\ket{0}^{\otimes n-1}\ket{e_i} \\
        &\xrightarrow[]{(3)}\frac{1}{\sqrt{n}}\sum_{i =0 }^{n}\ket{0}^{\otimes n}\ket{e_i} 
\end{align*}
Step one prepares the uniform superposition over indices, this can be done either by applying a layer of Hadamard gates, if $n$ is a power of $2$, requiring $\mathO(log(n))$ qubits,  or using Theorem~\ref{thm:uniform_superposition_mod_q} if $n$ is not a power of $2$ requiring $\mathO(\ceil{\log(n)}^2)$ qubits; Step (2) is by Lemma~\ref{lem:uncompress} and requires $\mathO(n\log(n))$ qubits; and, Step(3) is by Lemma~\ref{lem:compress} and requires $\mathO(n\log(n))$ qubits. 
\end{proof}

\subsection{Dicke states for small \texorpdfstring{$k$}{k}}
\label{sec:dicke:small_k}
In this section we generalize our method of preparing the $\ket{W}$-state in Theorem~\ref{thm:W_state} to a more general set of states, Dicke states. 
A Dicke-$(n,k)$ state is the uniform superposition over bitstrings of Hamming weight $k$ and length $n$ (which we again assume to be a power of $2$ for simplicity): 
\begin{align}
    \ket{D_k^n} = \binom{n}{k}^{-1/2}\sum_{x \in \{0,1\}^n: |x| = k} \ket{x}.
\end{align}
For $k=1$, this state is precisely the $W$-state. 
There exists an efficient deterministic method to prepare a $\ket{D_k^n}$ state that requires a circuit of width $\mathO(n)$ and depth $\mathO(n)$, independent of $k$~\cite{bartschi2019deterministic}. 
This methods starts from the $\ket{1}^{\otimes k}\ket{0}^{\otimes k - n}$ state and relies on a recursive formula for the Dicke state
\begin{align*}
    \ket{D_k^n} = \alpha_{k,n} \ket{D_k^{n-1}}\otimes \ket{0} + \beta_{k,n} \ket{D_{k-1}^{n-1}}\otimes \ket{1}.
\end{align*}
This relation implies a protocol that is inherently sequential, which is unsuited for an $\LAQCC$ implementation. 

Instead, we present an $\LAQCC$ approach similar to the $W$-state preparation protocol. 
We apply the \textbf{Uncompress} operation of the $W$-state in parallel to put $k$ ones into the bitstring. 
This method allows for the preparation of Dicke states with $k=\mathO(\sqrt{n})$, using $\mathO(n^2 \log(n)^3)$ qubits. The bound on $k$ comes from the fact that using the \textbf{Uncompress} operation in parallel might cause overlaps to where the $1$'s are put into the system register. Having two ones in the same system qubit in effect negates the \textbf{Uncompress} operation. 
Following the lines of the birthday paradox, we find that overlaps between different indices happen not that often for $k = \mathO(\sqrt{n})$.
Lemma~\ref{lem:grover_constant_fraction} allows us to boost the amplitudes and make the protocol deterministic.

Again, consider two groups of qubits: Index registers with $\log(n)$ qubits each; 
and, system registers of $n$ qubits each. 
Contrary to the $W$-state, the Dicke state requires multiple system registers during the preparation. 
The state is prepared in only one system register. 
Denote the index registers with subscripts $i_1$ up to $i_k$ and the system registers with $s_1$ up to $s_n$. 
For clarity, these indices may be omitted if it is clear from the context. 

The algorithm consists of four steps:
\begin{enumerate}
    \item \textbf{Filling}: $\ket{0}_{i_1}\dots\ket{0}_{i_k}\ket{0}_{s_1} \rightarrow \frac{1}{n^{k/2}}\sum_{j_1, \dots, j_k = 0}^{n-1} \ket{j_1}_{i_1} \dots \ket{j_k}_{i_k}\ket{e_{j_1} \oplus \dots \oplus e_{j_k} }_{s_1}$
    \item \textbf{Filtering}: $\rightarrow \sqrt{\frac{(n-k)!}{n!}}\sum_{j_1 \neq \dots \neq j_k}^{n-1} \ket{j_1} \dots \ket{j_k}\ket{e_{j_1} \oplus \dots \oplus e_{j_k}}$
    \item \textbf{Ordering}: $\rightarrow \frac{1}{\sqrt{\binom{n}{k}}}\sum_{j_1 < \dots < j_k}^{n-1} \ket{j_1} \dots \ket{j_k}\ket{e_{j_1} \oplus \dots \oplus e_{j_k}}$
    \item \textbf{Cleaning}: $\rightarrow \frac{1}{\sqrt{\binom{n}{k}}}\sum_{j_1 < \dots < j_k}^{n-1} \ket{0} \dots \ket{0}\ket{e_{j_1} \oplus \dots \oplus e_{j_k}}$
\end{enumerate}
Note that after \textbf{Filling} there is a multiplicity in states. First, \textbf{Filtering} removes those states in which different indices $j_l$ are the same, resulting in an incorrect state in the $s_1$ register. 
Second, \textbf{Ordering} removes the multiplicity from having multiple permutations of the index registers creating the same state in the $s_1$ register, by forcing a choice of ordering on the indices.
These two steps give a unique pairing between index registers and system registers allowing the operation \textbf{Cleaning}.

We will now proof that these four steps are achievable in $\LAQCC$ and explicitly visualize the corresponding circuits for $n=4$ and $k=2$. 

\begin{lemma}
\label{lem:Dicke_filling}
An $\LAQCC$ circuits exists that implements \textbf{Filling}:
$$\ket{0}_{i_1}\dots\ket{0}_{i_k}\ket{0}_{s_1} \rightarrow \frac{1}{n^{k/2}}\sum_{j_1, \dots, j_k = 0}^{n-1} \ket{j_1}_{i_1} \dots \ket{j_k}_{i_k}\ket{e_{j_1} \oplus \dots \oplus e_{j_k} }_{s_1}.$$ 
This circuit uses $\mathO(k n\log(n))$ qubits.
\end{lemma}
\begin{proof}
To achieve a circuit implementing \textbf{Filling} we use \textbf{Uncompress} from Lemma~\ref{lem:uncompress} $k$ times in parallel.
Note that two \textbf{Uncompress} operations commute, hence by Lemma~\ref{lem:unitar_parallelization} $k$ \textbf{Uncompress} operations can be implemented in parallel.
Each of these parallel operations requires an index register, a system register and $\mathO(n\log(n))$ extra ancilla qubits.

The corresponding circuit consists of five steps: 
\begin{align*}
\ket{0}_{i_1} \dots \ket{0}_{i_k} \ket{0}_{s_1}\dots\ket{0}_{s_k} &\xrightarrow{(1)} \frac{1}{n^{k/2}}\sum_{j_1\dots j_k = 0}^{n-1}\ket{j_1}\dots \ket{j_k} \frac{1}{\sqrt{2^n}}\sum_{l=0}^{2^n - 1}\ket{l}_{s_1} \ket{0}_{s_2} \dots \ket{0}_{s_k}
\\
&\xrightarrow{(2)} \frac{1}{n^{k/2}}\sum_{j_1\dots j_k = 0}^{n-1}\ket{j_1}\dots \ket{j_k} \frac{1}{\sqrt{2^n}}\sum_{l=0}^{2^n - 1}\ket{l}_{s_1} \ket{l}_{s_2} \dots \ket{l}_{s_k}
\\
&\xrightarrow{(3)}\frac{1}{n^{k/2}}\sum_{j_1\dots j_k = 0}^{n-1}\ket{j_1}\dots \ket{j_k} \frac{1}{\sqrt{2^n}}\sum_{l=0}^{2^n - 1}(-1)^{(2^{j_1} + \dots + 2^{j_k}) \cdot l}\ket{l}_{s_1} \ket{l}_{s_2} \dots \ket{l}_{s_k}
\\
&\xrightarrow{(4)} \frac{1}{n^{k/2}}\sum_{j_1\dots j_k = 0}^{n-1}\ket{j_1}\dots \ket{j_k} \frac{1}{\sqrt{2^n}}\sum_{l=0}^{2^n - 1}(-1)^{(2^{j_1} + \dots + 2^{j_k}) \cdot l}\ket{l}_{s_1} \ket{0}_{s_2} \dots \ket{0}_{s_k}
\\ 
&\xrightarrow{(5)} \frac{1}{n^{k/2}}\sum_{j_1\dots j_k = 0}^{n-1}\ket{j_1}\dots \ket{j_k} \ket{e_{j_1} \oplus \dots \oplus e_{j_k}}_{s_1} \ket{0}_{s_2} \dots \ket{0}_{s_k}
\end{align*}
Step (1) brings all index registers in a uniform superposition of size $n$, use Theorem~\ref{thm:uniform_superposition_mod_q} if required, and one system register in a uniform superposition size $2^n$; 
Step (2) uses fan-out gates to create entangled copies of the system register; 
Step (3) applies a phase flip between every pair of index and system register using \textbf{Uncompress} of Lemma~\ref{lem:uncompress}, except instead of applying not gates to the system registers, apply phase gates;
Step (4) uses fan-out gates to disentangle and uncompute all but one of the system registers; 
Step (5) applies Hadamard gates on the system register to obtain the one-hot representation of the index registers. 
Step(3), the step that requires most qubits, requires $\mathO(n \log(n))$ qubits for every pair of index and system register, of which there are $k$, resulting in the requirement of $\mathO(kn \log(n))$ qubits. 
\end{proof}

Figure~\ref{fig:Dicke_2_state} shows these five steps graphically. 
Ancilla qubits are omitted for clarity. 
Note that some of the $j_i$ in the index registers may intersect. 
The next Filtering step takes care of that.
\begin{figure}[ht]
\includegraphics[width=\textwidth]{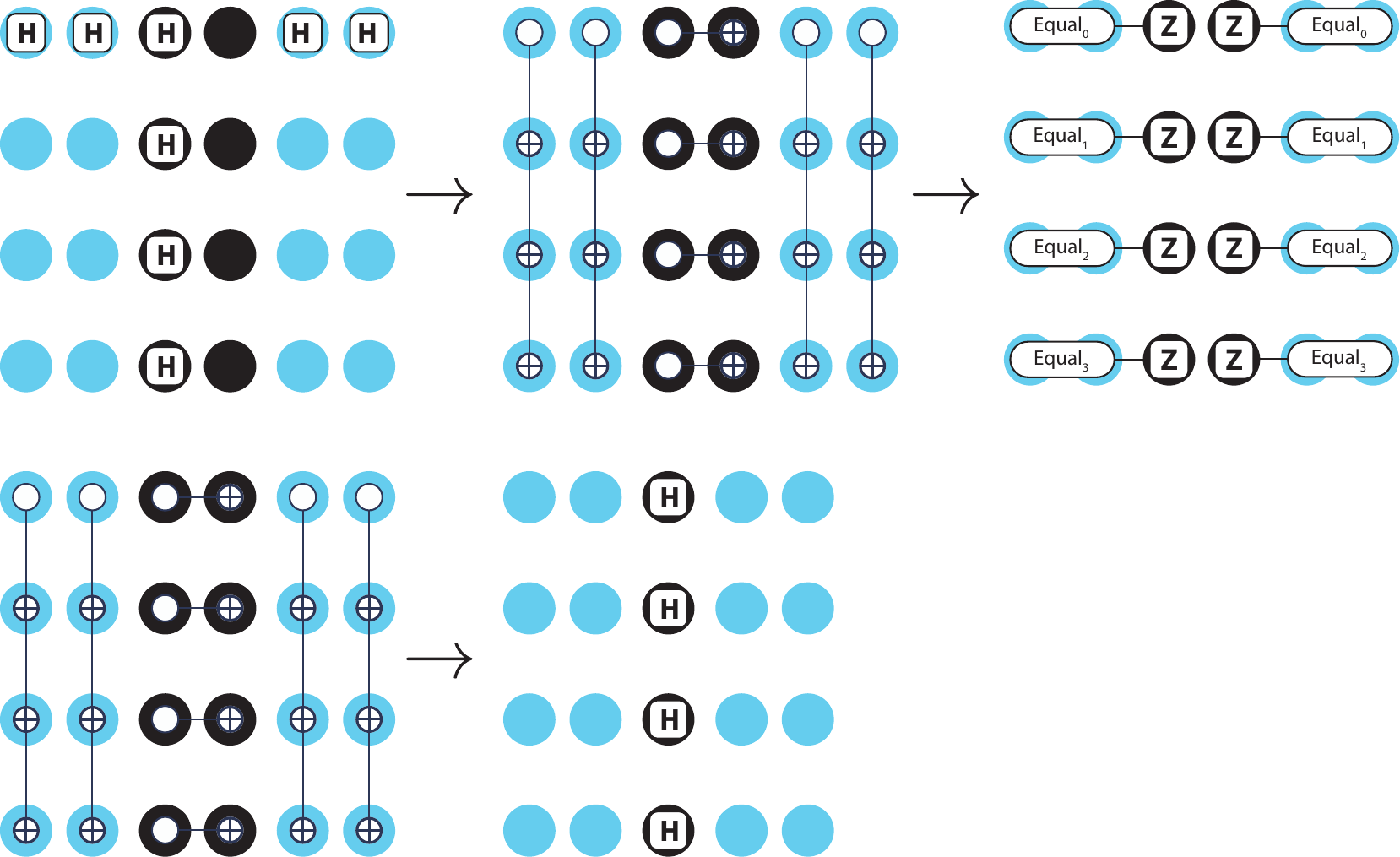}
\caption{Circuit to implement \textbf{Filling}, $\ket{0}_{i_1}\dots\ket{0}_{i_k}\ket{0}_{s_1} \rightarrow \frac{1}{n^{k/2}}\sum_{j_1, \dots, j_k = 0}^{n-1} \ket{j_1}_{i_1} \dots \ket{j_k}_{i_k}\ket{e_{j_1} \oplus \dots \oplus e_{j_k} }_{s_1}$. This circuit requires $\mathO(k n\log(n))$, for $n=4$ and $k = 2$. A grid of $24$ qubits is shown: $16$ blue index qubits and $8$ black system qubits. Each of the five grids represents a single timeslice in the circuit.}
\label{fig:Dicke_2_state}
\end{figure}

\begin{lemma}
\label{lem:Dicke_filtering}
An $\LAQCC$ circuit exists that implements \textbf{Filtering}:
$$\frac{1}{n^{k/2}}\sum_{j_1, \dots, j_k = 0}^{n-1} \ket{j_1}_{i_1} \dots \ket{j_k}_{i_k}\ket{e_{j_1} \oplus \dots \oplus e_{j_k} }_{s_1} 
\rightarrow \sqrt{\frac{(n-k)!}{n!}}\sum_{j_1 \neq \dots \neq j_k}^{n-1} \ket{j_1} \dots \ket{j_k}\ket{e_{j_1} \oplus \dots \oplus e_{j_k}}.$$
\end{lemma}
\begin{proof}
First note that the state produced by the \textbf{Filling} step,
$$\frac{1}{n^{k/2}}\sum_{j_1\dots j_k = 0}^{n-1}\ket{j_1}\dots \ket{j_k} \ket{e_{j_1} \oplus \dots \oplus e_{j_k}}_{s_1} ,$$
contains states in which some of the indices $j_l$ overlap. Let $\ket{\psi} = \sum_{j_1 \neq \dots \neq j_k } \ket{j_1}\dots \ket{j_k} \ket{e_{j_1} \oplus \dots \oplus e_{j_k}}$, be the state in which none of the indices overlap, the desired output state. Then we can write
$$\frac{1}{n^{k/2}}\sum_{j_1\dots j_k = 0}^{n-1}\ket{j_1}\dots \ket{j_k} \ket{e_{j_1} \oplus \dots \oplus e_{j_k}}_{s_1} = \alpha \ket{\psi} + \beta \ket{\psi^{\perp}},$$
with $\ket{\psi^\perp}$ containing the states in which at least two of the indices $j_l$ ovelap. Note that $\braket{\psi}{\psi^\bot} = 0$, so $\alpha$ can be exactly calculated by counting the number of quantum states with distinct $j_i$'s, which gives $|\alpha|^2 = \frac{n!}{(n-k)! n^k}$.
Lemma~\ref{lem:birthday_paradox} gives a lower bound on $|\alpha|^2$: 
$$|\alpha|^2 = \frac{n!}{(n-k)! n^k} > e^{\frac{-2k^2}{n}},$$
which is at least constant for $k = \mathO(\sqrt{n})$. 

The state $\ket{\psi^{\bot}}$ is a superposition of states in which the system register state has Hamming weight less than $k$, because at least two of the $j_i$'s are the same causing a cancellation in the system register. We can use this to create a unitary $U_{flag}$ that flags $\ket{\psi^{\bot}}$. We implement this in two steps:
\begin{align*}
&\frac{1}{n^{k/2}}\sum_{j_1,\dots, j_k = 0}^{n-1}\ket{j_1}\dots \ket{j_k} \ket{e_{j_1} \oplus \dots \oplus e_{j_k}}_{s_1}\ket{0}_{\log(k)}\ket{0} \\
&\xrightarrow[]{(1)} \frac{1}{n^{k/2}}\sum_{j_1,\dots, j_k = 0}^{n-1}\ket{j_1}\dots \ket{j_k} \ket{e_{j_1} \oplus \dots \oplus e_{j_k}}_{s_1}\ket{|e_{j_1} \oplus \dots \oplus e_{j_k}|}\ket{0}\\
&\xrightarrow[]{(2)} \frac{1}{n^{k/2}}\sum_{j_1,\dots, j_k = 0}^{n-1}\ket{j_1}\dots \ket{j_k} \ket{e_{j_1} \oplus \dots \oplus e_{j_k}}_{s_1}\ket{0}\ket{\mathbbm{1}_{|e_{j_1} \oplus \dots \oplus e_{j_k}|=k}} \\
& = \alpha\ket{\psi}\ket{1} + \beta \ket{\psi^{\bot}}\ket{0}
\end{align*}
Where $|x|$ denotes the Hamming weight of bitstring $x$.
Step (1) follows from a Hamming-weight gate (see Table~\ref{tab:QFT_Hammingweight_Threshold}), which requires $\mathO(n \log(n))$ qubits; Step (2) follows from applying an Exact$_k$ gate, requiring $\mathO(\log(n)^2)$ qubits. This same step also uncomputes the Hamming-weight gate of the first step.

Lemma~\ref{lem:grover_constant_fraction} now allows us to amplify $\alpha$ to $1$ using the oracle $U_{flag}$. This produces the state
$$\sqrt{\frac{(n-k)!}{(n)!}}\sum_{j_1 \neq \dots \neq j_k} \ket{j_1} \dots \ket{j_k} \ket{e_{j_1} \oplus \dots \oplus e_{j_k}},$$
using $\mathO(k n\log(n))$ qubits.
\end{proof}

To uncompute the index registers, we have to know which one in the system register corresponds to which index register, as any permutation of the index registers results in the same state in the system register. 
The \textbf{Ordering} step imposes an ordering on the index registers, thereby removing the redundancy in the ordering. 

\begin{lemma}
\label{lem:dicke_ordering}
An $\LAQCC$ circuit exists that implements \textbf{Ordering}:
$$\sqrt{\frac{(n-k)!}{n!}}\sum_{j_1 \neq \dots \neq j_k}^{n-1} \ket{j_1} \dots \ket{j_k}\ket{e_{j_1} \oplus \dots \oplus e_{j_k}} \rightarrow  \frac{1}{\sqrt{\binom{n}{k}}}\sum_{j_1 < \dots < j_k}^{n-1} \ket{j_1} \dots \ket{j_k}\ket{e_{j_1} \oplus \dots \oplus e_{j_k}}.$$
This circuit uses $\mathO(k^2 \log(n)^2)$ qubits.
\end{lemma}
\begin{proof}
The first step of the $\LAQCC$ circuit that implements \textbf{Ordering} is to evaluate a Greaterthan-gate on all pairs of index registers, which requires $k$ copies of each index register. We require $k$ extra qubits per index register to store the outcome of the Greaterthan-gates. 
The copies of the index registers are created by doing a fan-out gate. Note that the distribution of the index registers should be set up in such a way that every possible pair can be compared by a Greaterthan-gate. 
\begin{align*}
    & \sqrt{\frac{(n-k)!}{n!}}\sum_{j_1 \neq \dots \neq j_k} \ket{j_1}^{\otimes k} \ket{0}^{\otimes k} \dots \ket{j_k}^{\otimes k} \ket{0}^{\otimes k} \ket{e_{j_1} \oplus \dots \oplus e_{j_k}} \xrightarrow[]{(1)} \\
    & \sqrt{\frac{(n-k)!}{n!}} \sum_{j_1 \neq \dots \neq j_k} 
    \big[\ket{j_1}^{\otimes k}  \ket{\mathbbm{1}_{j_1 > j_2}} \dots \ket{\mathbbm{1}_{j_1 > j_k}}\big]
    \dots \big[ \ket{j_k}^{\otimes k}  \ket{\mathbbm{1}_{j_k > j_1}} \dots \ket{\mathbbm{1}_{j_k > j_{k-1}}}\big] 
    \ket{e_{j_1} \oplus \dots \oplus e_{j_k}}.
\end{align*}
Each $\mathbbm{1}_{j_k > j_{k'}}$ is an indicator variable that evaluates to one if and only if $j_k > j_{k'}$. This step requires $\mathO(k^2\log(n)^2)$ qubits.

Next, we compute and measure the Hamming weight of the ancilla qubits $\ket{\mathbbm{1}_{j_1 > j_2}} \dots \ket{\mathbbm{1}_{j_1 > j_k}}$, using the Hamming-weight gate. 
We measure the calculated Hamming weights. 
As all index registers were distinct before measuring, these measurements directly impose an ordering on the index registers. 

\begin{align*}
&\xrightarrow[]{(\mathrm{Hamming weight})}\sqrt{\frac{(n-k)!}{n!}}\sum_{j_1 \neq \dots \neq j_k} \big[\ket{j_1} \ket{\mathbbm{1}_{j_1 > j_2} + \dots + \mathbbm{1}_{j_1 > j_k}}\big] \\
& \qquad\qquad\qquad\qquad\quad \dots \big[ \ket{j_k} \ket{\mathbbm{1}_{j_k > j_1} + \dots + \mathbbm{1}_{j_k > j_{k-1}}}\big] \ket{e_{j_1} \oplus \dots \oplus e_{j_k}}\\
&\xrightarrow[]{(measure)} \binom{n}{k}^{-1/2} \sum_{j_1 < \dots < j_k} \big[\ket{j_1} \ket{0}\big] \dots \big[ \ket{j_k} \ket{k}\big] \ket{e_{j_1} \oplus \dots \oplus e_{j_k}}
\end{align*}
This step costs $\mathO(k^2 \log(k))$ qubits.
Assume without loss of generality that the measurement outcomes impose the ordering ${j_1 < \dots < j_k}$. 
Otherwise, a permutation of the index registers achieves the same ordering, using the Permutation gate from Table~\ref{tab:Fanout_Perm}. 

Uncomputing the Hamming weights and the Greaterthan-gates gives the state 
$$\binom{n}{k}^{-1/2} \sum_{j_1 < \dots < j_k} \big[\ket{j_1} \dots \ket{j_k} \big] \ket{e_{j_1} \oplus \dots \oplus e_{j_k}}.$$
\end{proof}

The \textbf{Cleaning} step cleans the index registers for the Dicke state in a similar fashion as in the \textbf{Compress} method in the $W$-state protocol. 
In the cleaning process, we have to take the added ordering of the index registers into account. 
Suppose the $l$-th qubit of the system register is a $1$.
If this is the first $1$ in the system register, it belongs to index register $j_1$, and if it is the $m$-th $1$ it belongs to index register $j_m$. 
Computing the Hamming weight of the first $l-1$ qubits gives precisely this information.
Combined, this shows that if the $l$-th qubit is a $1$ and the Hamming weight of the first $l-1$ qubits equals $m$, then the $l$-th qubit should uncompute the $m+1$-th index register $j_{m+1}$. 

\begin{lemma}
\label{lem:dicke_cleaning}
An $\LAQCC$ circuit exists that implements \textbf{Cleaning}:
$$\frac{1}{\sqrt{\binom{n}{k}}}\sum_{j_1 < \dots < j_k}^{n-1} \ket{j_1} \dots \ket{j_k}\ket{e_{j_1} \oplus \dots \oplus e_{j_k}} \rightarrow \frac{1}{\sqrt{\binom{n}{k}}}\sum_{j_1 < \dots < j_k}^{n-1} \ket{0} \dots \ket{0}\ket{e_{j_1} \oplus \dots \oplus e_{j_k}}.$$
This circuit uses $\mathO(n^2 \log(n))$ qubits.
\end{lemma}
\begin{proof}
The first step, as described above, is to acquire the Hamming weight from all the substrings of the system register. This requires $n$ copies of the system register as well as a $\log(k)$-qubit register to store the Hamming weight value. The copies follow from the fanout-gate.
\begin{align*}
    &\binom{n}{k}^{-1/2} \sum_{j_1 < \dots < j_k} \ket{j_1} \dots \ket{j_k} \ket{e_{j_1} \oplus \dots \oplus e_{j_k}}\ket{0}_n^{\otimes n - 1}\ket{0}_{\log(n)}^{\otimes n}\xrightarrow{(1)} \\ 
    &\binom{n}{k}^{-1/2} \sum_{j_1 < \dots < j_k} \ket{j_1} \dots \ket{j_k} \ket{e_{j_1} \oplus \dots \oplus e_{j_k}}^{\otimes n}\ket{0}_{\log(n)}^{\otimes n} \xrightarrow{(2)}\\
    &\binom{n}{k}^{-1/2} \sum_{j_1 < \dots < j_k} \ket{j_1} \dots \ket{j_k} \ket{e_{j_1} \oplus \dots \oplus e_{j_k}}^{\otimes n}\bigotimes_{l=0}^{n-1}\ket{|(e_{j_1} \oplus \dots \oplus e_{j_k})_{[l,n]}|},
\end{align*}
where $|(e_{j_1} \oplus \dots \oplus e_{j_k})_{[l,n]}|$ denotes the Hamming weight of the substring consisting of qubits $l$ up until $n$ of the system register.
Step (1) copies the system qubits using fan-out gates; 
Step (2) computes the Hamming weight of all the qubits $1$ up until $j-1$ using the Hammingweight-gate shown in Table~\ref{tab:QFT_Hammingweight_Threshold};
Step (3) cleans the copies of the system register by applying fan-out. 
This step is omitted from the equations, but is included in the graphical explanation of the circuit, shown in Figure~\ref{fig:Hammingweight} for $n=4$.
Note that at the end of the calculation, it is convenient to teleport the Hamming weight registers next to the system register. There are now $n$ new registers containing the information of the Hamming weight, we will refer to them as the Hamming weight registers.
This step requires $\mathO(n^2 \log(n))$ qubits. 
\begin{figure}[ht]
\includegraphics[width=\textwidth]{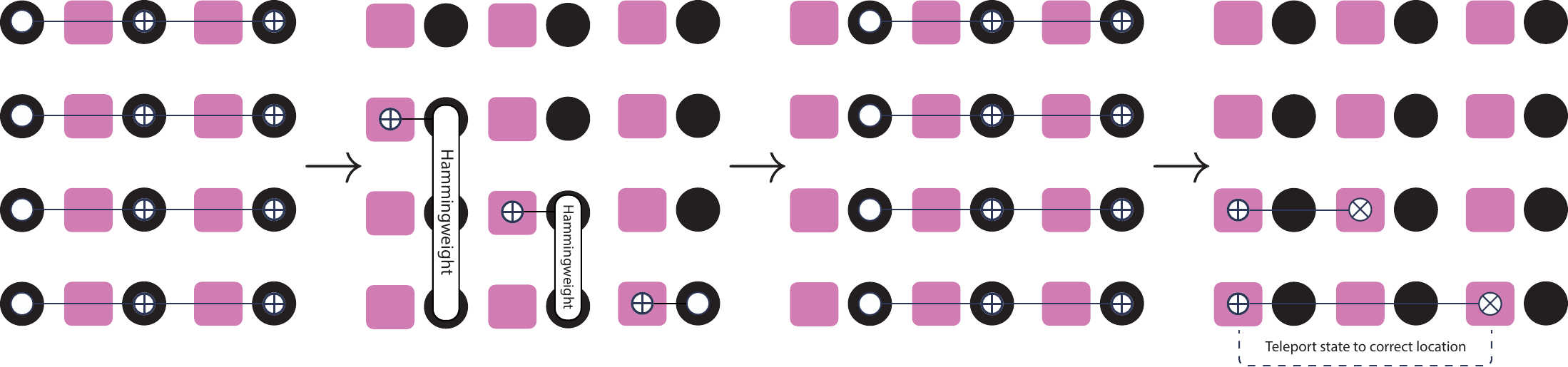}
\caption{Circuit to implement of the Hamming weight calculation of all qubit strings $l$ to $n-1$ in four steps in parallel. 
The black dots represent system qubits, the pink squares represent $\log(k)$ qubit registers that can count the Hamming weight up until $k$.}
\label{fig:Hammingweight}
\end{figure}

The last step that remains is to clean the $k$ index registers. Cleaning the $k$ index registers follows similar steps as the \textbf{Compress} method in the $W$-state protocol, with the added Hamming-weight information taken into account. This step requires $k$ copies of the system registers well as $k$ copies of the Hamming-weight registers. Every index register is paired with one copy of the system register and a copy of the $n$ Hamming-weight registers.
Cleaning the $j$-th register consists of five steps, similar to the \textbf{Compress} method of the $W$-state:
Step (1) applies Hadamard gates to bring the index register to phase space, in which $CNOT$-gates are diagonalized;
Step (2) copies the index register;
Step (3) uses the information in the Hamming-weight and system register to apply the phases to the correct index register qubits;
Step (4) cleans the index register copies;
and, Step (5) applies Hadamard gates to reset the index register qubits to the $\ket{0}$ state

Figure~\ref{fig:compress_dicke} shows the steps taken to clean a single index register $j$. 
The black dots represent the qubits in the system register and the upper row of blue dots represent the qubits in index register $j$.  The pink squares represent the ancilla Hamming weight register, where each square represents a group of $\log(k)$ qubits. This step requires $\mathO(n k \log(k)\log(n))$ qubits. At the end of the \textbf{Cleaning} operation the state is as desired:
$$
 \frac{1}{\sqrt{\binom{n}{k}}}\sum_{j_1 < \dots < j_k}^{n-1} \ket{0} \dots \ket{0}\ket{e_{j_1} \oplus \dots \oplus e_{j_k}}.
$$
The \textbf{Cleaning} step requires $\mathO(n^2 \log(n))$ qubits.
\end{proof}

\begin{figure}[ht]
\includegraphics[width=\textwidth]{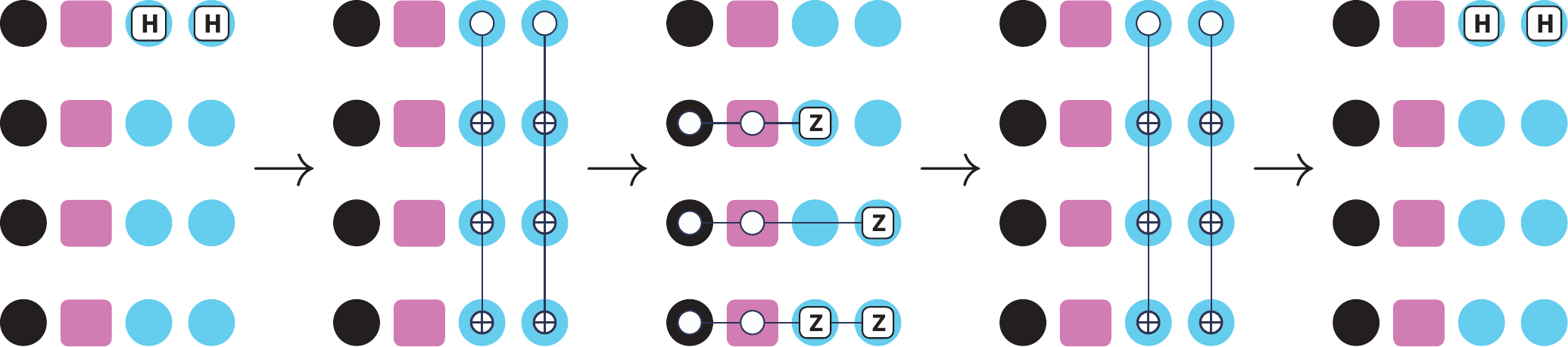}
\caption{Circuit to clean index register $j$. 
The black dots represent qubits in the system register and the blue dots the index register and its copies.
The pink squares represent the ancilla Hamming weight register and its copies. 
Each pink square represents a group of $\log(k)$ qubits.
Each of the five grids represents a single timeslice in the circuit.}
\label{fig:compress_dicke}
\end{figure}

\begin{theorem}
\label{thm:dicke_cnst_depth}
For any $n$ and $k = \mathO(\sqrt{n})$ there exists an $\LAQCC$ circuit preparing the Dicke-$(n,k)$ state, $\ket{D^n_k}$, using $\mathO(n^2\log(n))$ qubits.
\end{theorem}

\begin{proof}
The circuit combines the circuits resulting from Lemmas~\ref{lem:Dicke_filling}, \ref{lem:Dicke_filtering}, \ref{lem:dicke_ordering} and \ref{lem:dicke_cleaning}.
It consists of four steps:
\begin{align*}
\ket{0}_{i_1}\dots\ket{0}_{i_k}\ket{0}_{s_1} &\xrightarrow{(1)} \frac{1}{n^{k/2}}\sum_{j_1, \dots, j_k = 0}^{n-1} \ket{j_1}_{i_1} \dots \ket{j_k}_{i_k}\ket{e_{j_1} \oplus \dots \oplus e_{j_k} }_{s_1}\\
&\xrightarrow{(2)} \sqrt{\frac{(n-k)!}{n!}}\sum_{j_1 \neq \dots \neq j_k}^{n-1} \ket{j_1} \dots \ket{j_k}\ket{e_{j_1} \oplus \dots \oplus e_{j_k}}\\
&\xrightarrow{(3)} \frac{1}{\sqrt{\binom{n}{k}}}\sum_{j_1 < \dots < j_k}^{n-1} \ket{j_1} \dots \ket{j_k}\ket{e_{j_1} \oplus \dots \oplus e_{j_k}}\\
&\xrightarrow{(4)} \frac{1}{\sqrt{\binom{n}{k}}}\sum_{j_1 < \dots < j_k}^{n-1} \ket{0} \dots \ket{0}\ket{e_{j_1} \oplus \dots \oplus e_{j_k}}
\end{align*}
Step (1) implements \textbf{Filling} using Lemma~\ref{lem:Dicke_filling} requiring $\mathO(k n \log(n))$ qubits; 
Step (2) implements \textbf{Filtering} using Lemma~\ref{lem:Dicke_filtering} requiring $\mathO(k n \log(n))$ qubits;
Step (3) implements \textbf{Ordering} using Lemma~\ref{lem:dicke_ordering} requiring $\mathO(k^2\log(n)^2)$ qubits; 
Step (4) implements \textbf{Cleaning} using Lemma~\ref{lem:dicke_cleaning} requiring $\mathO(n^2\log(n))$ qubits. After every step ancilla qubits are cleaned, so that they can be reused. As $k=\mathO(\sqrt{n})$ the largest amount of qubits required for a step is Step (4) requiring $\mathO(n^2\log(n))$ qubits.
\end{proof}
\citeauthor{bartschi2022deterministic_short_depth} posed a conjecture on the optimal depth of quantum circuits that prepare the Dicke-$(n,k)$ state. They give an algorithm for generating Dicke-$(n,k)$ states in depth $\mathO(k \log(\frac{n}{k}))$, given all-to-all connectivity, and conjecture that this scaling is optimal when $k$ is constant. Our result shows that there is a $\LAQCC$ implementation in this regime, when one has access to intermediate measurements and feed forward. This does not disprove their conjecture. However the circuits shown here are also accessible in $\QNC^1$ by Lemma~\ref{lem:LAQCC_QNC1}, giving ``pure" quantum circuits with depth $\mathO(\log(n))$ for $k = \mathO(\sqrt{n})$ and achieving better scaling when $k = \omega(1)$. 

\subsection{Dicke states for all \texorpdfstring{$k$}{k} using log-depth quantum circuits}
\label{sec:Dicke_in_LAQCC_LOG}
The previous section gave a constant-depth protocol to prepare the Dicke-$(n,k)$ state for $k=\mathO(\sqrt{n})$. 
We developed a different method for creating Dicke-$(n,k)$ states which requires logarithmic (in $n$) quantum depth to prepare Dicke-$(n,k)$, but works for arbitrary $k$. We first define what we mean with logarithmic quantum depth:
\begin{notation}
We let $\LAQCC\text{-}\mathsf{LOG}$ refer to the instance $\LAQCC(\QNC^0,\NC^1, \mathO(\log(n)))$, similar to $\LAQCC$ except that we allow for a logarithmic number of alterations between quantum and classical calculations. This results in a circuit of logarithmic quantum depth.
\end{notation}
In this section we show a $\LAQCC\text{-}\mathsf{LOG}$ circuit that creates the Dicke-$(n,k)$ state.

One way of studying the creation of Dicke states is by looking at efficient algorithms that convert numbers from one representation to another. 
An example of this is the \textbf{Uncompress}-\textbf{Compress} method in the $W$-state protocol, that converts numbers from a binary representation to a one-hot representation. 
Dicke states are a generalization of the $W$-state, hence the one-hot representation no longer suffices for preparing the state. 
Instead, we use a construction based on number conversion between the combinatorial representation and the factoradic representation.
Below we introduce both representations and present quantum circuits that map between the two. Theorem~\ref{thm:Dicke:Log_depth} proves that a $\LAQCC\text{-}\mathsf{LOG}$ circuit can prepare the Dicke-$(n,k)$ state for any $k$. 

\subsubsection{Combinatorial number system}
An interesting result showed that any integer $m\ge 0$ can be written as a sum of $k$ binomial coefficients~\cite{Beckenbach:1964}. 
For fixed $k$, this is even unique as the next lemma shows.
\begin{lemma}[\cite{Beckenbach:1964}]
\label{lem:comb_numbers}
For all integers $m\geq 0$ and $k \geq 1$, there exists a unique decreasing sequence of integers $c_k, c_{k-1},\dots, c_1$ with $c_j > c_{j-1}$ and $c_1 \geq 0$ such that 
$$m = \binom{c_k}{k}  + \binom{c_{k-1}}{k-1} \dots \binom{c_1}{1} = \sum_{i=1}^k \binom{c_i}{i}.$$
\end{lemma}

This lemma allows for the definition of the combinatorial number representation:
\begin{definition}
Let $k \in \mathbb{N}$ be a constant. Any integer $m \in \mathbb{N}$ can be represent by a unique string of numbers $(c_k, c_{k-1} \dots, c_1)$, such that $c_k > c_{k-1} \dots >  c_1 \geq 0$ and $c_k \leq m$. 
This string is given by the unique decreasing sequence of Lemma~\ref{lem:comb_numbers}. 
We call this string the \emph{index representation} denoted by $m^{indx(k)}$.

The bit string of $k$ ones at indices $(c_k,\dots, c_1)$ is the $m$-th bit string with $k$ ones in the lexicographical order. 
This bit string is called the \emph{combinatatorial representation}.
We denote the $m$-th bit string with $k$ ones as $m^{comb(k)}$.
\end{definition}

The $W$-state protocol used the conversion between the binary representation of a number $m$ and its combinatorial representation $m^{comb(1)}$.
A generalized number conversion is precisely the protocol needed to prepare Dicke states. 

A sketch of the protocol would be as follows: 
given positive integers $k$ and $n$:
Create a superposition state 
$${\binom{n}{k}}^{-\frac{1}{2}}\sum_{i=0}^{\binom{n}{k} - 1}\ket{i}\ket{0};$$
Use number conversion to go from label $i$ to $i^{comb(k)}$
$${\binom{n}{k}}^{-\frac{1}{2}}\sum_{i=0}^{\binom{n}{k} - 1}\ket{i}\ket{i^{comb(k)}};$$
Use number conversion from $i^{comb(k)}$ to $i$ to clean up the label register
$${\binom{n}{k}}^{-\frac{1}{2}}\sum_{i=0}^{\binom{n}{k} - 1}\ket{0}\ket{i^{comb(k)}} = \ket{D^n_k}.$$

The conversion map from the combinatorial representation to the binary representation is given by Lemma~\ref{lem:comb_numbers}. 
This calculation requires iterative multiplication and addition, both of which are in $\TC^0$, hence this calculation is in $\TC^0$. 

The converse mapping, from binary to combinatorial representation for given $k$, can be achieved by a greedy iterative algorithm:
On input $m$, find the biggest $c_k$ such that $m \geq \binom{c_k}{k}$ and subtract this from $m$: $\tilde{m} = m - \binom{c_k}{k}$. 
This gives $c_k$ and a residual $\tilde{m}$. 
Repeat this process for $\tilde{m}$: 
Find the largest $c_{j}$ such that $\tilde{m}\geq \binom{c_j}{j}$ and update residual $\tilde{m} = \tilde{m} - \binom{c_j}{j}$, until all $c_j$ are found. 

This greedy algorithm is inherently linear in $k$ as it requires all previously found $\{c_i\}_{i=j}^k$ to find $c_{j-1}$. 
Hence, it is not immediately obvious if and how to achieve this mapping in constant or even logarithmic depth. 

\subsubsection{Mapping between factoradic representation and combinatorial number system}
A number representation closely related to the combinatorial number representation is the \textit{factoradic representation}. 
This number system uses factorials instead of binomials to represent numbers. 

\begin{definition} 
\label{def:factoradics}
A sequence $y = (y_{n-1}, y_{n-2}, \dots, y_0)$ of integers, such that $j \geq y_j \geq 0$ is called a \textit{factoradic}, or more explicitly an \textit{$n$-factoradic}. 
The elements of an $n$-factoradic is called an \textit{$n$-digit}.
An $n$-factoradic $y$ can represent a number $m$ between $0$ and $n!-1$, in the following way
\begin{align}
\label{eqn:fact_to_int}
m = \sum_j^{n} y_j \cdot j!.
\end{align}
For a given $m \in \{0, \ldots, n!-1\}$, we call the $n$-factoradic $y$ obeying the equality above, the \emph{factoradic representation of $m$}.  
Denote $\text{Fact}(n)$ as the set of all $n$-factoradics.
\end{definition}

The following lemma shows that Equation~\ref{eqn:fact_to_int} is a bijection, showing that the factoradic representation is unique.

\begin{lemma}
\label{factoradic_summation}
For $k \geq 0$ it holds that:
$$\sum_{i=0}^k i \cdot i!= (k + 1)! - 1.$$ 
\end{lemma}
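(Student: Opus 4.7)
The plan is to use the telescoping identity $i\cdot i! = (i+1)! - i!$, which reduces the sum to a telescoping form and immediately gives the result. First I would verify this elementary identity by writing $(i+1)! = (i+1)\cdot i! = i\cdot i! + i!$, so that $i\cdot i! = (i+1)! - i!$. Substituting into the sum yields
\[
\sum_{i=0}^{k} i\cdot i! \;=\; \sum_{i=0}^{k}\bigl((i+1)! - i!\bigr),
\]
which telescopes to $(k+1)! - 0! = (k+1)! - 1$, as desired.

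If a telescoping argument is not preferred, an equivalent route is induction on $k$. The base case $k=0$ gives $0\cdot 0! = 0 = 1! - 1$. For the inductive step, assuming $\sum_{i=0}^{k} i\cdot i! = (k+1)! - 1$, one computes
\[
\sum_{i=0}^{k+1} i\cdot i! \;=\; (k+1)! - 1 + (k+1)\cdot(k+1)! \;=\; (k+2)\cdot(k+1)! - 1 \;=\; (k+2)! - 1.
\]

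There is no real obstacle here: the only mild subtlety is recognizing the identity $i\cdot i! = (i+1)! - i!$ that makes the sum telescope, but this is the standard trick that also justifies why the factoradic representation of Definition~\ref{def:factoradics} can uniquely represent every integer in $\{0, \dots, n!-1\}$ — indeed, the lemma is precisely the statement that the maximal $n$-factoradic $(n-1, n-2, \ldots, 1, 0)$ represents the number $n! - 1$, which will presumably be used in the subsequent proof that the factoradic representation is a bijection.
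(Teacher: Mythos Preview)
Your proposal is correct. The induction argument you give as the second route is exactly the paper's proof (same base case and same inductive computation), while your telescoping argument via $i\cdot i! = (i+1)! - i!$ is a slightly more direct alternative that the paper does not use; both are equally valid and the telescoping version arguably makes the mechanism behind the identity more transparent.
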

\begin{proof}
Proof by induction.\\
\textbf{BASE STEP}: Let $k$ be $0$:
$$0\cdot 0! = 1! - 1$$
\textbf{INDUCTION STEP}: Assume the lemma holds for some $j$, then
$$\sum_{i=0}^{j+1} i \cdot i! = (j+1) \cdot (j+1)! + \sum_{i = 0}^{j} i \cdot i = (j+1) \cdot (j+1)! + (j+1)! - 1 = (j+2)! - 1,$$
which completes the proof.
\end{proof}
This identity allows for using factorials as a base for a number system.
The next lemma gives a log-space algorithm to convert a factoradic representation to its combinatorial representation. 
\begin{lemma}
\label{lem:fac_to_comb}
There is a logspace algorithm $\mathcal{A}$ that, given $k \in \{0, \ldots, n\}$, and a uniformly random $n$-factoradic, outputs a uniformly random $n$-bit string of Hamming weight $k$.
\end{lemma}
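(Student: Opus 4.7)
The plan is to exhibit a one-pass greedy algorithm and then verify uniformity by a direct counting of pre-images. Define $\mathcal{A}(k,y)$ as follows: initialize a counter $k' := k$; for $j = n-1, n-2, \ldots, 0$, emit the bit $s_j := \mathbbm{1}_{y_j < k'}$ and update $k' \leftarrow k' - s_j$. Since $y_j \in \{0,1,\ldots,j\}$ and one checks inductively that $k' \leq j+1$ is maintained throughout the sweep, the emitted string $s$ always has Hamming weight exactly $k$.

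To show that uniform $y$ maps to uniform $s$ over weight-$k$ strings, I would fix a target $s$ with $|s| = k$ and count $|\mathcal{A}(k,\cdot)^{-1}(s)|$. By downward induction on $j$, the counter $k'$ at step $j$ equals $k'_j := |\{i \leq j : s_i = 1\}|$, a quantity that depends only on $s$ and $j$. Consequently the digit $y_j$ admits $k'_j$ valid values if $s_j = 1$ (namely $y_j \in \{0,\ldots,k'_j - 1\}$), and $(j+1) - k'_j$ valid values if $s_j = 0$. Enumerating the one-positions from high to low as $p_k > \cdots > p_1$, one has $k'_{p_l} = l$, so the product of admissible values across one-positions telescopes to $k!$. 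Enumerating the zero-positions similarly as $q_{n-k} > \cdots > q_1$, the set $\{q_1,\ldots,q_m\}$ consists of exactly the zero positions in $\{0,\ldots,q_m\}$, so $(q_m+1) - k'_{q_m} = m$, and the product over zero-positions is $(n-k)!$. Multiplying yields $|\mathcal{A}(k,\cdot)^{-1}(s)| = k!(n-k)! = |\text{Fact}(n)|/\binom{n}{k}$, so the pre-image partition of $\text{Fact}(n)$ has equal blocks and uniform $y$ gives uniform $s$.

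Logspace follows immediately: throughout the sweep, the algorithm only holds the loop index $j$ in $O(\log n)$ bits, the counter $k'$ in $O(\log k)$ bits, and a constant-size window for comparing the current digit $y_j$ against $k'$; the digits of $y$ are read once in descending order and the bits of $s$ are each written once to the write-only output tape. The only genuinely nontrivial step is the counting argument: recognizing that if one sweeps from the most significant digit downward while decrementing a running ``ones left to place'' counter, then a factoradic digit uniform on $\{0,\ldots,j\}$ becomes precisely the biased coin needed to decide whether position $j$ receives the next remaining one. Once this viewpoint is in place, both the bookkeeping and the identity $|\mathcal{A}(k,\cdot)^{-1}(s)| = k!(n-k)!$ follow cleanly, and the logspace bound is essentially a free corollary.
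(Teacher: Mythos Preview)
Your proof is correct and is essentially the same as the paper's: your decrementing counter $k'$ is exactly $k - H_{>j}$ in the paper's notation, so the emission rule $s_j = \mathbbm{1}_{y_j < k'}$ coincides with the paper's rule, and your pre-image count $k!(n-k)!$ is obtained by the same per-position enumeration of admissible digits. The only cosmetic difference is that you track ``ones remaining'' while the paper tracks ``ones already placed'', and you make the invariant $k' \le j+1$ (ensuring Hamming weight exactly $k$) explicit, which the paper leaves implicit.
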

\begin{proof}
The algorithm $\mathcal{A}$ is given $k$ and an $n$-factoradic $y = (y_{n-1}, \dots, y_0)$. It will then output an $n$-bit string $y^{comb(k)} = y^{comb(k)}_{n-1} \dots y^{comb(k)}_0 \in \{0,1\}^n$ of Hamming weight $k$, one bit at a time, from left to right, according to the following rule. Let $H_{>n-j} = \sum_{i=n-j+1}^{n-1} y^{comb(k)}_i$ be the Hamming weight of the bits produced before we reach bit $n-j$. Then $y^{comb(k)}_{n-j}$ is given by:
\begin{align}
\label{eqn:fac_to_comb}
    (\mathcal{A} (y))_{n-j} = y^{comb(k)}_{n-j} = \begin{cases} 1 & \text{if } y_{n-j} < k - H_{>n-j} \\
    0 & \text{otherwise}
    \end{cases}.
\end{align}
This conversion requires comparing an $n$-digit with a constant and the Hamming weight of a bitstring.
The only information that $\mathcal A$ needs to remember, as it goes from bit $n-j+1$ to bit $n-j$ , is the Hamming weight $H_{>n-j}$ of the bits it produced so far, and this can be stored in logarithmic space. 

Now note that the number of factoradic $n$-digit strings that map to the same combinatorial bit string is always $k!(n-k)!$:
Let $y^{comb(k)} \in \{0,1\}^n$ have Hamming weight $k$. 
For any bit position $y^{comb(k)}_{n-j}$, there are $n - j + 1 - (k - H_{>n-j})$ possible choices for the $n$-digit $y_{n-j} \in \{0, \ldots, n-j\}$ that set $y^{comb(k)}_{n-j} = 0$. 
For the leftmost index $n-j$ such that $y^{comb(k)}_{n-j} = 0$, it holds that $H_{>n-j} = j-1$, and then there are $n-k$ possible $n$-digits $y_{n-j}$ that set $y^{comb(k)}_{n-j} = 0$. 
Then, for the second index $n-j$ such that $y^{comb(k)}_{n-j} = 0$ it holds that $H_{>n-j} = j - 2$, hence there are $n - k - 1$ possible $n$-digits $y_{n-j}$ causing $y^{comb(k)}_{n-j} = 0$. And so forth. 
This results in $(n-k)!$ different possible choices for the $(n-k)$-many $n$-digits where $y^{comb(k)}=0$. 

Similarly, for the leftmost position $n-j$ where $y^{comb(k)}_{n-j} = 1$, there are $k$ possible choices for the $n$-digit $y_{n-j}$ that cause $y^{comb(k)}_{n-j} = 1$. 
The second leftmost position $n-j$ gives $k-1$ possible choices, and so forth, for a total of $k!$ possible settings of the $k$-many $n$-digits where $y^{comb(k)}=0$.

Combined, we conclude that, for every $n$-bit string $y^{comb(k)} \in \{0,1\}^n$ of Hamming weight $k$, there are exactly (the same number of) $k!(n-k)!$ $n$-factoradics $y$ such that $\mathcal A(y) = y^{comb(k)}$.
Hence, a uniformly random $n$-factoradic is mapped by $\mathcal A$ to a uniformly random $n$-bit string of Hamming weight $k$, as claimed.
\end{proof}

This lemma gives a logspace algorithm to convert a uniformly random $n$-factoradic to a uniformly random $n$-bit string of Hamming weight $k$, for any $k$.
It is well known that logspace is contained in $\TC^1$, allowing this calculation to be performed in parallel log-depth when one has access to threshold gates~\cite{Johnson:1990}. As we saw in Section~\ref{sec:gates_created_in_LAQCC}, we can compute a threshold gate in $\LAQCC$. Hence, an $\LAQCC\text{-}\mathsf{LOG}$ can perform any $\TC^1$ calculation. We conclude:

\begin{corollary}\label{cor:fac_to_comb}
The following map can be implemented in $\LAQCC\text{-}\mathsf{LOG}$.
\[
\frac{1}{\sqrt{n!}} \sum_{y \in \text{Fact}(n)} \ket{y}\ket{0} \xrightarrow{} \frac{1}{\sqrt{n!}} \sum_{y \in \text{Fact}(n)} \ket{y}\ket{\mathcal A (y)}.
\]
\end{corollary}

\noindent In the next lemma, we show that a $\TC^0$ circuit can implement the inverse of $\mathcal A$. 
\begin{lemma}
\label{lem:comb_to_fac}
There exists a $\TC^0$ algorithm which, when given an $n$-bit string $y^{comb(k)}$ of Hamming weight $k$, a uniformly-random $k$-factoradic, and a uniformly-random $(n-k)$-factoradic, outputs a uniformly random $n$-factoradic $y$ among those such that $\mathcal A(y) = y^{comb(k)}$.
\end{lemma}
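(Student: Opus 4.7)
My plan is to directly invert the map $\mathcal{A}$ of Lemma~\ref{lem:fac_to_comb} by reading off, for each position $n-j$, whether that bit of $y^{comb(k)}$ is a $1$ or a $0$ and then pulling the corresponding $n$-digit $y_{n-j}$ out of either the supplied $k$-factoradic $z = (z_{k-1}, \ldots, z_0)$ or the $(n-k)$-factoradic $w = (w_{n-k-1}, \ldots, w_0)$, with an appropriate additive shift. The key observation is the ``budget'' from the forward rule: if $y^{comb(k)}_{n-j} = 1$, then $y_{n-j}$ must lie in $\{0, \ldots, k - H_{>n-j} - 1\}$, giving $k - H_{>n-j}$ valid digits, while if $y^{comb(k)}_{n-j} = 0$, then $y_{n-j}$ lies in $\{k - H_{>n-j}, \ldots, n-j\}$, giving $n - j - k + H_{>n-j} + 1$ valid digits. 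These sizes match exactly the sizes of the digits $z_{k-s}$ and $w_{n-k-t}$ when $n-j$ is the $s$-th $1$-position or $t$-th $0$-position read from the left, respectively.

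First I would compute, in $\TC^0$, the prefix Hamming weight $H_{>n-j}$ of $y^{comb(k)}$ to the left of each position (a standard $\TC^0$ primitive via threshold gates on prefixes). From this and the bit $y^{comb(k)}_{n-j}$ I derive, in parallel at each position, the rank $s = H_{>n-j} + 1$ if the bit is $1$ and the rank $t = j - H_{>n-j}$ if it is $0$. Then I set
\[
y_{n-j} \;=\; \begin{cases} z_{k-s} & \text{if } y^{comb(k)}_{n-j} = 1, \\ (k - H_{>n-j}) + w_{n-k-t} & \text{if } y^{comb(k)}_{n-j} = 0. \end{cases}
\]
The operations needed (prefix threshold, comparison, multiplexed digit selection, and a single integer addition) are all uniform $\TC^0$.

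Next I would verify the two properties required. For validity, I check that $0 \leq y_{n-j} \leq n-j$: in the $1$-case, $z_{k-s} \leq k - s \leq n - j$ since at most $k - s$ further $1$'s can appear in the remaining $n - j$ positions, forcing $j \leq n - k + s$; in the $0$-case, the shift $k - H_{>n-j}$ together with $w_{n-k-t} \leq n - k - t$ gives $y_{n-j} \leq n - j$ exactly because $t + H_{>n-j} = j$, and nonnegativity follows from the fact that the $t$-th zero lies at position $j \leq k + t$. Applying the forward rule of Lemma~\ref{lem:fac_to_comb} then recovers $y^{comb(k)}$ bit-by-bit, so $\mathcal{A}(y) = y^{comb(k)}$.

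For uniformity I would argue the map $(z, w) \mapsto y$ is a bijection onto $\mathcal{A}^{-1}(y^{comb(k)})$. Injectivity is immediate since distinct $(z, w)$ produce distinct digits at the respective positions. Surjectivity follows because any preimage $y$ assigns one of $k - s + 1$ values at the $s$-th $1$-position and one of $n - k - t + 1$ values at the $t$-th $0$-position, and inverting the shift recovers a unique $(z, w)$. Since the domain has size $k!(n-k)!$ and Lemma~\ref{lem:fac_to_comb} already established the preimage has the same size, the map is a bijection, so a uniformly random $(z, w)$ produces a uniformly random $y \in \mathcal{A}^{-1}(y^{comb(k)})$. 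The only mildly delicate point is bookkeeping the indices $s$ and $t$ relative to the factoradic digit positions $k-s$ and $n-k-t$ so that the available digit ranges line up with the required budgets; once this correspondence is fixed, everything is purely combinatorial and local enough to remain in $\TC^0$.
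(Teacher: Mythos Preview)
Your proposal is correct and follows essentially the same construction as the paper: compute the prefix Hamming weights $H_{>n-j}$ in $\TC^0$, then at each $1$-position pull a digit from the $k$-factoradic and at each $0$-position pull a shifted digit from the $(n-k)$-factoradic, arguing that this yields a bijection onto $\mathcal A^{-1}(y^{comb(k)})$. Your index bookkeeping (using the ranks $s$ and $t$ to select $z_{k-s}$ and $w_{n-k-t}$) is in fact cleaner than the paper's, and you supply the validity checks and bijection argument in more detail, but the underlying idea is identical.
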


\begin{proof}
The conversion can be done in parallel, generating an $n$-digit for every bit in $y^{comb(k)} = y_{n-1} \dots y_0\in\{0,1\}^n$. Recall that we are given as input a uniformly-random $k$-factoradic $O_{k-1}, \dots, O_0$ and a uniformly-random $(n-k)$-factoradic $Z_{n-k-1}, \dots, Z_0$.

For every bit position $n-j$, for $1 \le j \le n$, calculate the Hamming weight of the bits from $n-j+1$ to $n-1$: $H_{>n-j} = \sum_{i=j+1}^{n-1} y^{comb(k)}_i$. Recall that iterated addition is in $\TC^0$~\cite{vollmer1999introduction}.

If $y^{comb(k)}_{n-j} = 1$, set $y_{n-j} = O_{k - H_{> n-j}}$. This gives a uniform random $n$-digit between $0$ and $k - H_{> n-j} - 1$. If $y^{comb(k)}_{n-j} = 0$, set $y_{n-j} = k - H_{> n-j} + Z_{n-k-H_{>n-j}}$. 
Note that this gives a uniform random $n$-digit between $k - H_{>n-j}$ and $n - j$. 
By construction, it now follows that $\mathcal A(y) = y^{comb(k)}$. 
Computing each $n$-digit in this way requires summation and indexing, both of which are in $\AC^0 \subseteq \TC^0$~\cite{vollmer1999introduction}.
\end{proof}

\begin{remark}\label{rem:comb_to_fac}
The above algorithm establishes a bijection $(y^{comb(k)}, Z, O) \leftrightarrow y$  between triples $(y^{comb(k)}, Z, O)$ with $y^{comb(k)} \in \{0,1\}^n$ of Hamming weight $k$, $Z \in \text{Fact}(n-k)$ and $O\in\text{Fact}(k)$ and $n$-factoradics $y \in \text{Fact}(n)$. Let $(\mathcal A(y), \mathcal Z(y), \mathcal O(y))$ be the image of an $n$-factoradic $y$ under this bijection. The previous lemma shows that one can compute $y$ from $(y^{comb(k)}, Z, O)$ in $\TC^0$. 
It is not hard to see that the map $(\mathcal A(y), y) \mapsto (\mathcal A(y), y, \mathcal Z(y), \mathcal O(y))$ is also in $\TC^0$. Indeed, to find $\mathcal Z(y)$ and $\mathcal O(y)$, we need only invert the two defining equalities $y_{n-j} = O_{k - H_{> n-j}}$ and $y_{n-j} = k - H_{> n-j} + Z_{n-k-H_{>n-j}}$.
\end{remark}

\begin{corollary}\label{cor:comb_to_fac}
The following map can be implemented in $\LAQCC$.
\[
\begin{pmatrix}n \\ k\end{pmatrix}^{-\frac{1}{2}}\sum_{y^{comb(k)}} \ket{0} \ket{y^{comb(k)}} \xrightarrow{} \frac{1}{\sqrt{n!}} \sum_{y \in \text{Fact}(n)} \ket{y}\ket{\mathcal A (y)}
\]
where $y^{comb(k)}$ ranges over all $n$-bit strings of Hamming weight $k$.
\end{corollary}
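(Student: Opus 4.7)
The plan is to decompose the map into three $\LAQCC$-implementable steps, organized around the bijection in Remark~\ref{rem:comb_to_fac}. First I would, on two fresh ancilla registers, prepare a uniform superposition over $O \in \text{Fact}(k)$ and $Z \in \text{Fact}(n-k)$. A factoradic is a tuple of independent digits, each in $\{0,\dots,j\}$ for some $j$, so each such superposition is a tensor product of uniform superpositions over fixed ranges, preparable in parallel using Theorem~\ref{thm:uniform_superposition_mod_q}. Tensoring this into the initial state yields
\[
\binom{n}{k}^{-1/2}\frac{1}{\sqrt{k!(n-k)!}}\sum_{y^{comb(k)},Z,O}\ket{0}\ket{y^{comb(k)}}\ket{Z}\ket{O} \;=\; \frac{1}{\sqrt{n!}}\sum_{y^{comb(k)},Z,O}\ket{0}\ket{y^{comb(k)}}\ket{Z}\ket{O},
\]
where the normalization simplifies because $\binom{n}{k}\cdot k!(n-k)! = n!$.

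Next I would apply the $\TC^0$ circuit from Lemma~\ref{lem:comb_to_fac} to compute $y$ from $(y^{comb(k)},Z,O)$ reversibly into the first register (currently $\ket{0}$). Since threshold gates are available in $\LAQCC$ (Section~\ref{sec:gates_created_in_LAQCC}), every $\TC^0$ computation has a constant-depth $\LAQCC$ implementation. Reindexing the resulting sum via the bijection of Remark~\ref{rem:comb_to_fac}, where $(y^{comb(k)},Z,O) = (\mathcal{A}(y),\mathcal{Z}(y),\mathcal{O}(y))$, produces
\[
\frac{1}{\sqrt{n!}}\sum_{y\in\text{Fact}(n)}\ket{y}\ket{\mathcal{A}(y)}\ket{\mathcal{Z}(y)}\ket{\mathcal{O}(y)}.
\]

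The last step is to uncompute the $Z$ and $O$ registers. By Remark~\ref{rem:comb_to_fac}, the map $(\mathcal{A}(y),y)\mapsto(\mathcal{Z}(y),\mathcal{O}(y))$ is also in $\TC^0$, so I compute $\mathcal{Z}(y)$ and $\mathcal{O}(y)$ into fresh ancillas from $y$ and $\mathcal{A}(y)$, XOR them into the existing $\ket{\mathcal{Z}(y)}\ket{\mathcal{O}(y)}$ registers to zero them out, and then reversibly erase the ancillas. This leaves the desired state $\tfrac{1}{\sqrt{n!}}\sum_{y}\ket{y}\ket{\mathcal{A}(y)}$.

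The main obstacle is ensuring that every piece of $\TC^0$ arithmetic used here — the forward map of Lemma~\ref{lem:comb_to_fac} and the inverse map of Remark~\ref{rem:comb_to_fac} — is executed coherently on the superposition, with all scratch ancillas perfectly uncomputed so no garbage entangles with $y$ or $\mathcal{A}(y)$. The bijection of Remark~\ref{rem:comb_to_fac} is what makes such clean uncomputation possible in the first place, but the careful bookkeeping of which register is initialized, computed into, and later erased is the one place where I would need to be attentive.
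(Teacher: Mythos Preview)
Your proposal is correct and follows essentially the same three-step decomposition as the paper: prepare the factoradic superpositions via Theorem~\ref{thm:uniform_superposition_mod_q}, apply the $\TC^0$ map of Lemma~\ref{lem:comb_to_fac} to compute $y$, then uncompute the $Z,O$ registers using Remark~\ref{rem:comb_to_fac}. Your discussion of the uncomputation is in fact more explicit than the paper's, which simply invokes Remark~\ref{rem:comb_to_fac} and notes that the relevant $\LAQCC$ operations used are reversible.
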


\begin{proof}
The transformation consists of three steps: 
\begin{align*}
& \binom{n}{k}^{-\frac{1}{2}}\sum_{y^{comb(k)}} \ket{y^{comb(k)}} \ket{0}\ket{0}\ket{0}\\
\xrightarrow{(1)} \;\; &\binom{n}{k}^{-\frac{1}{2}}\sum_{y^{comb(k)}} \ket{y^{comb(k)}} \left(\bigotimes_{j = 0}^{n-k-1} \sum_{i = 0}^{j} \ket{i}\right)\left(\bigotimes_{j = 0}^{k-1} \sum_{i = 0}^{j} \ket{i}\right)\ket{0}\\
= \;\; & \frac{1}{\sqrt{n!}} \sum_{y^{comb(k)}} \ket{y^{comb(k)}} \left(\sum_{Z\in\text{Fact}(n-k)} \ket{Z}\right)\left(\sum_{O\in\text{Fact}(k)} \ket{O}\right)\ket{0}\\
\xrightarrow{(2)}\;\; & \frac{1}{\sqrt{n!}} \sum_{y\in\text{Fact}(n)} \ket{\mathcal A(y)} \ket{\hat Z(y)}\ket{\hat O(y)}\ket{y}\\
\xrightarrow{(3)}\;\; & \frac{1}{\sqrt{n!}} \sum_{y\in\text{Fact}(n)} \ket{\mathcal A(y)} \ket{0}\ket{0}\ket{y}
\end{align*}
Step (1) prepares a uniform superposition over all $n$-factoradics using Theorem~\ref{thm:uniform_superposition_mod_q}. 
Step (2)  is Lemma~\ref{lem:comb_to_fac}, and Step (3) follows from Remark~\ref{rem:comb_to_fac}.
In the above steps we implicitly used that the inverse of the used $\LAQCC$ operations are also $\LAQCC$ operations. 
Even though it is unclear if this inverse-property holds in general, it does hold for the considered $\LAQCC$ operations. 
The measurement steps, which might not be reversible, in this algorithm are used to implement fan-out gates.
The inverse of a fan-out gate is the fan-out gate itself and hence is contained in $\LAQCC$.
\end{proof}

\begin{theorem}
\label{thm:Dicke:Log_depth}
There exists a $\LAQCC\text{-}\mathsf{LOG}$-circuit for preparing Dicke-$(n,k)$ states, for any positive integers $n$ and $k \le n$, it uses $\mathO(\text{poly}(n))$ qubits. 
\end{theorem}
\begin{proof}
The circuit combines the circuits resulting from Lemma~\ref{lem:fac_to_comb} and Lemma~\ref{lem:comb_to_fac}. 

It consists of three steps: 
\begin{align*}
\ket{0}^{\otimes n \log(n)}\ket{0}^{\otimes n} &\xrightarrow{(1)} \frac{1}{\sqrt{n!}}\left(\bigotimes_{j = 0}^{n-1} \sum_{i = 0}^{j} \ket{i}\right)\ket{0}^{\otimes n} = \sum_{y \in \text{Fact}(n)} \ket{y}\ket{0} \\
&\xrightarrow{(2)} \frac{1}{\sqrt{n!}} \sum_{y \in \text{Fact}(n)} \ket{y}\ket{\mathcal A (y)}\\
&\xrightarrow{(3)} \begin{pmatrix}n \\ k\end{pmatrix}^{-\frac{1}{2}}\sum_{y \in \text{Fact}(n)} \ket{0}\ket{\mathcal A (y)} = \ket{D^n_k}.
\end{align*}
Step (1) prepares a uniform superposition over all $n$-factoradics using Theorem~\ref{thm:uniform_superposition_mod_q};
Step (2) is by Corollary \ref{cor:fac_to_comb};
and, Step (3) reverses the algorithm of Corollary \ref{cor:comb_to_fac}.
\end{proof}

\subsection{Quantum many-body scar states}
\label{sec:many_body_scar}
There is a particular set of states in many-body physics called, many-body scar states, which are highly excited states that exhibit atypically low entanglement~\cite{turner2018weak}. These states exhibit long coherence times relative to other states at the same energy density and seem to avoid thermalization and thereby they do not follow the eigenstate thermalization hypothesis.
This makes studying the lifetime of quantum many body scar states under perturbations particularly interesting.
Studying this lifetime is quite challenging, as even though scarred eigenstates often have modest entanglement and therefore have efficient matrix product state representations, perturbations typically couple them to states nearby in energy which typically have volume-law scaling entanglement, making classical simulations difficult.

A overview paper by \citeauthor{Gustafson:2023} studied methods of preparing quantum many-body scar states on quantum computers, with the goal to simulate time dynamics directly on the quantum system~\cite{Gustafson:2023}. 
They found several approaches for generating quantum many-body scars for a particular model, which require polynomial depth. They look at quantum many-body scar states of the $n$-qubit spin-1/2 Hamiltonian of~\cite{iadecola2020quantum}:
\[
    H = \lambda \sum_{i = 2}^{n - 1} (X_i - Z_{i - 1} X_i Z_{i +1}) + \Delta \sum_{i = 1}^n Z_i + J \sum_{i = 1}^{n-1} Z_i Z_{i + 1}
\]

The quantum many-body scar states of $n$-qubits are given by:
\[
\ket{S_k} = \frac{1}{k! \sqrt{\mathcal N(n,k)}}(Q^\dagger)^k \ket{\Omega},
\]
where $\mathcal N(n,k) = \binom{n - k - 1}{k}$, $\ket{\Omega} = \ket{0}^{\otimes n}$ and $k = 0, \dots, n/2 - 1$. The raising operator $Q^\dagger$ is given by:
\[
 Q^\dagger = \sum_{i = 2}^{n-1} (-1)^i P_{i-1} \sigma^+_{i} P_{i +1}, 
\]
with $P_j = \ket{0}\bra{0}$ and $\sigma^+_j = X_j + Y_j$.
They show that up to local $Z$ gates these states are very closely related to Dicke states:
\[
    \prod_{\text{i odd}} Z_i \ket{S_k} = \ket{0} \otimes P_{fib} \ket{D^n_k}\otimes \ket{0},
\]
where $P_{fib}$ is known as the Fibonacci constraint, which is a projector that removes all states where there are two ones next to each other:
\[
P_{fib} = I - \sum_{i = 1}^{n-1} \ket{11}\bra{11}_{i,i+1}.
\]
The goal of this section is to show that these states, for $k = \mathO(\sqrt{n})$ are accessible in $\LAQCC$. First note that by Theorem~\ref{thm:dicke_cnst_depth} there exists a $\LAQCC$ protocol to generate $\ket{D_k^n}$ up to $k = \mathO(\sqrt{n})$.
We will show that there is a $\LAQCC$ protocol that applies $P_{fib}$ to these $\ket{D^n_k}$ states. The first step will be to show that there exists a unitary accessible in $\LAQCC$ that flags the correct state.

\begin{lemma}
\label{lem:Ufib}
There exists a unitary $U_{fib}$, accessible in $\LAQCC$, that flags all the states that obey the Fibonacci constraint, more precisely:
\[
    U_{fib} \ket{D_k^n}\ket{0} = \alpha P_{fib}\ket{D_k^n}\ket{0} + \beta (I - P_{fib})\ket{D_k^n}\ket{1}
\]
\end{lemma}
\begin{proof}
Add $n$ extra qubits prepared in $\ket{0}$, one for every sequential pair of qubits. 
For all $i\in\{1,\hdots,n-1\}$, apply a Toffoli gate with control qubits $i$ and $i + 1$ and target qubits the $i$-th auxillary qubit. 
The second step is to apply the $OR_{n}$ gate on the $n$ auxillary qubits.
The $n$-th auxillary qubit is also used as the output qubit. 
Clean the extra qubits by again applying Toffoli gates. These steps are accessible in $\LAQCC$ therefore implements the flag unitary with an $\LAQCC$ protocol.
\end{proof}

The second step is to show that $\alpha$ is bounded by a constant in the case that $k = \mathO(\sqrt{n})$, this is shown using the following two lemma's.

\begin{lemma}

The total number of bitstrings of length $n$ with $k$ ones, such that no two ones are adjacent is given by:
\[
    \binom{n - k}{k} + \binom{n - k - 1}{k - 1}
\]
\end{lemma}

\begin{proof}
We can count the number of possible bitstrings, after first noticing that every $1$ must be followed by a $0$, unless the last bit is a $1$. 
As a result, we have two situations, in the first, we can consider all possible rearrangements $n-k$ elements, consisting of $k$ pairs `$10$' and $n-2k$ ones. 
This gives $\binom{n-k}{k}$ possible bitstrings. 
In the second situation, the last bit is $1$. 
This leave $k-1$ pairs `$10$' in a total of $n-k-1$ elements. 
With the same reasoning, this gives $\binom{n-k-1}{k-1}$ possible bitstrings. 
Summing the two situations proves the lemma. 
\end{proof}

We now consider the relative fraction of this type of bitstrings among all possible bitstrings with Hamming-weight $k$. 
\begin{lemma}
\label{lem:fraction_good_strings}
Let $k = c\sqrt{n}$ for some constant $c>0$. 
Then the following inequality holds
\[
\frac{\binom{n - k}{k} + \binom{n - k - 1}{k - 1}}{\binom{n}{k}} \geq \exp\big(-c^2\big)
\]
\end{lemma}

\begin{proof}
We have
\[
\frac{\binom{n - k}{k} + \binom{n - k - 1}{k - 1}}{\binom{n}{k}} > \frac{\binom{n - k}{k}}{\binom{n}{k}} = \frac{(n-k)! / k!(n-2k)!)}{n!/ k!(n-k!)} = \frac{(n-k)!^2}{n!(n-2k)!}.
\]
Expanding the factorials and only consider the terms that do not cancel, we obtain
\[
\frac{(n-k)!}{n!} \frac{(n-k)!}{(n-2k)!} = \frac{(n-k)(n- k - 1)\dots (n- (2k - 1))}{n (n-1)\dots (n - (k-1))}.
\]
Both the numerator and denominator have $K$ terms, which we can pair. 
next we note that $\tfrac{a}{b} > \tfrac{a-1}{b-1}$ whenever $b>a$ (and $b\not\in\{0,1\}$). 
Using this idea, we obtain the following expression:
\[
\frac{n-k}{n}\frac{n-k-1}{n-1}\dots \frac{n-(2k-1)}{n-(k-1)} > (\frac{n-k}{n})^k = (1 - \frac{k}{n})^k.
\]
Now using that $k = c \sqrt{n}$, we have
\[
\left(1 - \frac{c \sqrt{n}}{n}\right)^{c\sqrt{n}}  = \left(\left(1-\frac{c}{\sqrt{n}}\right)^{\frac{\sqrt{n}}{c}}\right)^{c^2} > \exp\big(-c^2\big),
\]
which is a constant.
\end{proof}

This allows us construct the state $\ket{S_k}$ using the steps for the Dicke-state preparation together with Lemma~\ref{lem:grover_constant_fraction}. 
Note that Lemma~\ref{lem:grover_constant_fraction} requires us to implement both $U$ and $U^{\dagger}$, where $U$ implements the initial superposition.
The \textbf{Ordering}-step (Lemma~\ref{lem:dicke_ordering}) does however use measurements, which stops us makes implementing the inverse of the circuit hard. 
Still, we can work around this, by applying Lemma~\ref{lem:grover_constant_fraction} between the \textbf{Filtering} and \textbf{Ordering} step:
\begin{theorem}
For any $n$ and $k = \mathO(\sqrt{n})$ there exists a $\LAQCC$ circuit preparing the many-body scar state $\ket{S_k}$, using $\mathO(n^2\log(n))$ qubits.
\end{theorem}
\begin{proof}
We follow the same steps as for the Dicke-state preparation (see Theorem~\ref{thm:dicke_cnst_depth}). 
After the second \textbf{Filtering} step however, we apply use the unitary $U_{fib}$ together with Lemma~\ref{lem:grover_constant_fraction} to filter out all states with subsequent ones in the state. 
Note that by Lemma~\ref{lem:fraction_good_strings}, the number of states with no consecutive ones is a constant fraction of the total number of strings of Hamming-weight~$k$. 

Furthermore, the state after the Filtering step,
\begin{equation*}
    \sqrt{\frac{(n-k)!}{(n)!}}\sum_{j_1 \neq \dots \neq j_k} \ket{j_1} \dots \ket{j_k} \ket{e_{j_1} \oplus \dots \oplus e_{j_k}},
\end{equation*}
is still entangled with the index registers. In effect there are many copies of the Dicke-$(n,k)$ state in on the system register, each with a diferent ordering of the index registers, however this does not affect the fraction of states with no consecutive ones compared to the states with consecutive ones. 
Next, the \textbf{Ordering} and \textbf{Cleaning} step of the protocol work similarly on the resulting state and will give the state $\ket{S_k}$.
\end{proof}

\section*{Acknowledgements}
We want to thank Jonas Helsen, Joris Kattem{\"o}lle, Ido Niesen, Kareljan Schoutens, Florian Speelman, Dyon van Vreumingen and Jordi Weggemans for insight full discussions. 
%Furthermore, we would like to thank Chris Cade for suggesting to look at the oracle from~\cite{AaronsonKuperberg:2007} for separating $\LAQCC^*$ and $\mathsf{Post}$-$\BQP$ and Georgios Styliaris for first mentioning how to parallelize Clifford ladder circuits. 
Furthermore, we would like to thank Georgios Styliaris for first mentioning how to parallelize Clifford ladder circuits. 
HB and MF were supported by the Dutch Ministry of Economic Affairs and Climate Policy (EZK), as part of the Quantum Delta NL programme. BL was funded by the European Union (ERC, HOFGA, 101041696). Views and opinions
expressed are however those of the author(s) only and do not necessarily reflect those of the European Union or the European Research Council. Neither the European Union nor the granting authority can be held responsible for them. Also supported by FCT through the LASIGE Research Unit, ref. UIDB/00408/2020 and ref. UIDP/00408/2020. NN was supported by the quantum application project of TNO. This work was supported by the Dutch Research Council (NWO/OCW), as part of the Quantum Software Consortium programme (project number 024.003.037).

\printbibliography

\appendix

\section{Useful lemmas}
This section gives two lemmas.
The first lemma upper bounds the computational power of $\LAQCC$.
The second lemma helps in preparing Dicke states for $k\in\mathO(\sqrt{n})$. 
\begin{lemma}
\label{lem:NC1toQauntum}
Let $\Pi = (\Pi_{yes}, \Pi_{no})$ be a decision problem in $\NC^1$. Then there is a uniform log-depth quantum circuit that decides on $\Pi$.
\end{lemma}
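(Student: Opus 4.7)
The plan is to translate the given uniform $\NC^1$ circuit family for $\Pi$ gate-by-gate into a uniform quantum circuit family of comparable depth. Since every gate in an $\NC^1$ circuit has bounded fan-in and each of the standard Boolean operations (AND, OR, NOT) can be implemented reversibly by a constant-depth quantum subcircuit acting on one or two fresh ancilla qubits (Toffoli for AND, X-conjugated Toffoli combined with De Morgan's law for OR, and $X$ for NOT), this substitution turns an $\mathO(\log n)$-depth classical circuit into an $\mathO(\log n)$-depth quantum circuit of polynomial size.

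More concretely, I would first fix a uniform family $\{C_n\}$ of $\NC^1$ circuits deciding $\Pi$, with $\text{depth}(C_n) = c \log n$ and $\text{size}(C_n) = \text{poly}(n)$. I would then build the quantum circuit $Q_n$ by processing $C_n$ layer by layer: for each classical gate $g$ at a given layer, I allocate a fresh ancilla qubit initialised to $\ket{0}$ and apply a constant-depth reversible gadget that writes the output of $g$ into this ancilla, reading from the qubits that hold the outputs of the previous layer (which are themselves produced by the previous layer's gadgets). Because the fan-in is bounded, each gadget uses $\mathO(1)$ qubits and has $\mathO(1)$ depth, so the total depth of $Q_n$ is $\mathO(\log n)$ and the total number of qubits is $\mathO(\text{size}(C_n)) = \text{poly}(n)$. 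A single designated ancilla then carries the result of the top gate of $C_n$, which is the decision bit for $\Pi$.

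Uniformity transfers for free: a logspace (or $\mathsf{DLOGTIME}$) Turing machine that outputs $C_n$ can be extended to output $Q_n$, since the gadget replacement is a local, gate-by-gate rewriting with a fixed scheme for naming the newly introduced ancillas. Measuring the designated output qubit in the computational basis then yields the correct answer with probability $1$, so $Q_n$ decides $\Pi$.

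I do not expect any real obstacle here, since the proof is essentially Bennett-style reversibilisation applied in parallel to a bounded-fan-in circuit; the only thing to be careful about is bookkeeping the ancilla register so that different gate-gadgets at the same layer act on disjoint qubits, which is immediate because each gate is assigned its own dedicated output ancilla. Note that I do not need to uncompute the intermediate ancillas, since the lemma only asks for a decision procedure and does not require producing a clean output state.
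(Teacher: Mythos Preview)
Your approach is essentially the same as the paper's: a gate-by-gate replacement of AND/OR/NOT by Toffoli/De~Morgan/$X$ gadgets writing into fresh ancillas. There is, however, one subtlety you skip that the paper handles explicitly: fan-out. An $\NC^1$ circuit is a DAG, and a single gate's output may feed many gates in the next layer; since all the corresponding Toffolis then share a control qubit, they cannot be applied in parallel, and making copies via a CNOT tree costs $\mathO(\log n)$ depth per layer, yielding $\mathO(\log^2 n)$ overall rather than $\mathO(\log n)$.

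The paper sidesteps this by first unfolding the $\NC^1$ circuit into a Boolean \emph{tree} (formula) of depth $\mathO(\log n)$ and size $2^{\mathO(\log n)} = \text{poly}(n)$, so that every internal wire has fan-out one. Once the circuit is a tree, your layer-by-layer replacement goes through verbatim and gives genuine $\mathO(\log n)$ quantum depth. With that single conversion step added, your proof is complete and matches the paper's.
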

\begin{proof}
Let $B$ be the uniform Boolean circuit of logarithmic depth deciding on $\Pi$. 
As $\Pi\in\NC^1$, such a circuit exists. 

For fixed input size $n$, write $B$ as a Boolean tree of depth $\mathO(\log(n))$, with at its leaves the $n$ input bits $x_i$ and as root an output bit. 
This Boolean tree directly translates in a classical circuit using layers of AND, OR and NOT gates.

Each of these gates has a direct quantum equivalent gate, provided that we use ancilla qubits: 
First replace all OR gates by AND and NOT gates. 
Then replace all AND gates by Tofolli gates, which has three inputs.
The third input is a clean ancilla qubit and will store the AND of the other two inputs. 
Finally, replace all NOT gates by $X$-gates. 
\end{proof}
\begin{lemma}
\label{lem:birthday_paradox}
Let $n, k \in \mathbb{N}$ and $k < \frac{n}{2}$ then:
\begin{align*}
    \frac{n!}{n^k (n-k)!} > e^{-\frac{2 k^2}{n}}
\end{align*}
\end{lemma}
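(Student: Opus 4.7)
The plan is to rewrite the falling factorial as a product and apply the elementary inequality $1 - x \ge e^{-2x}$, which holds for all $x \in [0, 1/2]$. Concretely, I would first write
\[
\frac{n!}{n^k (n-k)!} \;=\; \prod_{i=0}^{k-1} \frac{n-i}{n} \;=\; \prod_{i=0}^{k-1} \left(1 - \frac{i}{n}\right).
\]
Since $k < n/2$, every index $i$ in the product satisfies $i/n < 1/2$, so the inequality $1 - x \ge e^{-2x}$ applies to each factor.

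Applying it termwise yields
\[
\prod_{i=0}^{k-1}\left(1 - \frac{i}{n}\right) \;\ge\; \prod_{i=0}^{k-1} e^{-2i/n} \;=\; \exp\!\left(-\frac{2}{n}\sum_{i=0}^{k-1} i\right) \;=\; \exp\!\left(-\frac{k(k-1)}{n}\right).
\]
Since $k(k-1) < k^2 < 2k^2$, the right-hand side is strictly greater than $e^{-2k^2/n}$, which gives the claim.

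The only step that deserves a brief justification is the inequality $1 - x \ge e^{-2x}$ on $[0, 1/2]$, and this is routine: the function $f(x) = 1 - x - e^{-2x}$ satisfies $f(0) = 0$, $f(1/2) = 1/2 - e^{-1} > 0$, and $f'(x) = -1 + 2e^{-2x}$ vanishes only at $x = \tfrac{1}{2}\ln 2 \in (0, 1/2)$, so $f$ is nonnegative on $[0, 1/2]$. There is no real obstacle; this is a standard birthday-paradox estimate, and the only care needed is ensuring the hypothesis $k < n/2$ is used to legitimize the pointwise bound.
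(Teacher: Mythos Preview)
Your proof is correct and follows essentially the same strategy as the paper: rewrite the ratio as $\prod_{i}(1-i/n)$, bound each factor below by an exponential, and sum the exponents. The only difference is the elementary inequality used---you apply $1-x \ge e^{-2x}$ on $[0,1/2]$ directly, whereas the paper uses $\log(1+x) \ge \tfrac{x}{1+x}$ to get $\log(1-\tfrac{i}{n}) \ge -\tfrac{i}{n-i} \ge -\tfrac{i}{n-k}$ and then invokes $k < n/2$ at the end; your route is slightly more direct.
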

\begin{proof}
The result follows by a simple computation
\begin{align*}
    \frac{n!}{n^k (n-k)!} &= e^{\sum_{i = 1}^k \log(1 - \frac{i}{n})}\\
    & > e^{\sum_{i = 1}^k \frac{-i}{n-i}} \\
    & > e^{\sum_{i = 1}^k \frac{-i}{n-k}} \\
    & > e^{-\frac{k^2}{n-k}} \\
    & > e^{-\frac{2 k^2}{n}},
\end{align*}
where we use that $\log(1 + x)\geq \frac{x}{1+x}$ for $x > - 1$.

\end{proof}
\section{Gate implementations}

\subsection{OR-gate}\label{gate:OR_implementation}
In this section we discuss the implementation of the OR$_n$-gate and why we can implement it using local gates in a nearest-neighbor architecture.
We show how the gate works for any basis state $\ket{x}=\ket{x_1}\otimes \hdots\otimes\ket{x_n}$. 
By linearity, the gate then works for arbitrary superpositions. 
The OR-gate by \citeauthor{TakahashiTani_CollapseOfHierarchyConstantDepthExactQuantumCircuits_2013} consists of two steps: First they apply the OR-reduction introduced in Ref.~\cite{HoyerSpalek:2005}, which prepares a state on $\log n$ qubits such that the OR evaluated on these $\log n$ qubits yields the same result as the OR evaluated on the original $n$ qubits.
Second, they  evaluate an exponential circuit on these $\log n$ qubits to calculate the OR-gate. This results in a polynomial-sized circuit for the OR-gate. 

Let $m=\ceil{\log_2(n+1)}$, then the OR-reduction implements the map
\begin{equation*}
    \ket{x}\ket{0}^{\otimes m}\to \ket{x}\bigotimes_{j=1}^m\ket{\mu^{|x|}_{\phi_j}},
\end{equation*}
where $\varphi_j=\frac{2\pi}{2^j}$ and $\ket{\mu_{\varphi}^{|x|}} = H R_Z(\varphi\cdot |x|) H\ket{0}$. 
The OR-reduced state thus depends on the weighted Hamming weight of $x$,
more precisely, every $\ket{\mu^{|x|}_{\phi_j}}$ depends on the entire string $x$. For every $\ket{\mu^{|x|}_{\phi_j}}$ this requires a copy of $\ket{x}$ which can be created by using the fanout gate. The circuit applying the rotation $R_Z(\varphi\cdot |x|)$ consists of sequential $R_Z$ gates, which are diagonal and hence can be implemented in parallel  by Lemma~\ref{lem:unitar_parallelization}. This results in a circuit of width $\mathO(n \log(n))$ for creating the OR-reduced state.

%For every input qubit $\ket{x_i}$, we can prepare the states $\ket{\mu_{\varphi_j}^{|x|}}$ for every $j$ using two rows of qubits. 
%First apply a fanout-gate in the second row to prepare a GHZ state there. Next, in parallel, apply the $R_Z$ gates controlled by the input qubits $\ket{x_i}$.
%And finally, apply a fanout-gate to disentangle the second row of qubits. 
%We can prepare the state $\ket{\mu_{\varphi_j}^{|x|}}$ for every $j$ in parallel by first copying the two rows of qubits $m$ times using fanout-gates.
%This first step has a local nearest-neighbor implementation for the OR-reduced state. 

We have for every bitstring $x\in\{0,1\}^n$ that 
\begin{equation*}
    \text{OR}_n(x) = \frac{1}{2^{n-1}}\sum_{a\in\{0,1\}^n\setminus \{0^n\}} \text{PA}_n^a(x),
\end{equation*}
where PA$_n^a(x) = \oplus_{j=0}^{n-1} a_i x_i$ is the parity of $x$, weighted by the non-zero binary vector~$a$.
Hence, computing the OR of the input is now reduced to computing all parities of the subsets of the inputs. 
The parity gate is equivalent to a fanout-gate conjugated by Hadamard gates on every qubit, see also Table~\ref{tab:Fanout_Perm}.

We now copy the $m$ qubits of the OR-reduced state $2^m = n$ times, using fanout gates. 
For each copy we require two additional auxiliary qubits. 
The first will hold the result of the parity computation, for which we already have a nearest-neighbor implementation. 
We will entangle the second auxiliary qubit using a fanout-gate to obtain a GHZ-state in these qubits. 

We now compute in parallel the parity of the subsets of the inputs. 
For every subset we use a single copy of the inputs and we store the result in the corresponding auxiliary qubit. 
We then apply a controlled-$R_Z$ gate from the first auxiliary qubit to the second auxiliary qubit in the GHZ state. 
This prepared the state (omitting other registers)
\begin{equation*}
    \frac{1}{\sqrt{2}}(\ket{0}^{n-1} + (-1)^{\frac{1}{2^{n-1}}\sum_{a\in\{0,1\}^m\setminus \{0^m\}} \text{PA}_m^a(x)}\ket{1}^{\otimes n-1}) = \frac{1}{\sqrt{2}}(\ket{0}^{n-1} + (-1)^{\text{OR}_{n}(x)}\ket{1}^{\otimes n-1}).
\end{equation*}
Uncomputing this final state using a fanout-gate gives a single qubit that holds the desired answer in its phase. 
A single Hadamard gate applied to this qubit will then give the answer in a single qubit. 

Combining all steps thus gives an implementation for the OR gate using a geometrically local nearest-neighbor circuit. 
For more details on the implementation as well as a proof of correctness, we refer to the original proof~\cite{TakahashiTani_CollapseOfHierarchyConstantDepthExactQuantumCircuits_2013}.

\subsection{Equality-gate}
\label{gate:equality}
Define the Equality gate on two $n$-qubit computational basis states as
$$\mathrm{Equality}: \ket{x}\ket{y}\ket{0}\mapsto\ket{x}\ket{y}\ket{\mathbbm{1}_{x=y}}.$$
This gate is implemented in three steps: 
(1) subtract the first register from the second using a subtraction circuit; 
(2) apply $\mathrm{Equal}_0$ on the second register and store the result in the third register; 
(3) add the first register to the second, undoing the subtraction computation:
\begin{align*}
    \ket{x}\ket{y}\ket{0}&\xrightarrow[(1)]{} \ket{x}\ket{y - x}\ket{0}\\
    &\xrightarrow[(2)]{} \ket{x}\ket{y - x}\ket{\mathbbm{1}_{x=y}}\\
    &\xrightarrow[(3)]{}\ket{x}\ket{y}\ket{\mathbbm{1}_{x=y}}
\end{align*}
Addition and subtraction both have width $\mathO(n^2)$, which, as a result, the $\mathrm{Equality}$-gate also has.

\subsection{Greaterthan-gate}
\label{gate:greaterthan}
Define the Greaterthan gate on two $n$-qubit computational basis states as
$$\mathrm{Greaterthan}: \ket{x}\ket{y}\ket{0}\mapsto\ket{x}\ket{y}\ket{\mathbbm{1}_{x>y}}.$$
This gate is implemented in four step: 
(1) Add an extra clean qubit to the second register and interpret this as an $n+1$-qubit register with most significant bit zero; 
(2) subtract the first register from the second. The subtraction is modulo $2^{n+1}$; 
(3) apply a CNOT-gate from most significant bit of the second register to the third register; 
(4) add the first register to the second, undoing the subtraction computation:
\begin{align*}
    \ket{x}\ket{y}\ket{0}&\xrightarrow[(1)]{} \ket{x}\ket{0y}\ket{0}\\
    &\xrightarrow[(2)]{} \ket{x}\ket{y- x \bmod 2^{n+1}}\ket{0}\\
    &\xrightarrow[(2)]{} \ket{x}\ket{y - x \bmod 2^{n+1}}\ket{\mathbbm{1}_{x>y}}\\
    &\xrightarrow[(3)]{}\ket{x}\ket{0}\ket{y}\ket{\mathbbm{1}_{x>y}}
\end{align*}
This construction works, as after step (2), the most significant bit of the second register is one, precisely if $x$ is larger than $y$. 

Addition and subtraction both have width $\mathO(n^2)$, which, as a result, the $\mathrm{Greaterthan}$-gate also has.

\subsection{Exact\texorpdfstring{$_t$}{}-gate}
\label{gate:exact_t}
Define the Exact$_t$ gate on an $n$-qubit computational basis state as
$$\mathrm{Exact}_t: \ket{x}\ket{0}\mapsto\ket{x}\ket{\mathbbm{1}_{|x|=t}}.$$
Here, $|x|$ denotes the Hamming weight of the $n$-bit string $x$. 

This gate follows by combining the Hammingweight-gate and the Equality-gate: First, compute the Hamming weight of $x$ in an ancilla register and then apply the Equality gate to check that this ancilla register equals $t$. 

Another approach is to modify the circuit for $OR$ slightly. 
In the $OR$-reduction step, add a gate $R_Z(-\varphi t)$, which adjusts the angle to be zero precisely if $|x|=t$ (see Theorem 4.6 of~\cite{HoyerSpalek:2005}).
Then apply the circuit for $OR$ and negate the output. 
The circuit for $OR$ evaluates to zero, precisely if the input had Hamming weight $t$.

\subsection{Threshold\texorpdfstring{$_t$}{}-gate}
\label{gate:threshold_t}
Define the Threshold$_t$ gate on an $n$-qubit computational basis state as
$$\mathrm{Threshold}_t: \ket{x}\ket{0}\mapsto\ket{x}\ket{\mathbbm{1}_{|x|\ge t}}.$$

Taking the $OR$ over the outputs of $\mathrm{Exact}_j$-gates for all $j\ge t$, gives the $\mathrm{Threshold}_t$-gate. 
An improved implementation with better scaling in $t$ is given in Theorem~2 of~\cite{TakahashiTani_CollapseOfHierarchyConstantDepthExactQuantumCircuits_2013}.

A weighted threshold gate uses weights $w_i$ and evaluates to one precisely if $\sum_i w_i x_i \ge t$. 
Assume without loss of generality that $\sum_i w_i x_i$ evaluates to an integer. 
Otherwise, we can use the same ideas, but up to some precision. 

Use the same $OR$-reduction as for the normal threshold gate. 
Instead of rotations $R_Z(\varphi)$ controlled by $x_i$, we use rotations $R_Z(w_i \varphi)$ controlled by $x_i$. 
This implements the weighted threshold gate.

\end{document}